\documentclass[reprint,amsmath,amssymb,aps,prx,floatfix]{revtex4-2}

\usepackage{amsthm}
\usepackage{graphicx}
\usepackage{xcolor}
\usepackage[bottom]{footmisc}
\usepackage{booktabs}
\usepackage{url}
\usepackage{pgfplots}
\usepgfplotslibrary{groupplots}
\pgfplotsset{compat=1.18}

\newtheorem{definition}{Definition}
\newtheorem{theorem}{Theorem}
\newtheorem{lemma}{Lemma}
\newtheorem{proposition}{Proposition}
\newtheorem{remark}{Remark}
\newtheorem{corollary}{Corollary}
\begin{document}

\title{Multiscale Schmidt-Spectrum Bounds for High-Dimensional Entanglement: Geometric Measures and Schmidt-Number Witnesses}

\author{Liang Xiong}
\email{Contact author: xiongliang199@163.com}

\author{Zhixiang Jin}
\email{jzxjinzhixiang@126.com}

\author{Yanling Wang}
\email{ylwang@dgut.edu.com}

 \author{Wei Chen}
 \email{2016025@dgut.edu.cn}

\author{Hong Tao}
 \email{taohong@dgut.edu.cn}

\affiliation{School of Computer Science and Technology, Dongguan University of
Technology, Dongguan, China}

\author{Nung-sing Sze}
\thanks{Corresponding author: raymond.sze@polyu.edu.hk}
\affiliation{Department of Applied Mathematics, The Hong Kong Polytechnic
University, Hung Hom, Hong Kong, China}

\date{\today}

\begin{abstract}
High-dimensional bipartite entanglement depends on how probability is
distributed across the Schmidt spectrum, whereas a single reference-state
fidelity resolves only one spectral scale. We develop multiscale
Schmidt-spectrum bounds that connect geometric measures with quantitative
Schmidt-number witnesses. For any partition of a pure-state Schmidt spectrum,
several nested Vidal tails determine block masses. We derive the sharp upper
boundary of the associated normalized nuclear-norm coordinate and show that
equality holds if and only if the spectrum is uniform within each block.
Refining the partition gives a monotone hierarchy of tighter bounds whenever
the added tail data distinguish unequal block means. We also solve a relaxed
weighted multiscale optimization globally: one scalar parameter specifies its
unique full-support optimizer and explicit value on the nontrivial branch.
Using the established single-tail fidelity--resource curve as a baseline, we
obtain exact-fidelity equality refinements, convex-roof lower bounds for mixed
states, and quantitative calibrations of Schmidt-number witnesses. A
higher-tail relation further bounds convex-roof extended negativity in terms of
Vidal tails and identifies the pure-state equality spectra. Leakage-aware and
joint-confidence formulations state how these bounds can be used with
incomplete data. Multistep tails require block-resolved or independently
certified spectral information; they are not determined by one projector
expectation.
\end{abstract}

\maketitle

\section{Introduction}

High-dimensional entanglement offers more than a larger state space. A single
carrier can encode several logical values, and the additional levels can improve
the information rate, noise tolerance, or resource efficiency of communication,
cryptographic, metrological, and processing tasks
\cite{Horodecki2009,quantuminformation,Bennett1996,Bennett1996a,
PhysRevA.88.032309,PhysRevA.71.044305,lanyon2009simplifying,
PhysRevLett.96.010401,PRXQuantum.4.040305,Luo2019Teleportation}. These gains
depend on genuinely high-dimensional entanglement rather than on a large local
Hilbert space alone. The relevant question is therefore quantitative: how much
Schmidt weight is supported beyond each low-rank sector?

Photonic experiments now address this question in orbital-angular-momentum,
path, pixel, time-bin, and frequency-bin encodings. Representative developments
include spatial-mode and multipath sources, multiphoton states, integrated
circuits, quantum memories, and metropolitan-scale time-frequency links
\cite{Mair2001,dada2011experimental,malik2016multi,Hu2020Multipath,
Valencia2020Pixel,Kues2017,Wang2018Science,Ding2016Memory,
Islam2017TimeBin,Chang2026TimeFrequency,erhard2020advances}. The local dimension
in such systems may be large, but complete tomography becomes expensive and
mode-dependent losses complicate interpretation. Certification methods based on
few bases, randomized measurements, calibrated witnesses, or device-independent
data address different parts of this problem
\cite{bavaresco2018measurements,Guo2018GMLE,PRXQuantum.4.020324,
LiHuberFriis2025,PhysRevLett.111.030501,PhysRevX.4.011011,
Zhang2019SelfTesting,WuLiZhu2026}. Their output is often a fidelity, a witness
value, or a dimension threshold rather than a complete Schmidt spectrum.

For a pure bipartite state, the ordered Schmidt spectrum records strictly more
information than its rank. Its tail sums govern stochastic single-copy
conversion and form a complete family of pure-state entanglement monotones
\cite{VidalSingleCopy,VidalMonotones,VidalJonathanNielsen2000}. Mixed states are
stratified by the Schmidt number, the least Schmidt rank required in a
pure-state decomposition. For
$m=\dim\mathcal H_B\leq\dim\mathcal H_A$, the corresponding compact convex sets
obey $\mathcal S_1\subset\cdots\subset\mathcal S_m$
\cite{PhysRevA.61.040301,PhysRevA.63.050301,
PhysRevLett.132.220203}. The convex-roof Vidal tail
$E_r=\sum_{i=r}^{m}\mu_i$, with
$\mu_1\geq\cdots\geq\mu_m$, vanishes exactly on $\mathcal S_{r-1}$. It also
coincides with the geometric overlap deficit from pure states of Schmidt rank at
most $r-1$. This places the Vidal hierarchy within the geometric-measure
framework initiated for pure and mixed entanglement
\cite{GEM1,GEM2,GEM3,UyanikTurgut2010,Blasone2008,
Demianowicz2021,ZhuZhangZeng2023,GirardGour2017}.

Convex roofs are generally difficult to evaluate. Symmetry reduction,
Legendre-transform methods, quantitative witnesses, and numerical convex-roof
procedures convert partial observables into bounds, but their tightness depends
on the state family and the information model
\cite{VollbrechtWerner2001,EisertBrandaoAudenaert2007,
GuhneReimpellWerner2007,GuhneReimpellWerner2008,
TothMoroderGuhne2015,tomography,tomography2}. A fidelity with one pure reference
is an especially useful scalar input. For convex roofs generated by pure free
states, $1-E_r(\rho)$ equals the maximum fidelity between $\rho$ and
$\mathcal S_{r-1}$ \cite{StreltsovKampermannBruss2010,Regula2018}; the
Bures-angle triangle inequality then gives the standard geometric lower bound.

The closest direct prior result is Proposition~3 of Wu, Li, and Zhu
\cite{WuLiZhu2026}. Their resource-constrained fidelity
$G_r(E)=\max_{\rho:E_r(\rho)\leq E}F_\phi(\rho)$ and the positive branch of our
fixed-fidelity value
$\mathcal R_{r,\phi}(F)=\inf_{\rho:F_\phi(\rho)=F}E_r(\rho)$ parametrize the same
single-tail boundary in inverse variables. We therefore treat that boundary,
including its proportional two-block optimizer, as established input. This
equivalence is confined to the single-scale problem. Our main questions use
several nested tails and concern a sharp blockwise spectral boundary, partition
refinement, and a relaxed weighted optimization; these inputs and conclusions
are not supplied by inversion of the one-tail curve.

One tail cannot specify how the remaining Schmidt weight is distributed. The
same is true of a normalized nuclear norm or the largest reference coefficient.
Several nested tails retain a coarse-grained spectral profile: their differences
are the masses of consecutive Schmidt blocks. Majorization and R\'enyi-entropy
methods provide the general language for extremal probability vectors
\cite{Bosyk2017,MarshallOlkinArnold2011,Sason2018}. Against this background, to
our knowledge, no previous result combines arbitrary nested Vidal tails with a
sharp normalized nuclear-norm boundary, an if-and-only-if blockwise equality
condition, and a partition-refinement criterion.

Our first main result states that, at fixed multistep tail profile, the maximal
normalized nuclear norm is obtained precisely by spectra that are uniform within
each chosen block. The boundary is sharp for every admissible profile. Supplying
an additional tail refines the partition and cannot weaken the bound; strict
tightening occurs exactly when the separated subblocks have unequal mean
probabilities. Singleton blocks recover the full spectrum. This hierarchy makes
the information gain explicit rather than attributing it to a single fidelity.

Nonnegative weights scalarize the tail vector. The resulting problem is a linear
cost minimized over a Bhattacharyya--Hellinger superlevel set of block masses.
Divergence-constrained linear optimization has a broader literature
\cite{BenTalEtAl2013}. We give a global analytic solution of the relaxed
optimization with weighted Vidal-tail costs. On the open nontrivial
branch, a unique scalar root determines the unique full-support optimizer and
its value. A global Cauchy--Schwarz certificate, rather than stationarity alone,
proves optimality. The exact fixed-fidelity convex roof of a general weighted
profile remains open.

The single-scale baseline also supports two equality refinements. A continuous
bounded-Schmidt-rank family realizes every exact fidelity on the zero branch,
whereas positive-branch equality fixes the ordered Schmidt spectrum. This is a
spectrum-level statement: degeneracies may permit Schmidt-basis rotations, and
it neither fixes a state vector nor classifies mixed-state equality cases.
Convexity additionally gives a Fenchel representation adapted to
incomplete-information inference.

These spectral statements have two quantitative consequences. First, the
projector support coefficient
$A_r=\sum_{i=1}^{r-1}\nu_i$ is both the Schmidt-number-witness threshold and the
activation point of the Vidal-tail bound. Hence a violation
$F_\phi(\rho)>A_r$ provides a resource value, not only a dimension label.
Bounded-Schmidt-rank support functions and rank-one $S(k)$ identities are
established \cite{GUHNE2009,norm1,johnston2012norms,PRA23NW,XiongSze2026}; our
contribution is their quantitative calibration by certified tail bounds. The
resulting nonprojector construction can be conservative and does not replace an
exact $S(k)$-norm calculation.

Second, the Vidal hierarchy can be compared with other entanglement measures.
Negativity and its convex-roof extension are established measures, and their
relations to concurrence and Schmidt-number bounds are well studied
\cite{Negativity02,CREN03,EltschkaSiewert2015,Regula2018,pra16zhang,
Concurrence00,Concurrence01,Concurrence04,Concurrence05,Wootters1998}. The
$k=2$ geometric-measure--CREN relation is known. For $k\geq3$, we derive an
explicit Vidal-tail upper envelope for CREN together with its analytic inverse
and the if-and-only-if pure-state equality spectrum. The result concerns
this combined higher-tail statement, not CREN itself or previously known
negativity bounds on Schmidt number.

The operational scope remains conditional. One reference-projector expectation
does not determine a multistep tail vector. Those inputs require block-resolved
measurements, an independently certified source model, or other spectral data.
For mixed states, fixed-subspace populations are not convex-roof tails. A global
projector may also require many local correlators. We therefore formulate
leakage-aware and joint-confidence bounds as an inference interface and describe
published photonic summaries only as screening inputs, not as raw-count
experimental validation.

Section II fixes notation. Section III states the attributed single-scale
baseline and then develops the multistep and weighted Schmidt-spectrum results,
including leakage and statistical consequences. Section IV converts certified
Vidal-tail values into quantitative Schmidt-number-witness coefficients.
Section V derives the negativity relation and records a nonoptimal concurrence
comparison. Section VI presents multiscale illustrations, mixed-state brackets,
and experimental interfaces. The appendices contain the longer proofs.

\section{Preliminaries}

Throughout, $\mathbb N_0$ and $\mathbb N_+$ denote the nonnegative and positive
integers. Let $\mathcal H_n$ be an $n$-dimensional complex Hilbert space,
$\mathcal L(\mathcal H_n)$ its operator space, and
$\mathcal D(\mathcal H_n)$ the positive semidefinite trace-one operators on
$\mathcal H_n$. Bipartite systems are finite dimensional and take the form
$\mathcal H_n\otimes\mathcal H_m$ with $m\leq n$. We write $X\succeq0$ for a
positive semidefinite operator and $\mathbb I$ for the relevant identity.

A state $\rho\in\mathcal D(\mathcal H_n\otimes\mathcal H_m)$ is separable if
it admits a decomposition
\[
  \rho=\sum_i p_i
  |a_i\rangle\langle a_i|\otimes|b_i\rangle\langle b_i|,
  \qquad
  p_i\geq0,\quad \sum_i p_i=1,
\]
with normalized local vectors $|a_i\rangle\in\mathcal H_n$ and
$|b_i\rangle\in\mathcal H_m$; otherwise, it is entangled. If
\[
\rho=\sum_{i,j=1}^{n}\sum_{k,l=1}^{m}
\rho_{ijkl}|i\rangle\langle j|\otimes|k\rangle\langle l|,
\]
its partial transpose with respect to $B$ is
\[
\rho^{T_B}
=\sum_{i,j=1}^{n}\sum_{k,l=1}^{m}
\rho_{ijkl}|i\rangle\langle j|\otimes|l\rangle\langle k|.
\]
Every separable state has a positive partial transpose
\cite{PPT1,PPT2}; hence $\rho^{T_B}\nsucceq0$ certifies entanglement.

We use the squared Uhlmann fidelity
\[
F(\rho,\sigma)
:=\left[\operatorname{Tr}
\sqrt{\sqrt{\rho}\sigma\sqrt{\rho}}\right]^2.
\]
For a pure state $\rho=|\psi\rangle\langle\psi|$,
$F(\rho,\sigma)=\langle\psi|\sigma|\psi\rangle$, and for two pure states this
reduces to $|\langle\psi|\phi\rangle|^2$. All geometric overlaps use this
convention.

\subsection{Schmidt number and its witnesses}

Every normalized pure state admits a Schmidt decomposition
\[
\begin{aligned}
|\psi\rangle&=\sum_{i=1}^{m}s_i|a_i b_i\rangle,\\
s_1&\geq\cdots\geq s_m\geq0,
\qquad \sum_{i=1}^{m}s_i^2=1.
\end{aligned}
\]
The local families are orthonormal; $s_i$ are the Schmidt amplitudes and
$\mu_i:=s_i^2$ the Schmidt probabilities, zero-padded to length $m$. The Schmidt rank
$\operatorname{SR}(|\psi\rangle)$ is the number of nonzero Schmidt
amplitudes.

For $\rho\in\mathcal D(\mathcal H_n\otimes\mathcal H_m)$, let
$\mathfrak E(\rho)$ denote the set of finite normalized pure-state ensembles
$\{p_i,|\psi_i\rangle\}$ satisfying
\[
\rho=\sum_i p_i|\psi_i\rangle\langle\psi_i|,
\qquad p_i\geq0,\qquad\sum_i p_i=1.
\]
The Schmidt number is
\[
\operatorname{SN}(\rho)
:=\min_{\{p_i,|\psi_i\rangle\}\in\mathfrak E(\rho)}
\max_{i:p_i>0}\operatorname{SR}(|\psi_i\rangle).
\]
$\operatorname{SN}(\rho)\leq k$ precisely when $\rho$ has a decomposition
whose nonzero components all have Schmidt rank at most $k$.

We distinguish the normalized pure-state set from its mixed-state convex hull:
\begin{widetext}
\[
\begin{aligned}
\mathcal V_k
&:=
\bigl\{
|\psi\rangle:
\langle\psi|\psi\rangle=1,
\ \operatorname{SR}(|\psi\rangle)\leq k
\bigr\},\\
\mathcal S_k
&:=
\bigl\{
\rho\in\mathcal D(\mathcal H_n\otimes\mathcal H_m):
\operatorname{SN}(\rho)\leq k
\bigr\}
=
\operatorname{conv}
\bigl\{
|\psi\rangle\langle\psi|:
|\psi\rangle\in\mathcal V_k
\bigr\}.
\end{aligned}
\]
\end{widetext}

The sets $\mathcal S_k$ are compact and convex, with
$\mathcal S_{k-1}\subseteq\mathcal S_k$. For the maximally entangled state
\[
|\Phi_m^+\rangle=\frac{1}{\sqrt m}\sum_{i=1}^{m}|ii\rangle,
\]
every $\rho_k\in\mathcal S_k$ satisfies
\[
\langle\Phi_m^+|\rho_k|\Phi_m^+\rangle\leq\frac{k}{m}.
\]
This threshold follows from the maximal squared overlap of
$|\Phi_m^+\rangle$ with a Schmidt-rank-$k$ pure state
\cite{PhysRevA.61.040301,PhysRevA.59.4206}.

We use a threshold-indexed witness convention throughout.

\begin{definition}[$k$-Schmidt-number witness]
Let $2\leq k\leq m$. A Hermitian operator $\mathcal{SW}_k$ is a
$k$-Schmidt-number witness if
\[
\operatorname{Tr}(\mathcal{SW}_k\sigma)\geq0
\qquad\text{for every }\sigma\in\mathcal S_{k-1},
\]
and if there exists a density operator $\rho\notin\mathcal S_{k-1}$ such that
\[
\operatorname{Tr}(\mathcal{SW}_k\rho)<0.
\]
Every negative expectation value certifies
$\operatorname{SN}(\rho)\geq k$. The detected state need not belong to
$\mathcal S_k$.
\end{definition}

Because $\mathcal S_1$ is the set of separable states, a
$2$-Schmidt-number witness is an entanglement witness. For a Hermitian operator
$X$, define the bounded-Schmidt-rank support function
\[
h_k(X):=\max_{|v\rangle\in\mathcal V_k}\langle v|X|v\rangle
=\max_{\sigma\in\mathcal S_k}\operatorname{Tr}(X\sigma).
\]
By definition of $h_{k-1}$, the operator $h_{k-1}(X)\mathbb I-X$ is nonnegative
on $\mathcal S_{k-1}$. It becomes a $k$-Schmidt-number witness precisely when
it is not positive semidefinite, or equivalently when
$\lambda_{\max}(X)>h_{k-1}(X)$.

\subsection{$k$-positive maps and $S(k)$-norms}

Let
$\Phi:\mathcal L(\mathcal H_n)\rightarrow\mathcal L(\mathcal H_m)$ be linear.
Its Hilbert--Schmidt adjoint is determined by
\[
\operatorname{Tr}\!\left[\Phi(X)^\dagger Y\right]
=\operatorname{Tr}\!\left[X^\dagger\Phi^\dagger(Y)\right].
\]
The map is Hermiticity preserving if it maps Hermitian operators to Hermitian
operators, positive if $\Phi(X)\succeq0$ whenever $X\succeq0$, and $k$-positive
if $\operatorname{id}_k\otimes\Phi$ is positive. Complete
positivity requires $k$-positivity for every $k\in\mathbb N_+$. In finite
dimensions, $n$-positivity already suffices.

With
$|\Phi_n^+\rangle=n^{-1/2}\sum_{i=1}^{n}|e_i e_i\rangle$, define the Choi
operator
\[
C_\Phi
:=(\operatorname{id}_n\otimes\Phi)
\bigl(|\Phi_n^+\rangle\langle\Phi_n^+|\bigr).
\]
Choi's theorem states that $\Phi$ is completely positive if and only if
$C_\Phi\succeq0$. Under the Choi--Jamio{\l}kowski correspondence,
Hermiticity-preserving maps correspond to Hermitian operators, and
$k$-positive maps correspond to $k$-block-positive operators.

\begin{lemma}{\rm\cite{kblockpositive}}\label{lemkbp}
Let $X=X^\dagger\in\mathcal L(\mathcal H_n\otimes\mathcal H_m)$. Then $X$
is $k$-block positive if and only if
\begin{equation}\label{kbpeq01}
  \operatorname{Tr}(X\rho)\geq 0
  \quad\text{for every density operator }\rho\in\mathcal S_k.
\end{equation}
\end{lemma}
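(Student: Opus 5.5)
The plan is to unfold the definition of $k$-block positivity and pass through the pure states of $\mathcal V_k$. By definition, $X$ is $k$-block positive when $\langle v|X|v\rangle\geq0$ for every vector $|v\rangle\in\mathcal H_n\otimes\mathcal H_m$ with $\operatorname{SR}(|v\rangle)\leq k$. Equivalently, $(P\otimes Q)X(P\otimes Q)\succeq0$ for all rank-$k$ projections $P$ and $Q$, which is the form that matches the $k$-positivity of $\operatorname{id}_k\otimes\Phi$ under the Choi--Jamio{\l}kowski correspondence. I would first record that these two formulations agree. A vector of Schmidt rank at most $k$ lies in $\operatorname{ran}P\otimes\operatorname{ran}Q$ for the projections onto the spans of its Schmidt vectors, after enlarging to rank exactly $k$ if necessary. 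Conversely, every vector in such a $k\times k$ subspace has Schmidt rank at most $k$.

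For the direction ($\Rightarrow$), I take $\rho\in\mathcal S_k$. By the definition of $\mathcal S_k$ as a convex hull stated above, $\rho=\sum_i p_i|\psi_i\rangle\langle\psi_i|$ with $|\psi_i\rangle\in\mathcal V_k$ and $p_i\geq0$. Linearity of the trace then gives
\[
\operatorname{Tr}(X\rho)=\sum_i p_i\langle\psi_i|X|\psi_i\rangle\geq0,
\]
since each term is nonnegative by $k$-block positivity.

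For the direction ($\Leftarrow$), I take any normalized $|v\rangle\in\mathcal V_k$. Then $|v\rangle\langle v|$ is a density operator with Schmidt number at most $k$, so it lies in $\mathcal S_k$. Applying \eqref{kbpeq01} gives $\langle v|X|v\rangle\geq0$. Rescaling by positive constants extends this to unnormalized vectors, which is exactly $k$-block positivity.

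The only real obstacle is bookkeeping. I must make sure the definition of $k$-block positivity used is the vector form, or else prove its equivalence with the projection form via Schmidt decomposition as in the first paragraph. I must also confirm that $\mathcal S_k$ really coincides with the convex hull of pure states from $\mathcal V_k$. That second point is immediate from the Schmidt-number definition, because an optimal decomposition has all components of Schmidt rank at most $k$.
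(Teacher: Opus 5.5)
Your proof is correct and complete if $k$-block positivity is taken in its standard vector form, $\langle v|X|v\rangle\geq0$ for every $|v\rangle$ with $\operatorname{SR}(|v\rangle)\leq k$. The paper gives no proof of this lemma; it only cites it from the reference. Your two directions are the standard proof of this duality: linearity of the trace over the convex-hull description $\mathcal S_k=\operatorname{conv}\{|\psi\rangle\langle\psi|:|\psi\rangle\in\mathcal V_k\}$ recorded in Sec.~II, and the membership $|v\rangle\langle v|\in\mathcal S_k$ for normalized $|v\rangle\in\mathcal V_k$.

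The only unproved claim in your write-up is the side remark that this form ``matches'' $k$-positivity of $\operatorname{id}_k\otimes\Phi$. The lemma does not need it. However, the paper itself introduces $k$-block positivity only through the sentence ``$k$-positive maps correspond to $k$-block-positive operators.'' If you read that as the definition, you must supply the link. Write $|w\rangle=(W\otimes I)\sum_i|ii\rangle$ for $W:\mathbb C^n\to\mathbb C^k$. Then $(\operatorname{id}_k\otimes\Phi)(|w\rangle\langle w|)=n(W\otimes I)C_\Phi(W^\dagger\otimes I)$. Its positivity for all $W$ is equivalent to $\langle v|C_\Phi|v\rangle\geq0$ for all $|v\rangle=(W^\dagger\otimes I)|u\rangle$, and these are exactly the vectors of Schmidt rank at most $k$. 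Your $P\otimes Q$ reformulation also tacitly needs $k\leq m\leq n$, which is the paper's standing range.
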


Lemma~\ref{lemkbp} expresses the duality between $k$-block-positive operators
and states of Schmidt number at most $k$. The relevant support function for
positive operators is the $S(k)$-norm.

\begin{definition}[Vector $s(k)$-norm and operator $S(k)$-norm \cite{norm1}]
Let $|v\rangle\in\mathcal H_n\otimes\mathcal H_m$ be normalized and let
$s_1\geq\cdots\geq s_m\geq0$ be its Schmidt amplitudes. For
$1\leq k\leq m\leq n$, define
\[
\||v\rangle\|_{s(k)}
:=\sup_{|w\rangle\in\mathcal V_k}|\langle w|v\rangle|
=\left(\sum_{i=1}^k s_i^2\right)^{1/2}.
\]
For $X\in\mathcal L(\mathcal H_n\otimes\mathcal H_m)$, define
\[
\|X\|_{S(k)}
:=\sup_{|v\rangle,|w\rangle\in\mathcal V_k}
|\langle w|X|v\rangle|.
\]
\end{definition}

For a general Hermitian operator, the positive support $h_k(X)$ differs from
the absolute two-vector norm $\|X\|_{S(k)}$. Normality does not reduce the
two-vector optimization to quadratic forms. We use only the positive-operator
case below.

\begin{lemma}[Positive-operator support \cite{norm1}]\label{lem4}
If $X\succeq0$, then
\[
\|X\|_{S(k)}
=h_k(X)
=\max_{|v\rangle\in\mathcal V_k}\langle v|X|v\rangle
=\max_{\rho\in\mathcal S_k}\operatorname{Tr}(X\rho).
\]
For $\lambda\in\mathbb R$, this identity implies that the operator
$\lambda\mathbb I-X$ is $k$-block positive if and only if
$\lambda\geq\|X\|_{S(k)}$. Under this condition, it is a nontrivial
$(k+1)$-Schmidt-number witness precisely when
$\lambda<\lambda_{\max}(X)$.
\end{lemma}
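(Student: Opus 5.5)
The plan is to prove the chain of equalities in Lemma~\ref{lem4} first, and then read off the witness statements from it together with Lemma~\ref{lemkbp}.

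\emph{Step 1: $h_k(X)$ as a pure or mixed maximum.} The identity $h_k(X)=\max_{|v\rangle\in\mathcal V_k}\langle v|X|v\rangle=\max_{\rho\in\mathcal S_k}\operatorname{Tr}(X\rho)$ holds for any Hermitian $X$. Indeed, $\mathcal S_k$ is the convex hull of the projectors onto $\mathcal V_k$, and a linear functional on a compact convex hull attains its maximum at a generating point. Compactness of $\mathcal V_k$ follows because it is the intersection of the unit sphere with the closed set cut out by vanishing $(k+1)$-minors of the coefficient matrix. This compactness turns the suprema into maxima.

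\emph{Step 2: $\|X\|_{S(k)}=h_k(X)$ for $X\succeq0$.} The inequality $h_k(X)\leq\|X\|_{S(k)}$ is immediate by taking $|w\rangle=|v\rangle$. For the reverse, write $X=Y^\dagger Y$ with $Y=X^{1/2}$. For any $|v\rangle,|w\rangle\in\mathcal V_k$, Cauchy--Schwarz gives
\[
|\langle w|X|v\rangle|=|\langle Yw|Yv\rangle|\leq\|Yw\|\,\|Yv\|=\sqrt{\langle w|X|w\rangle\langle v|X|v\rangle}\leq h_k(X).
\]
Taking the supremum over $|v\rangle,|w\rangle$ yields equality. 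This is the only place where positivity enters, and it is the conceptual crux. The remaining steps are bookkeeping.

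\emph{Step 3: the block-positivity criterion.} By Lemma~\ref{lemkbp}, $\lambda\mathbb I-X$ is $k$-block positive if and only if $\lambda-\operatorname{Tr}(X\rho)\geq0$ for all $\rho\in\mathcal S_k$, using $\operatorname{Tr}\rho=1$. This condition is equivalent to $\lambda\geq\max_{\rho\in\mathcal S_k}\operatorname{Tr}(X\rho)=\|X\|_{S(k)}$.

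\emph{Step 4: the witness statement.} Assume $\lambda\geq\|X\|_{S(k)}$. Then $\operatorname{Tr}[(\lambda\mathbb I-X)\sigma]\geq0$ on $\mathcal S_k$, which is the first condition in the definition of a $(k+1)$-Schmidt-number witness.

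\emph{Nontriviality, sufficiency.} Suppose $\lambda<\lambda_{\max}(X)$, and let $\rho$ be the projector onto a top eigenvector of $X$. Then $\operatorname{Tr}[(\lambda\mathbb I-X)\rho]=\lambda-\lambda_{\max}(X)<0$. The just-proved nonnegativity on $\mathcal S_k$ then forces $\rho\notin\mathcal S_k$.

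\emph{Nontriviality, necessity.} Suppose instead $\lambda\geq\lambda_{\max}(X)$. Then $\lambda\mathbb I-X\succeq0$, so no density operator gives a negative expectation, and the operator is not a witness.

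One small point is that the definition requires $2\leq k+1\leq m$. So we read the statement for $k<m$; for $k=m$ we have $\|X\|_{S(m)}=\lambda_{\max}(X)$, the case is vacuous, and the criterion remains consistent.
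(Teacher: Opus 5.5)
Your proof is correct and complete. The paper gives no argument of its own for this lemma. It imports the identity $\|X\|_{S(k)}=h_k(X)$ from \cite{norm1}. The block-positivity and nontriviality parts rest on the same reasoning the paper spells out in Lemma~\ref{prop:witness-coefficient} and Proposition~\ref{prop:certified-witness}: a compact pure-state maximizer, linearity over $\mathcal S_k$, and $\lambda_{\min}(\lambda\mathbb I-X)=\lambda-\lambda_{\max}(X)$. Your Steps 1, 3, and 4 reproduce that reasoning.

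Your Step 2 is the one positivity-dependent ingredient. The bound $|\langle w|X|v\rangle|\leq\sqrt{\langle w|X|w\rangle\langle v|X|v\rangle}$, obtained via $X^{1/2}$, supplies what the paper takes on citation. Your version is therefore self-contained where the paper's is not.
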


For a positive reference operator, the exact $S(k)$-norm is its smallest
block-positive coefficient. A certified upper bound gives a valid, generally
nonoptimal, coefficient. Exact values and estimates are known for some
structured operators \cite{norm1,johnston2012norms}.

\subsection{Bounded-Schmidt-rank geometric measures}

For a normalized pure state $|\psi\rangle$ and $2\leq r\leq m$, define
\[
E_r(|\psi\rangle)
:=1-\max_{|\varphi\rangle\in\mathcal V_{r-1}}
|\langle\varphi|\psi\rangle|^2.
\]
The case $r=2$ is the conventional geometric measure of bipartite
entanglement, denoted by $E_G$. If $|\psi\rangle$ has ordered Schmidt
amplitudes $s_1\geq\cdots\geq s_m$, the best rank-$(r-1)$ approximation gives
\begin{equation}\label{RGMeq1}
  E_r(|\psi\rangle)
  =1-\sum_{i=1}^{r-1}s_i^2
  =\sum_{i=r}^{m}s_i^2.
\end{equation}
Equation~\eqref{RGMeq1} identifies $E_r$ as the $r$th Vidal tail monotone, with
the index convention used in
Refs.~\cite{VidalSingleCopy,VidalMonotones}. Its equivalent
bounded-Schmidt-rank geometric form was established in
Refs.~\cite{UyanikTurgut2010,Demianowicz2021}; see also the broader geometric
hierarchies in Refs.~\cite{Blasone2008,ZhuZhangZeng2023}. It vanishes exactly
when
$\operatorname{SR}(|\psi\rangle)\leq r-1$. The symbol $E_r$ exposes the rank
threshold; it does not denote a new monotone.

For a mixed bipartite state, we use the convex-roof extension
\begin{equation*}
  E_r(\rho)=
  \inf_{\{p_i,|\psi_i\rangle\}\in\mathfrak E(\rho)}
  \sum_i p_i E_r(|\psi_i\rangle).
\end{equation*}
In finite dimensions, continuity, compactness, and Carath\'eodory's theorem
give a finite minimizing ensemble, whose size may depend on dimension. Hence
\[
E_r(\rho)=0
\quad\Longleftrightarrow\quad
\operatorname{SN}(\rho)\leq r-1.
\]
An attaining zero-average ensemble contains only states satisfying
$E_r(|\psi_i\rangle)=0$, each of Schmidt rank at most $r-1$. The Schmidt-number
definition gives the converse.
The nested reference sets yield
\[
E_{2}(\rho)\ge E_{3}(\rho)\ge \cdots \ge E_{m}(\rho)\ge 0.
\]
The family $\{E_r\}_{r=2}^{m}$ resolves the nested sets
$\mathcal S_{r-1}$. Each pure-state member is an ordered spectrum tail, and its
mixed-state counterpart is a convex roof. Existing work evaluates these roofs
on symmetric families, including isotropic states
\cite{VollbrechtWerner2001,GirardGour2017}, and has used related geometric
quantities for entangled subspaces \cite{Demianowicz2021,ZhuZhangZeng2023}.
We seek lower bounds for arbitrary mixed states from a specified
reference-projector expectation.

\section{Single-scale baseline and multiscale Schmidt-spectrum bounds}

We first record the known single-tail characteristic curve and its index map to
Ref.~\cite{WuLiZhu2026}. This baseline is followed by the multistep boundary and
weighted optimization, which are the main results of this section. Ambient-space,
leakage, measurement, and confidence consequences are stated after the
multiscale analysis.

\subsection{Single-scale Vidal-tail characteristic curve: prior result and fixed-fidelity formulation}

Let $|\phi\rangle$ be a pure reference with ordered Schmidt probabilities
$\nu_1\geq\cdots\geq\nu_m\geq0$, and define
\[
A_r:=\sum_{i=1}^{r-1}\nu_i,
\qquad
B_r:=\sum_{i=r}^{m}\nu_i=1-A_r.
\]
Wu, Li, and Zhu use zero-based Schmidt indices and denote our $E_r$ by
$\mathcal E_{r-1}$ \cite{WuLiZhu2026}. With this index shift, their
Proposition~3 defines, for $0\leq t\leq1$,
\[
G_{r,\phi}(t):=
\max_{\rho:E_r(\rho)\leq t}F_\phi(\rho),
\]
and gives the characteristic curve
\begin{equation}
G_{r,\phi}(t)=
\begin{cases}
\bigl[\sqrt{B_rt}+\sqrt{A_r(1-t)}\bigr]^2,&t<B_r,\\[0.4em]
1,&t\geq B_r.
\end{cases}
\label{eq:wlz-characteristic-curve}
\end{equation}
The same maximum is attained by a pure state, and their Eq.~(63) gives the
proportional two-block optimizer. The positive fixed-fidelity branch below is
the algebraic inverse of Eq.~\eqref{eq:wlz-characteristic-curve}. We use this
inverse only as a single-scale baseline. The multistep theorem instead fixes a
vector of nested tails and determines its blockwise equality structure and
refinement behavior; the weighted theorem solves a separate relaxed
scalarization. Neither problem follows from reparametrizing the one-tail curve.

Let $\mathcal X_{r-1}$ contain the normalized pure states of Schmidt rank at
most $r-1$, so that $\mathcal S_{r-1}=\operatorname{conv}\mathcal X_{r-1}$.
The following established convex-roof identity is specialized to this hierarchy
\cite{StreltsovKampermannBruss2010,Regula2018}.

\begin{lemma}[Known convex-roof--fidelity identity]
\label{lem:convex-roof-fidelity-identity}
For every bipartite density operator,
\begin{equation}
 E_r(\rho)=1-\max_{\sigma\in\mathcal S_{r-1}}F(\rho,\sigma).
\label{eq:convex-roof-fidelity-identity}
\end{equation}
\end{lemma}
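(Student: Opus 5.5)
The plan is to prove the two inequalities between $E_r(\rho)$ and $1-\max_{\sigma\in\mathcal S_{r-1}}F(\rho,\sigma)$ separately. The pure-state case comes first. By the definition of $E_r$ and Eq.~\eqref{RGMeq1}, $E_r(|\psi\rangle)=1-\max_{|\varphi\rangle\in\mathcal V_{r-1}}|\langle\varphi|\psi\rangle|^2$. For pure $\rho=|\psi\rangle\langle\psi|$ the fidelity $F(\rho,\sigma)=\langle\psi|\sigma|\psi\rangle$ is linear in $\sigma$. Its maximum over the compact convex set $\mathcal S_{r-1}=\operatorname{conv}\mathcal X_{r-1}$ is therefore attained at an extreme point, a pure state of Schmidt rank at most $r-1$. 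This gives the identity for pure states.

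For the direction $E_r(\rho)\geq 1-\max_\sigma F(\rho,\sigma)$, I take an optimal finite ensemble $\{p_i,|\psi_i\rangle\}$, whose existence is recorded in the Preliminaries. For each $i$ I choose $|\varphi_i\rangle\in\mathcal V_{r-1}$ with $|\langle\varphi_i|\psi_i\rangle|^2=1-E_r(|\psi_i\rangle)$, and fix its phase so that $\langle\varphi_i|\psi_i\rangle\geq0$. I then set $\sigma=\sum_ip_i|\varphi_i\rangle\langle\varphi_i|\in\mathcal S_{r-1}$. To bound $F(\rho,\sigma)$ from below I use purifications with a common ancilla: $|\Psi\rangle=\sum_i\sqrt{p_i}|\psi_i\rangle|i\rangle$ and $|\Sigma\rangle=\sum_i\sqrt{p_i}|\varphi_i\rangle|i\rangle$. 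Uhlmann's theorem then yields
\[
F(\rho,\sigma)\geq|\langle\Sigma|\Psi\rangle|^2=\Bigl(\sum_ip_i\sqrt{1-E_r(|\psi_i\rangle)}\Bigr)^2\geq\Bigl(\sum_ip_i\bigl(1-E_r(|\psi_i\rangle)\bigr)\Bigr)^2 .
\]
This square is too weak, because it gives $(1-E_r)^2$ rather than $1-E_r$. The argument must instead run through the joint concavity of $\sqrt F$, or through the standard Streltsov--Kampermann--Bru\ss{} route, which I follow here.

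For the other direction, $E_r(\rho)\leq 1-F(\rho,\sigma)$ for every $\sigma\in\mathcal S_{r-1}$, I fix $\sigma=\sum_jq_j|\varphi_j\rangle\langle\varphi_j|$ and an optimal Uhlmann purification pair. Let $W$ be the unitary, from Uhlmann's theorem on the ancilla, that attains $\sqrt{F(\rho,\sigma)}=\langle\Sigma|(\mathbb I\otimes W)|\Psi\rangle$. From $W$ I build a decomposition of $\rho$: measuring the ancilla of $(\mathbb I\otimes W)|\Psi\rangle$ in the basis $\{|j\rangle\}$ produces an ensemble $\{p_j,|\psi_j\rangle\}$ of $\rho$ with $\sqrt{F}=\sum_j\sqrt{p_jq_j}\,\langle\varphi_j|\psi_j\rangle$. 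By Cauchy--Schwarz,
\[
F\leq\Bigl(\sum_jq_j\Bigr)\Bigl(\sum_jp_j|\langle\varphi_j|\psi_j\rangle|^2\Bigr)\leq\sum_jp_j\bigl(1-E_r(|\psi_j\rangle)\bigr).
\]
Hence $E_r(\rho)\leq\sum_jp_jE_r(|\psi_j\rangle)\leq1-F$. This step is routine.

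The main obstacle is the first direction, producing a $\sigma$ with $F(\rho,\sigma)\geq 1-E_r(\rho)$, where the naive bound loses a square. I will try a reweighted $\sigma$. Define
\[
\sigma=\frac{1}{1-E_r(\rho)}\sum_ip_i\bigl(1-E_r(|\psi_i\rangle)\bigr)|\varphi_i\rangle\langle\varphi_i| ,
\]
so that $q_i=p_i(1-E_i)/(1-E)$, writing $E_i=E_r(|\psi_i\rangle)$ and $E=E_r(\rho)$. Then
\[
\langle\Sigma|\Psi\rangle=\sum_i\sqrt{p_iq_i}\sqrt{1-E_i}=\sum_ip_i(1-E_i)/\sqrt{1-E}=\sqrt{1-E},
\]
which gives exactly $F\geq1-E_r(\rho)$. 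If $E_r(\rho)=1$ the inequality is trivial. Combining the two directions proves the lemma.
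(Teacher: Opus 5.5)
Your proof is correct, and its skeleton matches the paper's one-line chain $\max_{\sigma\in\mathcal S_{r-1}}F(\rho,\sigma)=\max_{\{p_j,|\psi_j\rangle\}}\sum_jp_j\max_{\chi_j\in\mathcal X_{r-1}}|\langle\chi_j|\psi_j\rangle|^2=1-\min\sum_jp_jE_r(|\psi_j\rangle)$. The difference is that the paper imports the first equality (the Streltsov--Kampermann--Bru\ss{} identity) from the literature and does not prove it, whereas you prove it. You get the upper bound from Uhlmann's theorem, a decomposition of $\rho$ read off the ancilla, and Cauchy--Schwarz. You get the lower bound from the reweighted state with $q_i\propto p_i\bigl(1-E_r(|\psi_i\rangle)\bigr)$, which makes the purification overlap exactly $\sqrt{1-E_r(\rho)}$. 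Your version therefore actually substantiates the paper's remark that no separate duality assumption is needed; the paper's version gains only brevity, by citing the identity. In the write-up, drop the false start. With equal weights, joint concavity of $\sqrt F$ yields only $\bigl(\sum_ip_i\sqrt{1-E_i}\bigr)^2$, which by Jensen is at most $1-E_r(\rho)$. The reweighting (in effect, strong concavity of the root fidelity) is therefore the essential idea and should be presented directly. A few smaller points follow. You do not need an optimal ensemble: for an arbitrary ensemble, normalizing $q_i$ by $\sum_jp_j(1-E_j)$ gives $\max_\sigma F(\rho,\sigma)\geq1-\sum_ip_iE_i$, and you then take the infimum. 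The case $E_r(\rho)=1$ cannot occur, since $E_r\leq(m-r+1)/m<1$. In the upper-bound direction, state explicitly that $\sigma$ is a finite mixture of elements of $\mathcal V_{r-1}$, and pad that mixture with zero-weight terms so the common ancilla has dimension at least $\operatorname{rank}\rho$. Finally, the separate pure-state paragraph is subsumed by the general argument.
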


Indeed,
\begin{align*}
\max_{\sigma\in\mathcal S_{r-1}}F(\rho,\sigma)
&=\max_{\rho=\sum_jp_j|\psi_j\rangle\langle\psi_j|}
  \sum_jp_j\max_{\chi_j\in\mathcal X_{r-1}}
  |\langle\chi_j|\psi_j\rangle|^2\\
&=1-\min_{\rho=\sum_jp_j|\psi_j\rangle\langle\psi_j|}
  \sum_jp_jE_r(|\psi_j\rangle),
\end{align*}
which gives Eq.~\eqref{eq:convex-roof-fidelity-identity} without a separate
duality assumption.

For squared Uhlmann fidelity, define the Bures angle and distance to the free
set by
\[
\begin{aligned}
d_B(\rho,\sigma)&:=\arccos\sqrt{F(\rho,\sigma)},\\
d_B(\rho,\mathcal S_{r-1})
&:=\min_{\sigma\in\mathcal S_{r-1}}d_B(\rho,\sigma).
\end{aligned}
\]
Lemma~\ref{lem:convex-roof-fidelity-identity} implies
\begin{equation}
d_B(\rho,\mathcal S_{r-1})=\arcsin\sqrt{E_r(\rho)}.
\label{eq:bures-tail-identity}
\end{equation}
Since
$d_B(\phi,\mathcal S_{r-1})=\arccos\sqrt{A_r}$, the triangle inequality gives:

\begin{proposition}[General Bures-angle lower bound]
\label{prop:general-bures-bound}
For every density operator $\rho$,
\begin{equation}
E_r(\rho)\geq
\sin^2\!\left(
\left[\arccos\sqrt{A_r}
-\arccos\sqrt{F_\phi(\rho)}\right]_+
\right).
\label{eq:general-bures-bound}
\end{equation}
\end{proposition}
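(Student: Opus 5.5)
The plan is to use the triangle inequality for the Bures angle together with the two identities already in hand: Eq.~\eqref{eq:bures-tail-identity}, $d_B(\rho,\mathcal S_{r-1})=\arcsin\sqrt{E_r(\rho)}$, and the value $d_B(\phi,\mathcal S_{r-1})=\arccos\sqrt{A_r}$. Here $F_\phi(\rho)=\langle\phi|\rho|\phi\rangle=F(\phi,\rho)$, so $d_B(\phi,\rho)=\arccos\sqrt{F_\phi(\rho)}$.

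First I justify $d_B(\phi,\mathcal S_{r-1})=\arccos\sqrt{A_r}$. By Lemma~\ref{lem:convex-roof-fidelity-identity} applied to the pure state $\phi$, $\max_{\sigma\in\mathcal S_{r-1}}F(\phi,\sigma)=1-E_r(|\phi\rangle)$. By Eq.~\eqref{RGMeq1}, this equals $\sum_{i<r}\nu_i=A_r$. Taking $\arccos\sqrt{\cdot}$, which is decreasing, gives the claim. The same reasoning with Lemma~\ref{lem:convex-roof-fidelity-identity} for $\rho$ gives $\cos^2 d_B(\rho,\mathcal S_{r-1})=1-E_r(\rho)$, and hence Eq.~\eqref{eq:bures-tail-identity}, with the angle in $[0,\pi/2]$.

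Next I use that the Bures angle $\arccos\sqrt{F}$ is a metric on density operators. This is the standard fact behind the ``Bures-angle triangle inequality'' cited in the introduction. Let $\sigma^\star\in\mathcal S_{r-1}$ attain the minimum, which exists by compactness and continuity of fidelity. Then
\[
\arccos\sqrt{A_r}=d_B(\phi,\mathcal S_{r-1})\le d_B(\phi,\sigma^\star)\le d_B(\phi,\rho)+d_B(\rho,\sigma^\star).
\]
Rearranging gives
\[
d_B(\rho,\mathcal S_{r-1})\ge \arccos\sqrt{A_r}-\arccos\sqrt{F_\phi(\rho)}.
\]
Since the left side is also nonnegative, it is at least $[\,\cdot\,]_+$. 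Both the left side and the bracket lie in $[0,\pi/2]$, where $\sin^2$ is increasing. Applying $\sin^2$ and using $\sin^2 d_B(\rho,\mathcal S_{r-1})=E_r(\rho)$ yields Eq.~\eqref{eq:general-bures-bound}.

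The only nontrivial ingredient is the triangle inequality for the Bures angle on mixed states. I take it as established; it follows from Uhlmann's theorem by passing to purifications and using the Fubini--Study angle on pure states. The remaining steps are monotonicity bookkeeping, and the main thing to check is that every angle stays in $[0,\pi/2]$ so that $\sin^2$ preserves the inequality.
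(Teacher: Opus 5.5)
Your proposal is correct and follows the paper's own argument. The paper likewise applies the Bures-angle triangle inequality through $\rho$ and minimizes over $\mathcal S_{r-1}$; fixing a minimizer $\sigma^\star$ first, as you do, is equivalent. It then uses $d_B(\phi,\mathcal S_{r-1})=\arccos\sqrt{A_r}$ with Eq.~\eqref{eq:bures-tail-identity}, and the monotonicity of $\sin^2$ on $[0,\pi/2]$. Your derivation of $d_B(\phi,\mathcal S_{r-1})=\arccos\sqrt{A_r}$ from Lemma~\ref{lem:convex-roof-fidelity-identity} and Eq.~\eqref{RGMeq1} just fills in a step the paper asserts without proof.
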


\begin{proof}
For every $\sigma\in\mathcal S_{r-1}$, the Bures-angle triangle inequality
gives
\[
d_B(\rho,\sigma)\geq d_B(\phi,\sigma)
-\arccos\sqrt{F_\phi(\rho)}.
\]
Minimizing over $\sigma$, taking the positive part, and using
Eq.~\eqref{eq:bures-tail-identity} proves Eq.~\eqref{eq:general-bures-bound}
because $\sin^2$ is increasing on $[0,\pi/2]$.
\end{proof}

Equation~\eqref{eq:general-bures-bound} is established geometry and reproduces
the inverse functional form of Eq.~\eqref{eq:wlz-characteristic-curve}. We use
it as an alternative derivation and as a convenient route to leakage and
confidence bounds. Neither the function nor its positive-branch attainability
is claimed as a result of this manuscript.

The Vidal--Jonathan--Nielsen problem instead fixes a source spectrum
$\boldsymbol\alpha$ and a target spectrum $\boldsymbol\nu$, then maximizes
target fidelity over output spectra subject to
all majorization constraints \cite{VidalJonathanNielsen2000}.
Majorization-lattice methods retain that fixed-source order structure
\cite{Bosyk2017}; our fidelity section fixes no source spectrum. General
incomplete-information methods bound a convex
measure through a Legendre conjugate
\cite{EisertBrandaoAudenaert2007,GuhneReimpellWerner2007,GuhneReimpellWerner2008};
we derive ours from the closed section value. The projector estimate of
Ref.~\cite{pra16zhang} replaces the reference spectrum by its largest value
$\nu_1$. These structural comparisons do not establish reducibility or
nonreducibility. Section~IV uses established support and witness identities
\cite{norm1,XiongSze2026} only for calibration.

\subsubsection{Known maximally entangled calibration in the present notation}

Let $|\psi\rangle\in\mathcal H_n\otimes\mathcal H_m$, with $m\leq n$, have
ordered Schmidt probabilities
$\mu_1\geq\cdots\geq\mu_m\geq0$ and $\sum_i\mu_i=1$. Equation~\eqref{RGMeq1}
gives, for $2\leq r\leq m$,
\begin{equation*}
    E_r(|\psi\rangle)
    =
    1-\sum_{i=1}^{r-1}\mu_i
    =
    \sum_{i=r}^{m}\mu_i. 
\end{equation*}
Define the scalar spectrum parameter
\begin{equation}\label{eq:lambda-general-definition}
    \lambda(\boldsymbol{\mu})
    =
    \frac{1}{m}
    \left(
        \sum_{i=1}^{m}\sqrt{\mu_i}
    \right)^2,
    \qquad
    \frac{1}{m}\le\lambda\le1.
\end{equation}
Product and maximally entangled states attain the endpoints. Equivalently,
\(\lambda\) is the normalized squared nuclear norm of the coefficient matrix of
\(|\psi\rangle\). Under the Sec.~V convention
\cite{Negativity02}, pure states satisfy
\[
\lambda(\psi)=\frac{\mathcal N_{\rm PT}(|\psi\rangle\langle\psi|)+1}{m}.
\]
Thus \(\lambda\) reparametrizes pure-state negativity but neither defines a
measure nor fixes the Schmidt spectrum. It fixes the lower envelopes below for
fixed $m$, ambient dimension, and zero-padding convention.

Set
\begin{equation*}
    p=r-1,
    \qquad
    q=m-r+1,
    \qquad
    p+q=m.  
\end{equation*}

For fixed \(\lambda\), minimize \(E_r\) over ordered Schmidt spectra:

\begin{widetext}
   \begin{equation}\label{eq:Fr-definition-main}
    \mathcal{F}_r(\lambda)
    =
    \min_{\substack{
        \boldsymbol{\mu}\in\mathbb R^m\\
        \mu_1\geq\mu_2\geq\cdots\geq\mu_m\geq0
    }}
    \left\{
        1-\sum_{i=1}^{r-1}\mu_i
        \;\middle|\;
        \sum_{i=1}^{m}\mu_i=1,
        \frac{1}{m}
        \left(
            \sum_{i=1}^{m}\sqrt{\mu_i}
        \right)^2
        =\lambda
    \right\}.
\end{equation} 
\end{widetext}

The function \(\mathcal{F}_r\) is the lower boundary of the pure-state
\((\lambda,E_r)\) region.

\begin{lemma}[Known isotropic calibration in the present notation]\label{thm:exact-Fr-main}
Fix finite local dimensions $m\leq n$. Pad each Schmidt spectrum to length $m$
and order it nonincreasingly. Let \(2\le r\le m\). For every
\(\lambda\in[1/m,1]\), the constrained minimum in
\eqref{eq:Fr-definition-main} equals
  \begin{widetext}
    \begin{equation}\label{eq:Fr-exact-main}
    \mathcal{F}_r(\lambda)
    =
    \begin{cases}
        0,
        &
        \displaystyle
        \frac{1}{m}
        \le \lambda
        \le
        \frac{r-1}{m},
        \\[1.2em]
        \displaystyle
        \frac{1}{m}
        \left[
            \sqrt{(m-r+1)\lambda}
            -
            \sqrt{(r-1)(1-\lambda)}
        \right]^2,
        &
        \displaystyle
        \frac{r-1}{m}
        <
        \lambda
        \le 1.
    \end{cases}
\end{equation}
\end{widetext}

When \(\lambda>(r-1)/m\), the minimum is attained by the
two-block Schmidt spectrum
\begin{equation}\label{eq:two-block-optimal-spectrum-main}
    \boldsymbol{\mu}_{\mathrm{opt}}
    =
    \left(
        \underbrace{
        \frac{1-t}{r-1},\ldots,
        \frac{1-t}{r-1}
        }_{r-1\ \mathrm{entries}},
        \underbrace{
        \frac{t}{m-r+1},\ldots,
        \frac{t}{m-r+1}
        }_{m-r+1\ \mathrm{entries}}
    \right),  
\end{equation}
where
\begin{equation*}
    t=\mathcal{F}_r(\lambda).
\end{equation*}
Among ordered spectra, Eq.~\eqref{eq:two-block-optimal-spectrum-main} uniquely
specifies the nonzero-branch optimizer. The pure state is unique only up to
local isometries and rotations within degenerate Schmidt subspaces.
\end{lemma}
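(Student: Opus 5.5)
The plan is to swap the roles of constraint and objective. Fix the tail value $t=E_r=\sum_{i=r}^m\mu_i$, bound the largest attainable $\lambda$ at that $t$, and then invert the resulting increasing function. Write $p=r-1$ and $q=m-r+1$ as in the text.

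\textbf{Step 1: a Cauchy--Schwarz bound at fixed $t$.} Apply Cauchy--Schwarz separately to the head block and the tail block:
\[
\sum_{i=1}^{p}\sqrt{\mu_i}\le\sqrt{p(1-t)},
\qquad
\sum_{i=r}^{m}\sqrt{\mu_i}\le\sqrt{qt}.
\]
Together these give
\[
m\lambda\le g(t):=\bigl[\sqrt{p(1-t)}+\sqrt{qt}\bigr]^2 .
\]
Equality holds if and only if $\boldsymbol\mu$ is uniform on each block, which is exactly the two-block form \eqref{eq:two-block-optimal-spectrum-main}.

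\textbf{Step 2: the admissible range of $t$.} For any nonincreasing spectrum, the tail average cannot exceed the overall average. Hence $t/q\le 1/m$, that is, $t\le q/m$. This is also exactly the condition under which the two-block spectrum is ordered.

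\textbf{Step 3: monotonicity of $g$.} Check that $g$ is continuous and strictly increasing on $[0,q/m]$, with $g(0)=p$ and $g(q/m)=m$. Writing $\sqrt t=\sin\theta$, one has $g=m\cos^2(\theta-\theta_0)$ with $\sin\theta_0=\sqrt{q/m}$, and this form makes the monotonicity transparent.

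\textbf{Step 4: the nontrivial branch.} Suppose $\lambda>p/m$. Every feasible spectrum satisfies $g(t)\ge m\lambda$. By Step 3 this forces $t\ge t_*$, where $t_*\in(0,q/m]$ is the unique root of $g(t_*)=m\lambda$.

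Solving $\sqrt{m\lambda}=\sqrt{p(1-t)}+\sqrt{qt}$ is a rotation in the $(\sqrt{1-t},\sqrt t)$ plane. Its smaller-angle root is
\[
\sqrt{t_*}=\frac{\sqrt{q\lambda}-\sqrt{p(1-\lambda)}}{\sqrt m},
\]
which is nonnegative precisely when $\lambda\ge p/m$. Squaring reproduces \eqref{eq:Fr-exact-main}.

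\textbf{Step 5: attainment and uniqueness on the nontrivial branch.} The two-block spectrum with $t=t_*$ is ordered, since $t_*\le q/m$. It achieves equality in Step 1, so its $\lambda$ is exactly the prescribed value, and the minimum is attained.

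For uniqueness, let $\boldsymbol\mu$ be any optimizer. Then it has tail $t_*$ and satisfies $m\lambda=g(t_*)$. So equality holds in both Cauchy--Schwarz steps, which forces blockwise uniformity, and the spectrum is \eqref{eq:two-block-optimal-spectrum-main}. Uniqueness of the state only up to local isometries and degenerate rotations then follows from the Schmidt decomposition.

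\textbf{Step 6: the zero branch.} Suppose $1/m\le\lambda\le p/m$. It suffices to exhibit a spectrum supported on the first $p$ entries with this $\lambda$. Take the ordered path $\mu(s)=(1-s)e_1+s\,(1/p,\dots,1/p,0,\dots,0)$ for $s\in[0,1]$. Its $\lambda$ varies continuously from $1/m$ to $p/m$, so the intermediate value theorem gives zero tail for every such $\lambda$. Since $E_r\ge0$, the value on this branch is $0$.

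\textbf{Main obstacle.} The step needing the most care is Step 4: choosing the correct root of the quadratic and confirming that it lies in $[0,q/m]$. The larger root corresponds to unordered spectra, and Step 2 is what excludes it. I would handle this through the angle parametrization, because it makes both the root choice and the monotonicity immediate.
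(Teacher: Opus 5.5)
Your proposal is correct and follows essentially the same route as the paper's Appendix~A proof. Both arguments use blockwise Cauchy--Schwarz at fixed tail $t$, the ordering bound $t\le q/m$, strict monotonicity and rotation-based inversion of $[\sqrt{p(1-t)}+\sqrt{qt}]^2$, two-block attainment with uniqueness from Cauchy--Schwarz equality, and an intermediate-value path for the zero branch. The differences are cosmetic: your angle parametrization replaces the paper's derivative check and orthogonal-matrix inversion, and your interpolation path $(1-s)e_1+s(1/p,\dots,1/p,0,\dots,0)$ covers $p=1$ without the paper's separate case.
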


The second branch of \eqref{eq:Fr-exact-main} can be written as
\begin{small}
\begin{equation}
    \mathcal{F}_r(\lambda)
    =
    1-
    \frac{1}{m}
    \left[
        \sqrt{(r-1)\lambda}
        +
        \sqrt{(m-r+1)(1-\lambda)}
    \right]^2.
    \label{eq:Fr-equivalent-main}
\end{equation}
\end{small}

Appendix A proves Lemma~\ref{thm:exact-Fr-main}. After an index shift, the same
function gives the known isotropic-state Vidal convex roofs
\cite{GirardGour2017}; we do not claim a new measure or isotropic-state
evaluation.

\begin{remark}
For \(r=2\), one has \(p=1\) and \(q=m-1\). The optimal spectrum reduces to
\begin{equation*}
    \boldsymbol{\mu}_{\mathrm{opt}}
    =
    \left(
        1-t,
        \frac{t}{m-1},
        \ldots,
        \frac{t}{m-1}
    \right),
\end{equation*}
It has one distinguished probability and a uniform residual sector, with
\begin{equation*}
    \mathcal{F}_2(\lambda)
    =
    1-
    \frac{1}{m}
    \left[
        \sqrt{\lambda}
        +
        \sqrt{(m-1)(1-\lambda)}
    \right]^2,
\end{equation*}
which recovers the geometric-entanglement boundary.
\end{remark}

The mixed-state bounds use this pure-state boundary without assuming
mixed-state attainability.

\subsubsection{Convexity and convex-roof extension}

The pure-state boundary needs no convexification.

\begin{proposition}\label{prop:Fr-convex-main}
The function \(\mathcal{F}_r(\lambda)\) is continuous, nondecreasing, and convex
on \([1/m,1]\). Its lower convex envelope is unchanged:
\begin{equation*}
    \operatorname{co}
    \big[
        \mathcal{F}_r(\lambda)
    \big]
    =
    \mathcal{F}_r(\lambda),
\end{equation*}
where \(\operatorname{co}\) denotes the lower convex envelope.
\end{proposition}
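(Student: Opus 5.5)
We will work directly from the closed form in Lemma~\ref{thm:exact-Fr-main}. On the first branch $[1/m,(r-1)/m]$, $\mathcal F_r\equiv0$. On the second branch we use the representation in Eq.~\eqref{eq:Fr-exact-main}. Write $p=r-1$, $q=m-r+1$ and introduce the angle parametrization $\lambda=\cos^2\theta$ with $\theta\in[0,\arccos\sqrt{p/m}\,)$. Set $\cos\alpha=\sqrt{q/m}$ and $\sin\alpha=\sqrt{p/m}$. Then
\[
\frac{1}{\sqrt m}\Bigl[\sqrt{q\lambda}-\sqrt{p(1-\lambda)}\Bigr]
=\cos\alpha\cos\theta-\sin\alpha\sin\theta=\cos(\theta+\alpha),
\]
so that $\mathcal F_r(\lambda)=\cos^2(\theta+\alpha)$ on the second branch.

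\emph{Continuity.} At $\lambda=p/m$ the bracket equals $\sqrt{qp/m}-\sqrt{pq/m}=0$. Hence the two branches agree there, and each branch is continuous on its own closed interval.

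\emph{Monotonicity.} For $\lambda>p/m$, set $g(\lambda)=\sqrt{q\lambda}-\sqrt{p(1-\lambda)}>0$. This function is strictly increasing, since $\sqrt\lambda$ increases and $\sqrt{1-\lambda}$ decreases. Therefore $\mathcal F_r=g^2/m$ is increasing on that branch, and it is constant on the first branch.

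\emph{Convexity.} This is the main step. We differentiate $h(\lambda)=g(\lambda)^2=q\lambda+p(1-\lambda)-2\sqrt{pq}\sqrt{\lambda(1-\lambda)}$. The first two terms are affine. The remaining term is $-2\sqrt{pq}\,\sqrt{\lambda(1-\lambda)}$, and $\sqrt{\lambda(1-\lambda)}$ is concave on $[0,1]$ because it is the square root of a nonnegative concave quadratic, i.e.\ the geometric mean of the linear functions $\lambda$ and $1-\lambda$. Hence $h$ is convex on $[0,1]$, and in particular on the second branch. This avoids any sign analysis of second derivatives involving the difference of square roots.

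The remaining task is the gluing at $\lambda_0=p/m$. A function that is $0$ on $[1/m,\lambda_0]$ and convex on $[\lambda_0,1]$ is convex overall provided the right derivative at $\lambda_0$ is $\ge0$; equivalently, $\max(0,\cdot)$ of the convex extension must match. Indeed, the convex function $h/m$ on all of $[0,1]$ vanishes at $\lambda_0$, where $g=0$, and $h\ge0$ everywhere. So $h$ attains its minimum $0$ at $\lambda_0$, and $h'(\lambda_0)=0$. Thus $\mathcal F_r=h/m$ on the second branch and $0$ on the first. Since $h$ is nonnegative, convex, and minimized at $\lambda_0$, and $h$ is nonincreasing to the left of its minimizer, we get $\mathcal F_r(\lambda)=\min_{\mu\ge\lambda}h(\mu)/m$, the increasing envelope of a convex function. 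Replacing $h$ by $0$ to the left of its minimizer preserves convexity.

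\emph{Lower convex envelope.} This follows immediately from convexity. Any convex function lies below $\mathcal F_r$ and is dominated by the envelope, while $\mathcal F_r$ itself is a convex minorant, so $\operatorname{co}[\mathcal F_r]=\mathcal F_r$.

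The main obstacle is a clean convexity argument at the branch point. We resolve it via the observation that $h$ is globally convex and vanishes exactly at $\lambda_0$.
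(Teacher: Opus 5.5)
Your proof is correct. It has the same architecture as the paper's proof: the zero branch, convexity of the positive branch $h/m$ with $h=g^2$, and gluing at $\lambda_0=p/m$ through a vanishing one-sided derivative. The tools you use at each step, however, differ from the paper's.

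The paper computes $\mathcal F_r''(\lambda)=\sqrt{pq}/(2m[\lambda(1-\lambda)]^{3/2})>0$ on $(p/m,1)$. It then asserts $\mathcal F_r'(p/m^+)=0$ and deduces monotonicity from these two facts. You instead obtain convexity on all of $[0,1]$ from concavity of the geometric mean $\sqrt{\lambda(1-\lambda)}$. You get monotonicity directly from $g$ being positive and strictly increasing. You get the zero derivative from the fact that $h=g^2\geq0$ attains its minimum at the interior point $\lambda_0$.

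Your route is derivative-free and covers the endpoint $\lambda=1$, where $h$ is not differentiable, without a separate continuity extension. It also actually justifies the branch-point derivative, which the paper only states. The paper's explicit computation buys strict convexity and an explicit curvature on the nonzero branch, which the proposition does not require.

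Two cosmetic points. The angle parametrization $\lambda=\cos^2\theta$ is never used and can be dropped. The envelope sentence should say that every convex minorant of $\mathcal F_r$ is dominated by $\operatorname{co}[\mathcal F_r]$; since $\mathcal F_r$ is itself a convex minorant, this gives $\mathcal F_r\le\operatorname{co}[\mathcal F_r]\le\mathcal F_r$. As written, "any convex function lies below $\mathcal F_r$" is not what you mean.
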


\begin{proof}
The function vanishes on \([1/m,p/m]\). On \((p/m,1)\),
\begin{equation*}
    \mathcal{F}_r(\lambda)
    =
    \frac{
        p+(q-p)\lambda
        -
        2\sqrt{pq\,\lambda(1-\lambda)}
    }{m}.
\end{equation*}
Its second derivative is
\begin{equation*}
    \mathcal{F}_r''(\lambda)
    =
    \frac{\sqrt{pq}}
    {2m[\lambda(1-\lambda)]^{3/2}}
    >0.
\end{equation*}
Since \(\mathcal F_r'((r-1)/m^+)=0\) and
\(\mathcal F_r''(\lambda)>0\) on the nonzero branch,
\(\mathcal F_r'(\lambda)\geq0\) there. The right derivative at
\(\lambda=p/m\) equals the zero derivative of the constant branch.
Continuity at the branch point completes the monotonicity and convexity proof.
\end{proof}

The boundary values are
\begin{equation*}
    \mathcal{F}_r\left(\frac{r-1}{m}\right)=0,
    \qquad
    \mathcal{F}_r(1)=\frac{m-r+1}{m}.
\end{equation*}
For fixed \(\lambda\), the hierarchy satisfies
\begin{equation*}
    \mathcal{F}_2(\lambda)
    \ge
    \mathcal{F}_3(\lambda)
    \ge
    \cdots
    \ge
\mathcal{F}_m(\lambda)
    \ge0.
\end{equation*}
The ordering follows because $E_r\geq E_{r+1}$ pointwise on the common
fixed-$\lambda$ feasible set. Taking the corresponding minima preserves it.

\subsubsection{Fixed-fidelity inverse and exact-fidelity refinements}

Choose local Schmidt isometries
$V_A:\mathbb C^m\to\mathcal H_n$ and
$V_B:\mathbb C^m\to\mathcal H_m$ such that
\[
|\phi\rangle
=(V_A\otimes V_B)\sum_{i=1}^{m}\sqrt{\nu_i}|ii\rangle.
\]
Write $F_\phi(\rho):=\langle\phi|\rho|\phi\rangle$.
The ordered reference spectrum implies
$A_r\geq(r-1)/m$ and $B_r\leq(m-r+1)/m$. For $0\leq F\leq1$, set
\begin{equation}\label{eq:reference-spectrum-boundary}
\mathcal R_{r,\phi}(F)
:=
\begin{cases}
0, & 0\leq F\leq A_r,\\[0.8em]
\bigl[\sqrt{B_rF}-\sqrt{A_r(1-F)}\bigr]^2,
& A_r<F\leq1.
\end{cases}
\end{equation}
When only the threshold matters, write $\mathcal R_A(F)$ for
Eq.~\eqref{eq:reference-spectrum-boundary} with $A_r=A$ and $B_r=1-A$.
Equivalently,
\[
\mathcal R_{r,\phi}(F)=
\sin^2\!\left(
\left[\arccos\sqrt{A_r}-\arccos\sqrt F\right]_+
\right),
\]
so Proposition~\ref{prop:general-bures-bound} is precisely
$E_r(\rho)\geq\mathcal R_{r,\phi}(F_\phi(\rho))$.

If $B_r=0$, or $\operatorname{SR}(|\phi\rangle)<r$, then $A_r=1$ and
$\mathcal R_{r,\phi}=0$. Formulas dividing by $B_r$ apply only to the nonzero
branch, where $F>A_r$ implies $B_r>0$.

\begin{proposition}[Inverse formulation of the established single-tail curve]
\label{thm:single-observable-Er-main}
Fix $2\leq r\leq m$ and the pure reference $|\phi\rangle$ above.

For every $F\in[0,1]$,
\[
\min_{\substack{|\psi\rangle\in\mathcal H_n\otimes\mathcal H_m\\
\langle\psi|\psi\rangle=1\\
|\langle\phi|\psi\rangle|^2=F}}
E_r(|\psi\rangle)
=\mathcal R_{r,\phi}(F).
\]
For $F>A_r$, this is the inverse of
Eq.~\eqref{eq:wlz-characteristic-curve}, and an aligned optimizer has the
ordered Schmidt probabilities
\begin{equation}\label{eq:reference-spectrum-optimizer}
\mu_i^*(F)=
\begin{cases}
\displaystyle\frac{1-t}{A_r}\nu_i, & 1\leq i\leq r-1,\\[0.9em]
\displaystyle\frac{t}{B_r}\nu_i, & r\leq i\leq m,
\end{cases}
\qquad
t=\mathcal R_{r,\phi}(F).
\end{equation}
This positive-branch optimizer is equivalent to Eq.~(63) of
Ref.~\cite{WuLiZhu2026}. For $0\leq F\leq A_r$, a pure state of Schmidt rank at
most $r-1$ attains the exact fidelity and the zero value. The function
$\mathcal R_{r,\phi}$ is continuous, nondecreasing, and convex on $[0,1]$.
Every
$\rho\in\mathcal D(\mathcal H_n\otimes\mathcal H_m)$ satisfies
\begin{equation}\label{eq:reference-spectrum-mixed-bound}
E_r(\rho)\geq
\mathcal R_{r,\phi}\bigl(F_\phi(\rho)\bigr).
\end{equation}
The right-hand side is positive if and only if $F_\phi(\rho)>A_r$.
\end{proposition}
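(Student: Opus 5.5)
We reduce the fixed-fidelity problem to a spectral problem and then use Proposition~\ref{prop:general-bures-bound} for the lower bound and an explicit aligned state for attainment.

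\emph{Lower bound.} For a pure $|\psi\rangle$ with $|\langle\phi|\psi\rangle|^2=F$, apply Proposition~\ref{prop:general-bures-bound} to $\rho=|\psi\rangle\langle\psi|$. This gives $E_r(|\psi\rangle)\geq\mathcal R_{r,\phi}(F)$, using the stated identity $\mathcal R_{r,\phi}(F)=\sin^2([\arccos\sqrt{A_r}-\arccos\sqrt F]_+)$. That identity is a short trigonometric check. Write $\alpha=\arccos\sqrt{A_r}$ and $\beta=\arccos\sqrt F$. For $F>A_r$ we have $\beta<\alpha$, and
\[
\sin(\alpha-\beta)=\sqrt{B_r}\sqrt F-\sqrt{A_r}\sqrt{1-F}\geq0 .
\]
For $F\leq A_r$ the positive part vanishes. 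The same argument applied to arbitrary $\rho$ gives Eq.~\eqref{eq:reference-spectrum-mixed-bound} directly. Since $\mathcal R_{r,\phi}(F)=0$ iff $F\leq A_r$, the positivity statement follows.

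\emph{Attainment, positive branch.} Work in the Schmidt frame of $|\phi\rangle$ and take
\[
|\psi\rangle=(V_A\otimes V_B)\sum_i\sqrt{\mu_i^*}|ii\rangle
\]
with $\mu^*$ from Eq.~\eqref{eq:reference-spectrum-optimizer}. The overlap is
\[
\sum_i\sqrt{\mu_i^*\nu_i}=\sqrt{(1-t)A_r}+\sqrt{tB_r},
\]
whose square is $G_{r,\phi}(t)$ from Eq.~\eqref{eq:wlz-characteristic-curve}. With $t=\mathcal R_{r,\phi}(F)$, the relation $\alpha-\beta=\arcsin\sqrt t$ inverts to $\sqrt F=\cos(\alpha-\arcsin\sqrt t)=\sqrt{A_r(1-t)}+\sqrt{B_rt}$, so the overlap equals $F$.

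The delicate point is that $E_r(|\psi\rangle)=t$ requires $\mu^*$ to be nonincreasing, so that the first $r-1$ entries are the largest. Within each block this holds because $\nu$ is ordered. Across the boundary we need $(1-t)\nu_{r-1}/A_r\geq t\nu_r/B_r$. We will get this from $t\leq B_r$, which follows from $\beta\geq0$ and so $\sin^2(\alpha-\beta)\leq\sin^2\alpha=B_r$. That gives $(1-t)/A_r\geq1\geq t/B_r$, and combined with $\nu_{r-1}\geq\nu_r$ the ordering holds. Hence $E_r(|\psi\rangle)=\sum_{i\geq r}\mu_i^*=t$. This also shows that the positive branch is exactly the inverse of Eq.~\eqref{eq:wlz-characteristic-curve} on $t<B_r$.

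\emph{Zero branch and regularity.} For $0\leq F\leq A_r$ we will interpolate to a Schmidt-rank-$(r-1)$ state realizing overlap $F$. Take the normalized truncation $|\phi_{r-1}\rangle\propto\sum_{i<r}\sqrt{\nu_i}|ii\rangle$, which has overlap $A_r$. Pick a product state $|\chi\rangle$ orthogonal to $|\phi\rangle$, such as a vector in the local complement or a suitable combination of $|ij\rangle$ with $i\neq j$; this requires $m\geq2$. Then
\[
|\psi_\theta\rangle=\cos\theta|\phi_{r-1}\rangle+\sin\theta|\chi'\rangle,
\]
with $|\chi'\rangle$ supported on the same rank-$(r-1)$ local subspaces and orthogonal to $|\phi\rangle$, sweeps the overlap continuously from $A_r$ down to $0$ while keeping Schmidt rank at most $r-1$. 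For example, take $|\chi'\rangle=|1\rangle|2\rangle$ when $r\geq3$; for $r=2$, use product states $|a\rangle|b\rangle$ and vary $|b\rangle$ to get overlaps from $\nu_1$ down to $0$.

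Finally, continuity, monotonicity and convexity of $\mathcal R_{r,\phi}$ on $[0,1]$ are checked as in Proposition~\ref{prop:Fr-convex-main}. On the positive branch,
\[
\mathcal R_{r,\phi}(F)=A_r+(B_r-A_r)F-2\sqrt{A_rB_rF(1-F)},
\]
whose second derivative $\sqrt{A_rB_r}/\bigl(2[F(1-F)]^{3/2}\bigr)$ is positive. Its derivative at $F=A_r^+$ is $(B_r-A_r)-\sqrt{A_rB_r}(1-2A_r)/\sqrt{A_rB_r}=0$, so the two branches glue $C^1$.

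The main obstacles are the ordering check for $\mu^*$ and constructing the zero-branch family for small $r$.
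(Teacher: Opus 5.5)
Your proof is correct. For the lower bound, the positive-branch attainment and the regularity of $\mathcal R_{r,\phi}$, it is the paper's argument. The paper also invokes Proposition~\ref{prop:general-bures-bound} and uses the same proportional two-block state, written there as the geodesic $\cos(\alpha_r-\delta)|\chi_r\rangle+\sin(\alpha_r-\delta)|\eta_r\rangle$, whose Schmidt probabilities are exactly your $\mu^*$. It checks the ordering through $t\leq B_r$ and proves convexity from the same second derivative and the vanishing right derivative at $F=A_r$.

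The only real divergence is the exact-fidelity zero branch.
\begin{itemize}
\item The paper applies $I\otimes U(s)$ along a continuous unitary path from $I$ to a fixed-point-free cyclic shift. Schmidt rank is then preserved automatically, and the intermediate value theorem yields every $F\in[0,A_r]$.
\item You instead write down the explicit family $\cos\theta|\phi_{r-1}\rangle+\sin\theta|1\rangle|2\rangle$. Its squared overlap is $A_r\cos^2\theta$ in closed form, so you need neither path-connectedness of the unitary group nor the intermediate value theorem.
\end{itemize}

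Three small fixes:
\begin{itemize}
\item State that $\langle\phi_{r-1}|\chi'\rangle=0$, which is what makes $|\psi_\theta\rangle$ normalized; it is immediate here.
\item Drop the unused product state $|\chi\rangle$.
\item The separate $r=2$ case is unnecessary. Fixing only the $A$-side support to $\operatorname{span}\{|1\rangle,\ldots,|r-1\rangle\}$ already bounds the Schmidt rank by $r-1$. So $|\chi'\rangle=|1\rangle|2\rangle$ works uniformly for all $2\leq r\leq m$, and for $r=2$ it reproduces your product-state rotation.
\end{itemize}
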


\begin{proof}
Proposition~\ref{prop:general-bures-bound} gives the lower bound. For $F>A_r$,
the strict branch of $G_{r,\phi}$ in
Eq.~\eqref{eq:wlz-characteristic-curve} is invertible. Its pure optimizer has
$t=\mathcal R_{r,\phi}(F)$, which proves the positive-branch equality and
Eq.~\eqref{eq:reference-spectrum-optimizer}. An equivalent geometric
parametrization uses the normalized head and tail truncations
\[
|\chi_r\rangle=A_r^{-1/2}
\sum_{i=1}^{r-1}\sqrt{\nu_i}|ii\rangle,
\qquad
|\eta_r\rangle=B_r^{-1/2}
\sum_{i=r}^{m}\sqrt{\nu_i}|ii\rangle.
\]
Writing
\[
\alpha_r=\arccos\sqrt{A_r},
\qquad
\delta=\arccos\sqrt F,
\]
one has $0\leq\delta<\alpha_r$ and
$|\phi\rangle=\cos\alpha_r|\chi_r\rangle
+\sin\alpha_r|\eta_r\rangle$. The geodesic state
\begin{equation}
|\psi_F\rangle=
\cos(\alpha_r-\delta)|\chi_r\rangle
+\sin(\alpha_r-\delta)|\eta_r\rangle
\label{eq:positive-geodesic-state}
\end{equation}
satisfies
\[
\begin{aligned}
|\langle\phi|\psi_F\rangle|^2&=\cos^2\delta=F,\\
E_r(|\psi_F\rangle)&=\sin^2(\alpha_r-\delta)
=\mathcal R_{r,\phi}(F).
\end{aligned}
\]
This is the same proportional two-block optimizer in Bures-angle coordinates.
Since $t=\mathcal R_{r,\phi}(F)\leq B_r$, the inequalities
$(1-t)/A_r\geq t/B_r$ and $\nu_{r-1}\geq\nu_r$ preserve its ordering.

It remains to complete the exact-fidelity zero branch, which does not follow
from the maximum-fidelity statement alone. Define the normalized
rank-$(r-1)$ truncation
\[
|\chi\rangle=A_r^{-1/2}
\sum_{i=1}^{r-1}\sqrt{\nu_i}|ii\rangle.
\]
Its squared overlap with $|\phi\rangle$ is $A_r$. A fixed-point-free cyclic
shift $S$ on the second subsystem satisfies
$\langle\phi|(I\otimes S)|\chi\rangle=0$. Path connectedness of the unitary
group supplies a continuous path from $I$ to $S$, along which every state has
Schmidt rank at most $r-1$. Its squared overlap starts at $A_r$ and ends at $0$.
The intermediate value theorem covers all $F\in[0,A_r]$ and proves the zero
branch.

Direct differentiation on $A_r<F<1$ gives
\[
\mathcal R_{r,\phi}''(F)
=\frac{\sqrt{A_rB_r}}
{2[F(1-F)]^{3/2}}>0,
\]
The right derivative at $F=A_r$ is zero, making the extension
continuous, nondecreasing, and convex. Equation
\eqref{eq:reference-spectrum-mixed-bound} is Proposition
\ref{prop:general-bures-bound} in algebraic form.
\end{proof}

\paragraph*{Scope and provenance.}
The convex-roof--fidelity identity and Bures-angle function are established.
The positive characteristic boundary, pure optimizer, mixed-state sublevel
optimum, and finite-dimensional independence are due to Wu, Li, and Zhu
\cite{WuLiZhu2026}. Proposition~\ref{thm:single-observable-Er-main} records the
inverse fixed-fidelity form and adds the exact-fidelity pure completion of the
zero branch. It does not classify all equality states.

Equation~(63) of Ref.~\cite{WuLiZhu2026} exhibits a positive-branch optimizer
but does not state an if-and-only-if condition for its ordered spectrum. The
next proposition records that spectrum-level equality refinement.

\begin{proposition}[Uniqueness of the positive-branch ordered spectrum]
\label{prop:positive-spectrum-uniqueness}
Let $A_r<F\leq1$. Every pure state satisfying
$|\langle\phi|\psi\rangle|^2=F$ and
$E_r(|\psi\rangle)=\mathcal R_{r,\phi}(F)$ has the ordered Schmidt
probabilities in Eq.~\eqref{eq:reference-spectrum-optimizer}. This statement
does not assert uniqueness of the state vector or its Schmidt bases.
\end{proposition}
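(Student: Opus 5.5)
We need to show that equality forces the spectrum in Eq.~\eqref{eq:reference-spectrum-optimizer}. The plan is to bound the overlap by a two-term Cauchy--Schwarz estimate and then trace which inequalities must be tight. Write $\mu_i$ for the ordered Schmidt probabilities of $|\psi\rangle$ and set $t=E_r(|\psi\rangle)=\sum_{i\geq r}\mu_i$.

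\emph{Step 1 (overlap bound).} Von Neumann's trace inequality applied to the coefficient matrices gives
\[
\sqrt F=|\langle\phi|\psi\rangle|\leq\sum_i\sqrt{\nu_i\mu_i}.
\]
We split this sum into the head $i\leq r-1$ and the tail $i\geq r$. Cauchy--Schwarz on each part then yields
\[
\sqrt F\leq\sqrt{A_r(1-t)}+\sqrt{B_rt}.
\]
Equivalently, $\sqrt F\leq\cos(\alpha_r-\tau)$, where $\tau=\arcsin\sqrt t$ and $\alpha_r=\arccos\sqrt{A_r}$. Since $F>A_r$, we have $\delta:=\arccos\sqrt F<\alpha_r$.

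\emph{Step 2 (equality forces each inequality to be tight).} The equality hypothesis gives $t=\sin^2(\alpha_r-\delta)$, that is, $\tau=\alpha_r-\delta$, so $\cos(\alpha_r-\tau)=\sqrt F$. Hence the chain in Step~1 holds with equality throughout. Two consequences follow.

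First, Cauchy--Schwarz equality in the head requires $\sqrt{\mu_i}\propto\sqrt{\nu_i}$ for $i\leq r-1$. Because $A_r>0$, this gives $\mu_i=(1-t)\nu_i/A_r$. Here we must be careful about which inequality is being saturated: equality in $\sum_{i<r}\sqrt{\nu_i\mu_i}\leq\sqrt{A_r(1-t)}$ forces exact proportionality, including any zero entries.

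Second, for the tail, $B_r>0$ because $F>A_r$, and the same argument gives $\mu_i=t\nu_i/B_r$ for $i\geq r$. If $t=0$, then $F\leq A_r$, a contradiction. So $t>0$ and the tail proportionality constant is well defined.

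\emph{Step 3 (main obstacle: the partition of indices).} Step~1 implicitly pairs the $i$th largest $\mu$ with the $i$th largest $\nu$, and uses the fact that $\{\mu_i\}_{i<r}$ are exactly the $r-1$ largest values. This is valid because von Neumann's inequality pairs sorted singular values, and $t$ is defined as the tail of the sorted spectrum. So the head and tail sets coincide with the ordered index sets, and the splitting is legitimate.

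It remains to check that the resulting vector is nonincreasing, and hence genuinely the ordered spectrum. This holds because $t\leq B_r$ implies $(1-t)/A_r\geq t/B_r$, as already noted in the proof of Proposition~\ref{thm:single-observable-Er-main}.

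Finally, we identify $t$ with $\mathcal R_{r,\phi}(F)$ and conclude. Only the spectrum is determined, since the saturation of von Neumann's inequality constrains singular vectors only up to degeneracies, as the statement notes.
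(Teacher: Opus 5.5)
Your proposal is correct and follows the paper's proof essentially step for step: the von Neumann trace inequality and blockwise Cauchy--Schwarz give $\sqrt F\leq\sqrt{A_r(1-t)}+\sqrt{B_rt}$, the hypothesis $t=\mathcal R_{r,\phi}(F)$ closes the chain, and equality in both blockwise Cauchy--Schwarz steps forces the proportional head and tail spectra. Your Bures-angle rewriting $\sqrt{A_r(1-t)}+\sqrt{B_rt}=\cos(\alpha_r-\tau)$ makes explicit the closing identity that the paper only asserts, and your final ordering check is redundant because the $\mu_i$ were ordered from the start, though it does no harm.
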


\begin{proof}
Let $\{\mu_i\}$ be the ordered Schmidt probabilities of a saturating state and
set $t=\sum_{i=r}^{m}\mu_i$. The von Neumann trace inequality
\cite{horn2012matrix} and Cauchy--Schwarz on the two blocks give
\[
\sqrt F\leq\sum_{i=1}^{m}\sqrt{\nu_i\mu_i}
\leq\sqrt{A_r(1-t)}+\sqrt{B_rt}=\sqrt F.
\]
The last equality uses $t=\mathcal R_{r,\phi}(F)$, so both blockwise
Cauchy--Schwarz inequalities must be equalities. The head vector
$(\sqrt{\mu_i})_{i<r}$ is proportional to
$(\sqrt{\nu_i})_{i<r}$, and the same is true for the two tail vectors.
Normalizing their squared norms to $1-t$ and $t$ gives
Eq.~\eqref{eq:reference-spectrum-optimizer}. Degenerate Schmidt values can still
permit rotations of the Schmidt bases, so no state-level uniqueness follows.
\end{proof}

\begin{corollary}[Fixed-fidelity value and Fenchel dual]
\label{cor:fixed-fidelity-value-dual}
For $F\in[0,1]$, define the primal value function
\begin{equation}\label{eq:fixed-fidelity-primal-value}
\mathcal B_{r,\phi}(F)
:=
\inf_{\substack{\rho\in\mathcal D(\mathcal H_n\otimes\mathcal H_m)\\
F_\phi(\rho)=F}}E_r(\rho).
\end{equation}
Then
\begin{equation}\label{eq:fixed-fidelity-exact-value}
\mathcal B_{r,\phi}(F)=\mathcal R_{r,\phi}(F),
\qquad 0\leq F\leq1,
\end{equation}
and the infimum is attained by a pure state for every $F$. Define the
one-observable Fenchel conjugate
\[
\widehat E_{r,\phi}(s)
:=
\sup_{\rho\in\mathcal D(\mathcal H_n\otimes\mathcal H_m)}
\bigl\{sF_\phi(\rho)-E_r(\rho)\bigr\}.
\]
With $A=A_r$, it is
\begin{equation}\label{eq:fixed-fidelity-fenchel-conjugate}
\widehat E_{r,\phi}(s)=
\begin{cases}
0, & s\leq0,\\[0.4em]
\displaystyle
\frac{s-1+\sqrt{(s-1)^2+4As}}{2}, & s>0.
\end{cases}
\end{equation}
It satisfies
\begin{equation}\label{eq:fixed-fidelity-biconjugate}
\mathcal R_{r,\phi}(F)
=\sup_{s\in\mathbb R}
\bigl\{sF-\widehat E_{r,\phi}(s)\bigr\}.
\end{equation}
\end{corollary}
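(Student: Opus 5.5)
The plan is to prove the three assertions in order. First the exact value of $\mathcal B_{r,\phi}$, then the closed form of the conjugate, and finally the biconjugate identity from convexity.

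\emph{Exact value.} For the lower bound, Eq.~\eqref{eq:reference-spectrum-mixed-bound} of Proposition~\ref{thm:single-observable-Er-main} gives $E_r(\rho)\ge\mathcal R_{r,\phi}(F_\phi(\rho))=\mathcal R_{r,\phi}(F)$ for every feasible $\rho$ in Eq.~\eqref{eq:fixed-fidelity-primal-value}. Hence $\mathcal B_{r,\phi}(F)\ge\mathcal R_{r,\phi}(F)$. For the upper bound and attainment, Proposition~\ref{thm:single-observable-Er-main} supplies, for every $F\in[0,1]$, a pure state with $|\langle\phi|\psi\rangle|^2=F$ and $E_r(|\psi\rangle)=\mathcal R_{r,\phi}(F)$. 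On the positive branch this is the geodesic state~\eqref{eq:positive-geodesic-state}; on the zero branch it is the continuous rank-$(r-1)$ path. Since $|\psi\rangle\langle\psi|$ is feasible and the convex roof of a pure state is its pure value, Eq.~\eqref{eq:fixed-fidelity-exact-value} follows and the infimum is attained.

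\emph{Conjugate.} Grouping states by their fidelity and using the exact value, I reduce the conjugate to one variable:
\[
\widehat E_{r,\phi}(s)=\sup_{F\in[0,1]}\bigl\{sF-\mathcal R_{r,\phi}(F)\bigr\}.
\]
For $s\le0$, both terms are nonpositive because $\mathcal R_{r,\phi}\ge0$, and $F=0$ gives the value $0$.

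For $s>0$, I will use the comparison
\[
\mathcal R_{r,\phi}(F)\ge 1-\bigl[\sqrt{AF}+\sqrt{B(1-F)}\bigr]^2 .
\]
This is an equality on $(A,1]$, by the analogue of Eq.~\eqref{eq:Fr-equivalent-main}. On $[0,A]$ it holds because Cauchy--Schwarz makes the bracket at most $1$. Set $x=(\sqrt F,\sqrt{1-F})$ and $v=(\sqrt A,\sqrt B)$. Then
\[
sF+\bigl[\sqrt{AF}+\sqrt{B(1-F)}\bigr]^2=x^{T}Mx,
\qquad
M=\operatorname{diag}(s,0)+vv^{T}.
\]
This gives the upper bound $\widehat E_{r,\phi}(s)\le\lambda_{\max}(M)-1$.

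The eigenvalue is computed from $\operatorname{Tr}M=s+1$ and $\det M=sB$, which gives
\[
\lambda_{\max}(M)-1=\tfrac12\Bigl[s-1+\sqrt{(s-1)^2+4As}\Bigr],
\]
after using $1-B=A$. This is Eq.~\eqref{eq:fixed-fidelity-fenchel-conjugate}.

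\emph{Main obstacle: attainment.} The delicate step is showing this upper bound is actually reached. I must check three things about the top eigenvector $x^*$ of $M$:
\begin{itemize}
\item it can be chosen with nonnegative entries, which holds by Perron--Frobenius since $M$ is entrywise nonnegative, so $x^*=(\sqrt{F^*},\sqrt{1-F^*})$ for some $F^*\in[0,1]$;
\item $F^*>A$ when $s>0$, so that the comparison inequality is an equality at $F^*$;
\item the degenerate cases $B=0$ and $A=1$ are handled separately.
\end{itemize}
For the second point, $M$ adds $s>0$ only to the first diagonal entry of $vv^{T}$. The Rayleigh quotient of $(\sqrt A,\sqrt B)$ equals $1+sA$, and tilting toward the first coordinate strictly increases it. So the maximizer has first squared coordinate strictly above $A$; a direct check is that the eigenvector ratio $x_1^2/x_2^2$ exceeds $A/B$. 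Once $F^*>A$ is established, the bound is attained at $F=F^*$.

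\emph{Biconjugate.} Extend $\mathcal R_{r,\phi}$ by $+\infty$ outside $[0,1]$. Proposition~\ref{thm:single-observable-Er-main} shows it is continuous and convex on the compact interval, so the extension is proper, convex and lower semicontinuous. The Fenchel--Moreau theorem then gives $\mathcal R_{r,\phi}=\mathcal R_{r,\phi}^{**}$. Since $\mathcal R_{r,\phi}^{*}=\widehat E_{r,\phi}$ by the reduction above, this is exactly Eq.~\eqref{eq:fixed-fidelity-biconjugate}.
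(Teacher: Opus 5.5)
There is a genuine gap in your upper bound for $s>0$, although the route itself differs from the paper's and works once repaired. Your exact-value and biconjugate steps coincide with the paper's. For the conjugate at $s>0$, the paper uses strict convexity of the positive branch and solves $\mathcal R_{r,\phi}'(F)=s$ for $F_s$. You use a $2\times2$ eigenvalue computation instead, and your top eigenvector reproduces the same point, $x_1^{*2}=\tfrac12\bigl[1+(s+2A-1)/\sqrt{(s-1)^2+4As}\bigr]$.

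\emph{The gap.} The comparison inequality on which your upper bound rests is false on $[0,A)$. Put $\alpha=\arccos\sqrt A$ and $\delta=\arccos\sqrt F$. Then the bracket is $\sqrt{AF}+\sqrt{B(1-F)}=\cos(\alpha-\delta)$, so $1-[\sqrt{AF}+\sqrt{B(1-F)}]^2=\sin^2(\alpha-\delta)$. This is strictly positive for $F<A$, where $\mathcal R_{r,\phi}(F)=0$. Cauchy--Schwarz (bracket at most $1$) gives exactly the opposite inequality. For example, $A=B=1/2$ and $F=0$ would require $0\geq1/2$. The correct statement is $\mathcal R_{r,\phi}(F)\leq1-[\sqrt{AF}+\sqrt{B(1-F)}]^2$ on all of $[0,1]$, with equality precisely on $[A,1]$. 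Hence $\widehat E_{r,\phi}(s)\leq\lambda_{\max}(M)-1$ does not follow from the argument as written.

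\emph{The repair.} It is short, but the difficulty sits on the opposite side from where you placed it.
\begin{itemize}
\item Upper bound: for $s>0$, the objective on $[0,A]$ is $sF\leq sA$. At $F=A$ one has $x=v$, so $sA=v^{T}Mv-1\leq\lambda_{\max}(M)-1$. On $[A,1]$ your identity holds, so there $sF-\mathcal R_{r,\phi}(F)=x^{T}Mx-1\leq\lambda_{\max}(M)-1$.
\item Lower bound: this comes for free from the corrected inequality. For every $F\in[0,1]$, $sF-\mathcal R_{r,\phi}(F)\geq x^{T}Mx-1$ with $x=(\sqrt F,\sqrt{1-F})$. Because $M$ is entrywise nonnegative, $x^{T}Mx\leq|x|^{T}M|x|$, so the maximum over unit vectors is attained at a nonnegative vector. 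This gives $\widehat E_{r,\phi}(s)\geq\lambda_{\max}(M)-1$.
\end{itemize}
Your check that $F^*>A$ is correct, since it is equivalent to $\lambda_{\max}>1$, but it is not needed.

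\emph{Comparison of routes.} With this fix, your approach gives Eq.~\eqref{eq:fixed-fidelity-fenchel-conjugate} without calculus, directly from $\operatorname{Tr}M=s+1$ and $\det M=sB$, and it identifies the maximizer as a Perron vector. The paper's derivative inversion produces $F_s$ explicitly, but it relies on convexity and bookkeeping at the branch point $F=A$.
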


\begin{proof}
Equation~\eqref{eq:reference-spectrum-mixed-bound} gives
$\mathcal B_{r,\phi}(F)\geq\mathcal R_{r,\phi}(F)$. The pure equality states from
Proposition~\ref{thm:single-observable-Er-main} are feasible in
Eq.~\eqref{eq:fixed-fidelity-primal-value}, proving the reverse inequality and
attainment.

Partition the supremum defining $\widehat E_{r,\phi}$ by fidelity and apply
Eq.~\eqref{eq:fixed-fidelity-exact-value}:
\[
\widehat E_{r,\phi}(s)
=\sup_{0\leq F\leq1}
\{sF-\mathcal R_{r,\phi}(F)\}.
\]
For $s\leq0$, the value is $0$ and is attained at $F=0$. When $A=1$,
$\mathcal R_{r,\phi}=0$, so the value for $s>0$ is $s$, in agreement with
Eq.~\eqref{eq:fixed-fidelity-fenchel-conjugate}. It remains to consider
$0<A<1$. Strict convexity of the positive branch gives a unique maximizer for
$s>0$. Solving
$\mathcal R_{r,\phi}'(F)=s$ yields
\[
F_s=\frac12\left[
1+\frac{s+2A-1}{\sqrt{(s-1)^2+4As}}
\right],
\]
and substitution gives Eq.~\eqref{eq:fixed-fidelity-fenchel-conjugate}.
Extend $\mathcal R_{r,\phi}$ to $\mathbb R$ by assigning $+\infty$ outside
$[0,1]$. The extension is proper, lower semicontinuous, and convex, so the
Fenchel--Moreau theorem gives Eq.~\eqref{eq:fixed-fidelity-biconjugate}. This
derivation does not assume a separate strong-duality statement.
\end{proof}

On the positive branch, Eq.~\eqref{eq:fixed-fidelity-exact-value} is implicit by
inversion of Proposition~3 in Ref.~\cite{WuLiZhu2026}; the exact-fidelity
zero-branch construction completes the statement on $[0,A_r]$. Thus the
all-state value is established; the additional content here is the
exact-fidelity zero-branch completion and the equality refinements stated above.
Classifying strictly mixed or full-rank equality states for nonmaximally
entangled references remains open. We call a mixed family saturating only when
its equality is verified.

\subsection{Multistep Schmidt-spectrum profiles}

A single $E_r$ resolves one Schmidt-rank threshold and falls within the
characteristic-curve baseline above. Several nested tails instead retain a
coarse-grained distribution of Schmidt weight. The results in this subsection
concern that multiscale information and are not obtained by inverting a
single-tail curve. One projector expectation does not supply these inputs.
Vidal monotones and majorization methods describe the complete ordered tail
hierarchy \cite{VidalMonotones,Bosyk2017,MarshallOlkinArnold2011}.
Theorem~\ref{thm:joint-multistep-main} gives an exact multistep boundary
that combines several selected tails with an if-and-only-if blockwise equality
condition and a partition-refinement hierarchy.

Let
\begin{equation*}
    \mathcal{P}
    =
    \{0=n_0<n_1<\cdots<n_L=m\}
\end{equation*}
be a partition of the ordered Schmidt spectrum into consecutive blocks
\[
    B_a=\{n_{a-1}+1,\ldots,n_a\},
    \qquad
    d_a=n_a-n_{a-1}.
\]
For each block, define the block weight
\begin{equation*}
    w_a=\sum_{i\in B_a}\mu_i,
    \qquad
    \sum_{a=1}^{L}w_a=1.
\end{equation*}
Ordered Schmidt probabilities impose
\[
\frac{w_a}{d_a}\geq\frac{w_{a+1}}{d_{a+1}},
\qquad a=1,\ldots,L-1.
\]
The blockwise-uniform spectrum $\mu_i=w_a/d_a$ realizes every such vector, so
\[
\mathcal W_{\mathcal P}^{\rm ord}
:=\left\{\boldsymbol w:\ w_a\geq0,\ \sum_aw_a=1,\
\frac{w_a}{d_a}\geq\frac{w_{a+1}}{d_{a+1}}\right\}.
\]
For each boundary \(n_j\), define the cumulative tail
\begin{equation*}
\begin{aligned}
    T_j
    &:=E_{n_j+1}(|\psi\rangle)
    =\sum_{i=n_j+1}^{m}\mu_i\\
    &=\sum_{a=j+1}^{L}w_a,\\
    &\hspace{-1.8em}T_0=1,\qquad T_L=0.
\end{aligned}
\end{equation*}
The differences recover the block weights:
\[
    w_a=T_{a-1}-T_a .
\]
The profile
\[
    \boldsymbol{T}_{\mathcal{P}}=(T_1,\ldots,T_{L-1})
\]
records the selected scales; distinct spectra may share it.
We call this profile admissible when the induced block masses
\(w_a=T_{a-1}-T_a\) belong to
\(\mathcal W_{\mathcal P}^{\rm ord}\). Equivalently, they are nonnegative,
sum to one, and have nonincreasing block means \(w_a/d_a\). These conditions
make the blockwise-uniform construction below a valid ordered Schmidt spectrum,
including profiles with zero block masses.

\begin{theorem}[Exact blockwise boundary for multistep profiles]
\label{thm:joint-multistep-main}
For every pure bipartite state and every partition \(\mathcal{P}\),
\begin{equation}
    \lambda
    \le
    \Lambda_{\mathcal{P}}
    (\boldsymbol{T}_{\mathcal{P}})
    :=
    \frac{1}{m}
    \left[
        \sum_{a=1}^{L}
        \sqrt{
            d_a
            \big(
                T_{a-1}-T_a
            \big)
        }
    \right]^2.
    \label{eq:partition-profile-bound-main}
\end{equation}
Conversely, every admissible profile is realized by the ordered spectrum
\(\mu_i=w_a/d_a\) for \(i\in B_a\), which attains the bound. For any state with
the prescribed profile, equality holds if and only if its Schmidt spectrum is
uniform within every block:
\begin{equation*}
    \mu_i=\frac{w_a}{d_a},
    \qquad
    i\in B_a.
\end{equation*}
\end{theorem}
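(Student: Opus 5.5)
The plan is a blockwise Cauchy--Schwarz argument on the amplitude vector $(\sqrt{\mu_i})$, followed by an explicit construction showing the bound is attained. Fix a pure state with ordered Schmidt probabilities $\mu_1\geq\cdots\geq\mu_m\geq0$ and a partition $\mathcal P$. The first step splits the sum over all indices into blocks:
\[
\sum_{i=1}^m\sqrt{\mu_i}=\sum_{a=1}^L\sum_{i\in B_a}\sqrt{\mu_i}.
\]
Within each block, Cauchy--Schwarz against the all-ones vector of length $d_a$ gives
\[
\sum_{i\in B_a}\sqrt{\mu_i}\leq\sqrt{d_a\sum_{i\in B_a}\mu_i}=\sqrt{d_a w_a}.
\]
Since $w_a=T_{a-1}-T_a$, summing over blocks, squaring (both sides are nonnegative), and dividing by $m$ yields $\lambda\leq\Lambda_{\mathcal P}(\boldsymbol T_{\mathcal P})$. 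This bound uses neither the ordering nor admissibility; those enter only in the converse.

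The second step is the construction. Take an admissible profile and set $\mu_i=w_a/d_a$ for $i\in B_a$. Nonnegativity of the $w_a$ and $\sum_a w_a=1$ make this a probability vector. The nonincreasing block means $w_a/d_a$ make it nonincreasing across block boundaries, and it is constant inside each block. Hence it is a valid ordered Schmidt spectrum, realized for example by $\sum_i\sqrt{\mu_i}|ii\rangle$. Its block masses are the $w_a$, so its tails at the boundaries $n_j$ are exactly $T_j$. Each block contributes $d_a\sqrt{w_a/d_a}=\sqrt{d_aw_a}$, so the bound holds with equality. Blocks with $w_a=0$ need no special treatment: they contribute zero on both sides.

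The third step is the equality characterization. For a state with the prescribed profile, equality in the squared sum holds if and only if every blockwise Cauchy--Schwarz inequality is an equality. The reason is that each term satisfies $\sum_{i\in B_a}\sqrt{\mu_i}\leq\sqrt{d_aw_a}$, and a sum of termwise inequalities is tight only if each term is. Equality in Cauchy--Schwarz with the ones vector means $(\sqrt{\mu_i})_{i\in B_a}$ is constant. Combined with the fixed block mass, this gives $\mu_i=w_a/d_a$; when $w_a=0$ this reads $\mu_i=0$, which is automatic. The converse direction is the computation in the second step.

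I expect no real obstacle. The only care points are the degenerate zero-mass blocks and making clear that ordering is used only for realizability, not for the inequality. I would also note that the profile does not fix the spectrum, so the equality statement is an if-and-only-if over all spectra sharing the profile.
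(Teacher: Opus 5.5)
Your proposal is correct and follows essentially the same route as the paper: Cauchy--Schwarz against the all-ones vector on each block $B_a$, then summation with $w_a=T_{a-1}-T_a$, then equality exactly when every block is uniform (zero-mass blocks vanish automatically), and finally the blockwise-uniform spectrum $\mu_i=w_a/d_a$ as the attaining realization of any admissible profile. Your remark that ordering and admissibility are needed only for realizability, not for the inequality itself, is a correct clarification that the paper leaves implicit.
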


\begin{proof}
For each block \(B_a\), the Cauchy--Schwarz inequality gives
\[
    \sum_{i\in B_a}\sqrt{\mu_i}
    \le
    \sqrt{
        d_a
        \sum_{i\in B_a}\mu_i
    }
    =
    \sqrt{d_aw_a}.
\]
Summing gives
\[
    \sum_{i=1}^{m}\sqrt{\mu_i}
    \le
    \sum_{a=1}^{L}\sqrt{d_aw_a}.
\]
Substitute \(w_a=T_{a-1}-T_a\) into \(\lambda\):
\[
    \lambda
    =
    \frac{1}{m}
    \left(
        \sum_{i=1}^{m}\sqrt{\mu_i}
    \right)^2
    \le
    \frac{1}{m}
    \left[
        \sum_{a=1}^{L}
        \sqrt{
            d_a(T_{a-1}-T_a)
        }
    \right]^2,
\]
This proves \eqref{eq:partition-profile-bound-main}. Equality in each
blockwise Cauchy--Schwarz inequality holds exactly when the entries in that
block are equal; for a zero-mass block they all vanish. Conversely,
admissibility gives \(w_a\geq0\), \(\sum_aw_a=1\), and nonincreasing means
\(w_a/d_a\). Hence \(\mu_i=w_a/d_a\) is an ordered Schmidt probability vector
with the prescribed tails and attains equality. This proves both sharpness for
every admissible profile and the stated equality characterization.
\end{proof}

\begin{proposition}[Partition-refinement principle]
\label{prop:partition-refinement}
Let $\mathcal P'$ refine $\mathcal P$, and evaluate both profiles on the same
ordered Schmidt spectrum. Then
\begin{equation}
    \lambda
    \le
    \Lambda_{\mathcal{P}'}
    \le
    \Lambda_{\mathcal{P}}.
    \label{eq:refinement-chain-main}
\end{equation}
Equality in the second inequality holds if and only if, within every block of
$\mathcal P$, all subblocks introduced by $\mathcal P'$ have the same mean
Schmidt probability. Hence any supplied refinement that separates subblocks
with unequal means gives a strict improvement. The singleton partition recovers
the full-spectrum value $\lambda$.
\end{proposition}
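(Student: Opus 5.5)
The plan is to reduce the whole chain to blockwise Cauchy--Schwarz, applied once per coarse block of $\mathcal P$. The first inequality $\lambda\le\Lambda_{\mathcal P'}$ is simply Theorem~\ref{thm:joint-multistep-main} applied to the partition $\mathcal P'$, since both profiles are evaluated on the same ordered spectrum. The same theorem shows that for the singleton partition ($d_a=1$, $w_a=\mu_a$) the bound reads $\frac1m(\sum_i\sqrt{\mu_i})^2=\lambda$, which recovers the full-spectrum value.

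For the second inequality, fix a block $B_a$ of $\mathcal P$ and let the refinement split it into consecutive subblocks $B_{a,1},\dots,B_{a,k_a}$. Write $d_{a,j}$ for their sizes and $w_{a,j}=\sum_{i\in B_{a,j}}\mu_i$ for their masses, so that $\sum_jd_{a,j}=d_a$ and $\sum_jw_{a,j}=w_a$. Every profile difference $T'_{b-1}-T'_b$ of $\mathcal P'$ is one of the $w_{a,j}$. It therefore suffices to show that, for each $a$,
\[
\sum_{j=1}^{k_a}\sqrt{d_{a,j}w_{a,j}}\le\sqrt{d_aw_a}.
\]
This is Cauchy--Schwarz applied to the vectors $(\sqrt{d_{a,j}})_j$ and $(\sqrt{w_{a,j}})_j$. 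Summing over $a$ and squaring, which is monotone on nonnegative numbers, gives $\Lambda_{\mathcal P'}\le\Lambda_{\mathcal P}$.

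The equality case is the only delicate step. Both sides are nonnegative and squaring is strictly monotone there, so equality of the $\Lambda$'s forces equality of the sums $\sum_a\sum_j\sqrt{d_{a,j}w_{a,j}}=\sum_a\sqrt{d_aw_a}$. Because each termwise inequality runs in the same direction, this in turn forces equality for every $a$. Cauchy--Schwarz equality with $\sqrt{d_{a,j}}>0$ holds if and only if there is a $c_a\ge0$ with $\sqrt{w_{a,j}}=c_a\sqrt{d_{a,j}}$ for all $j$, that is, iff the mean $w_{a,j}/d_{a,j}=c_a^2$ is the same for all subblocks of $B_a$. Zero-mass subblocks are covered automatically: there $c_a=0$ and all subblock means vanish together. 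Conversely, equal means give termwise equality.

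The strict-improvement clause is the contrapositive. If the added tails separate two subblocks of some $B_a$ with unequal means, that block's Cauchy--Schwarz is strict, and so $\Lambda_{\mathcal P'}<\Lambda_{\mathcal P}$.
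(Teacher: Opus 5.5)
Your proof is correct and is essentially the paper's argument. The paper applies the two-term Cauchy--Schwarz inequality $\sqrt{d_aw_a}+\sqrt{d_bw_b}\le\sqrt{(d_a+d_b)(w_a+w_b)}$ and obtains both the chain and its equality condition by repeated merging; you instead apply the multi-term Cauchy--Schwarz once per coarse block and cite Theorem~\ref{thm:joint-multistep-main} for $\lambda\le\Lambda_{\mathcal P'}$, which states the equality characterization a little more explicitly but rests on the same idea.
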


\begin{proof}
For one split, Cauchy--Schwarz gives
\[
    \sqrt{d_aw_a}+\sqrt{d_bw_b}
    \leq
    \sqrt{(d_a+d_b)(w_a+w_b)},
\]
with equality exactly when $w_a/d_a=w_b/d_b$. Repeated merging proves
Eq.~\eqref{eq:refinement-chain-main} and its equality condition. If all blocks
are singletons, the right-hand side of
Eq.~\eqref{eq:partition-profile-bound-main} equals the definition of $\lambda$.
\end{proof}

The proposition compares information sets: the refined bound is available only
when the additional tail data are supplied. It does not extract several tails
from the single reference fidelity used in the baseline proposition.

\subsubsection{Block-resolved observable input}

The projector in Proposition~\ref{thm:single-observable-Er-main} does not determine
the tails in Theorem~\ref{thm:joint-multistep-main}. For a certified pure source
with known ordered Schmidt modes $\{|a_i\rangle\}_{i=1}^m$, define at each cut
$n_j$ the nested local projectors
\[
P_j^A:=\sum_{i=1}^{n_j}|a_i\rangle\langle a_i|,
\qquad
Q_j:=(I_A-P_j^A)\otimes I_B.
\]
Then
\[
\langle\psi|Q_j|\psi\rangle
=\sum_{i=n_j+1}^m\mu_i=T_j.
\]
One mode-resolved measurement in the certified Schmidt basis evaluates these
nested commuting projectors. This idealized count excludes basis certification,
purity tests, and mode-dependent errors.

Let $\mathcal C_{1-\epsilon}$ be a simultaneous confidence set
for the tail vector obtained from block-resolved counts, and let
$\mathcal T_{\mathcal P}^{\rm ord}$ denote the physically admissible tail
vectors corresponding to $\mathcal W_{\mathcal P}^{\rm ord}$. On its coverage
event, Theorem~\ref{thm:joint-multistep-main} gives
\[
\lambda\leq
\max_{\boldsymbol T\in
\mathcal C_{1-\epsilon}\cap\mathcal T_{\mathcal P}^{\rm ord}}
\Lambda_{\mathcal P}(\boldsymbol T),
\]
and
\[
\min_{\boldsymbol T\in
\mathcal C_{1-\epsilon}\cap\mathcal T_{\mathcal P}^{\rm ord}}
\sum_j\omega_jT_j
\leq\mathcal E_{\boldsymbol\omega}(|\psi\rangle)
\leq
\max_{\boldsymbol T\in
\mathcal C_{1-\epsilon}\cap\mathcal T_{\mathcal P}^{\rm ord}}
\sum_j\omega_jT_j.
\]
Refinement requires finer outcomes or an independently certified source.

For a mixed state, $\operatorname{Tr}(Q_j\rho)$ gives fixed-subspace populations,
not the convex roofs $E_{n_j+1}(\rho)$ or the component tails of an optimal
decomposition. Their identification requires a pure-source, locally flagged,
or certified decomposition model.
Selected Vidal tails constrain stochastic pure-state conversion
\cite{VidalSingleCopy}; nonnegative weights scalarize these constraints for a
specified task, not a universal rate.

For \(w_a>0\), define the effective block dimension
\begin{equation*}
    d_a^{\mathrm{eff}}
    :=
    \frac{
        \left(
            \sum_{i\in B_a}\sqrt{\mu_i}
        \right)^2
    }{
        w_a
    },
    \qquad
    1\le d_a^{\mathrm{eff}}\le d_a.
\end{equation*}
For $w_a=0$, set $d_a^{\rm eff}=1$; the block contributes zero. Then
\begin{equation*}
    \lambda
    =
    \frac{1}{m}
    \left(
        \sum_{a=1}^{L}
        \sqrt{
            d_a^{\mathrm{eff}}w_a
        }
    \right)^2.
\end{equation*}
Equivalently, define the intrablock defect
\begin{equation*}
    \Delta_a
    :=
    d_aw_a
    -
    \left(
        \sum_{i\in B_a}\sqrt{\mu_i}
    \right)^2
    =
    \sum_{\substack{i,j\in B_a\\i<j}}
    \left(
        \sqrt{\mu_i}
        -
        \sqrt{\mu_j}
    \right)^2
    \ge0.
\end{equation*}
Then
\begin{equation*}
    d_a^{\mathrm{eff}}
    =
    d_a-\frac{\Delta_a}{w_a},
\end{equation*}
for \(w_a>0\), and
\begin{equation*}
    \lambda
    =
    \frac{1}{m}
    \left[
        \sum_{a=1}^{L}
        \sqrt{
            d_aw_a-\Delta_a
        }
    \right]^2.
\end{equation*}
The gap \(\Lambda_{\mathcal P}-\lambda\) depends jointly on
\(\{\Delta_a\}\); without lower block weights, near saturation need not imply
blockwise uniformity.

\subsection{Weighted multiscale profiles}

Nonnegative weights scalarize the multistep tail vector. The relaxation below
is a linear optimization over a Bhattacharyya--Hellinger superlevel set. General
optimization under Hellinger or other $\phi$-divergence constraints is
established \cite{BenTalEtAl2013}.
Theorem~\ref{thm:global-weighted-multiscale} gives a global analytic solution
for the resulting relaxed weighted Vidal-tail problem: it provides an explicit
global optimizer, proves uniqueness on the full-support branch, and reduces the
solution to one scalar equation.
Let
\[
    \boldsymbol{\omega}=(\omega_1,\ldots,\omega_{L-1}),
    \qquad
    \omega_j\ge0,
\]
and define the pure-state weighted profile
\begin{equation*}
    \mathcal{E}_{\boldsymbol{\omega}}(|\psi\rangle)
    =
    \sum_{j=1}^{L-1}
    \omega_j E_{n_j+1}(|\psi\rangle).
\end{equation*}
In block weights,
\begin{equation*}
    \mathcal{E}_{\boldsymbol{\omega}}(|\psi\rangle)
    =
    \sum_{a=1}^{L}c_aw_a,
    \qquad
    c_1=0,\quad
    c_a=\sum_{j=1}^{a-1}\omega_j.
\end{equation*}
Since \(\omega_j\ge0\), the coefficients satisfy
\[
    0=c_1\le c_2\le\cdots\le c_L.
\]
For prescribed effective-dimension upper bounds
\[
    1\le\widetilde d_a\le d_a,
    \qquad
    a=1,\ldots,L,
\]
define the relaxed block-mass optimization
\begin{widetext}
 \begin{equation}
    \mathcal G^{\rm rel}_{\boldsymbol{\omega},\widetilde{\boldsymbol d}}(\lambda)
    :=
    \min_{\boldsymbol w}
    \left\{
        \sum_{a=1}^{L}c_aw_a
        \ \middle|\
        w_a\geq0,\quad
        \sum_{a=1}^{L}w_a=1,\quad
        H_{\widetilde{\boldsymbol d}}(\boldsymbol w)\geq\lambda
    \right\},
    \label{eq:weighted-envelope-main}
\end{equation}   
\end{widetext}
where
\[
    H_{\widetilde{\boldsymbol d}}(\boldsymbol w)
    :=
    \frac{1}{m}
    \left(
        \sum_{a=1}^{L}
        \sqrt{\widetilde d_aw_a}
    \right)^2.
\]
The superscript ``rel'' denotes the full probability simplex, without the
physical constraints
\(w_a/d_a\geq w_{a+1}/d_{a+1}\). For the actual block dimensions, define
\[
\mathcal G^{\rm ord}_{\boldsymbol\omega,\boldsymbol d}(\lambda)
:=\min\left\{\sum_ac_aw_a:\
\boldsymbol w\in\mathcal W_{\mathcal P}^{\rm ord},\
H_{\boldsymbol d}(\boldsymbol w)\geq\lambda\right\}.
\]
Since \(\mathcal W_{\mathcal P}^{\rm ord}\) is a subset of the simplex,
\[
\mathcal G^{\rm ord}_{\boldsymbol\omega,\boldsymbol d}(\lambda)
\geq
\mathcal G^{\rm rel}_{\boldsymbol\omega,\boldsymbol d}(\lambda).
\]
Define the zero-cost indices and effective dimensions
\[
    \mathcal I_0
    :=
    \{a:c_a=0\},
    \qquad
    D_0
    :=
    \sum_{a\in\mathcal I_0}\widetilde d_a,
    \qquad
    D
    :=
    \sum_{a=1}^{L}\widetilde d_a.
\]
For $\operatorname{supp}(\boldsymbol w)\subseteq\mathcal I_0$,
Cauchy--Schwarz gives
\[
    H_{\widetilde{\boldsymbol d}}(\boldsymbol w)
    \leq
    \frac{D_0}{m}.
\]
Equality holds at
\[
    w_a
    =
    \frac{\widetilde d_a}{D_0},
    \quad a\in\mathcal I_0,
    \qquad
    w_a=0,
    \quad a\notin\mathcal I_0.
\]
Hence the relaxed envelope has the zero-cost branch
\[
    \mathcal G^{\rm rel}_{\boldsymbol{\omega},\widetilde{\boldsymbol d}}(\lambda)=0
    \qquad
    \text{for }
    0\leq\lambda\leq\frac{D_0}{m}.
\]
Values below \(1/m\) extend the envelope analytically. Since pure states satisfy
\(\lambda\geq1/m\), their zero-cost branch is \([1/m,D_0/m]\).

The maximum of $H_{\widetilde{\boldsymbol d}}$ over the simplex is
$D/m$, attained uniquely at
\[
    w_a=\frac{\widetilde d_a}{D}.
\]
Feasibility requires $\lambda\leq D/m$. The interval
\[
    \frac{D_0}{m}<\lambda<\frac{D}{m}
\]
forms the nontrivial branch.
\begin{theorem}[Global solution of the relaxed weighted optimization]
\label{thm:global-weighted-multiscale}
Assume that \(\widetilde d_a>0\) for every \(a\), that
\(0=c_1\leq\cdots\leq c_L\), and that the costs are not all equal. Then
\(\mathcal I_0=\{a:c_a=0\}\) is nonempty. For every
\(
    \frac{D_0}{m}<\lambda<\frac{D}{m},
\)
there exists a unique parameter $\alpha>0$ satisfying
\[
    \lambda
    =
    \frac{Z_1(\alpha)^2}
    {mZ_2(\alpha)},
\]
where
\[
    Z_1(\alpha)
    :=
    \sum_{a=1}^{L}
    \frac{\widetilde d_a}{c_a+\alpha},
    \qquad
    Z_2(\alpha)
    :=
    \sum_{a=1}^{L}
    \frac{\widetilde d_a}{(c_a+\alpha)^2}.
\]
The optimization defining
$\mathcal G^{\rm rel}_{\boldsymbol{\omega},\widetilde{\boldsymbol d}}(\lambda)$
has a unique global minimizer:
\[
    w_a^\star(\alpha)
    =
    \frac{
        \displaystyle
        \frac{\widetilde d_a}{(c_a+\alpha)^2}
    }{
        \displaystyle
        \sum_{b=1}^{L}
        \frac{\widetilde d_b}{(c_b+\alpha)^2}
    },
    \qquad
    a=1,\ldots,L.
\]
All components are positive:
\[
    w_a^\star(\alpha)>0
    \qquad
    \text{for every }a,
\]
Thus the minimizer has full support on the nontrivial branch, with value
\[
    \mathcal G^{\rm rel}_{\boldsymbol{\omega},\widetilde{\boldsymbol d}}(\lambda)
    =
    \frac{
        \displaystyle
        \sum_{a=1}^{L}
        \frac{c_a\widetilde d_a}{(c_a+\alpha)^2}
    }{
        Z_2(\alpha)
    }
    =
    \frac{Z_1(\alpha)}{Z_2(\alpha)}-\alpha.
\]
\end{theorem}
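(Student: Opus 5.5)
The plan is to change variables to $x_a=\sqrt{w_a}\geq0$, so that the simplex becomes the unit sphere $\sum_ax_a^2=1$ in the nonnegative orthant. The constraint becomes $\sum_a\sqrt{\widetilde d_a}\,x_a\geq\sqrt{m\lambda}$, and the cost becomes the quadratic form $\sum_ac_ax_a^2$. Nonemptiness of $\mathcal I_0$ is immediate because $c_1=0$.

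\textbf{Step 1 (the root $\alpha$).} Define $g(\alpha):=Z_1(\alpha)^2/(mZ_2(\alpha))$ for $\alpha>0$.

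As $\alpha\to\infty$, $(c_a+\alpha)^{-1}\sim\alpha^{-1}$, so $g\to D/m$. As $\alpha\to0^+$, the terms with $a\in\mathcal I_0$ dominate: $Z_1\sim D_0/\alpha$ and $Z_2\sim D_0/\alpha^2$, so $g\to D_0/m$.

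For strict monotonicity, differentiate. Using $Z_1'=-Z_2$ and $Z_2'=-2Z_3$, where $Z_3=\sum\widetilde d_a(c_a+\alpha)^{-3}$, we get
\[
\frac{d}{d\alpha}\log g=-\frac{2Z_2}{Z_1}+\frac{2Z_3}{Z_2}=\frac{2(Z_1Z_3-Z_2^2)}{Z_1Z_2}.
\]
By Cauchy--Schwarz with weights $\widetilde d_a$ and vectors $(c_a+\alpha)^{-1/2}$ and $(c_a+\alpha)^{-3/2}$, this is $\geq0$. Equality would require all $c_a+\alpha$ to be equal, which the hypothesis that the costs are not all equal excludes. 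So $g$ is a strictly increasing bijection from $(0,\infty)$ onto $(D_0/m,D/m)$, and the root $\alpha$ exists and is unique.

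\textbf{Step 2 (global certificate).} This is the key step; I expect it to be the main obstacle, because uniqueness must come from the certificate and not from stationarity. Let $y_a:=\sqrt{\widetilde d_a}/(c_a+\alpha)$, and note that $x^\star=y/\|y\|$, so $w^\star_a=y_a^2/\|y\|^2$ matches the stated formula. For any feasible $x$, Cauchy--Schwarz in the weighted inner product with weights $(c_a+\alpha)$ gives
\[
\sqrt{m\lambda}\leq\sum_a\sqrt{\widetilde d_a}\,x_a=\sum_a(c_a+\alpha)\,y_ax_a\leq\Bigl(\sum_a(c_a+\alpha)y_a^2\Bigr)^{1/2}\Bigl(\sum_a(c_a+\alpha)x_a^2\Bigr)^{1/2}.
\]
Here $\sum_a(c_a+\alpha)y_a^2=Z_1$ and $\sum_a(c_a+\alpha)x_a^2=\text{cost}+\alpha$. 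Hence
\[
\text{cost}\geq\frac{m\lambda}{Z_1}-\alpha=\frac{Z_1}{Z_2}-\alpha,
\]
where the last equality uses the root equation $m\lambda=Z_1^2/Z_2$.

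At $x^\star$ we have $\sum_a\sqrt{\widetilde d_a}\,x^\star_a=Z_1/\sqrt{Z_2}=\sqrt{m\lambda}$, so $x^\star$ is feasible with the constraint active. It is also proportional to $y$, so both inequalities are tight and the bound is attained.

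\textbf{Uniqueness.} Equality in the certificate forces $x\propto y$, because the weights $c_a+\alpha$ are positive. The normalization $\sum_ax_a^2=1$ then pins down $x=x^\star$. Any minimizing $w$ gives a feasible $x$ with equal cost, hence equality throughout, so the minimizer is unique. Full support follows because every $y_a>0$ when $\widetilde d_a>0$.

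\textbf{Step 3 (value identities).} The value is
\[
\sum_ac_aw^\star_a=\frac{\sum_ac_a\widetilde d_a(c_a+\alpha)^{-2}}{Z_2}.
\]
Writing $c_a=(c_a+\alpha)-\alpha$ gives $\sum_ac_a\widetilde d_a(c_a+\alpha)^{-2}=Z_1-\alpha Z_2$, so the value equals $Z_1/Z_2-\alpha$, matching the bound from Step 2.
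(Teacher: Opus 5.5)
Your proof is correct and follows the paper's argument. The log-derivative monotonicity step with $Z_2^2\le Z_1Z_3$ is Step 3 of Appendix B, and the weighted Cauchy--Schwarz certificate $(\sum_a\sqrt{\widetilde d_aw_a})^2\le Z_1(\alpha)(\sum_ac_aw_a+\alpha)$, with its equality case giving uniqueness, is Step 4; the substitution $x_a=\sqrt{w_a}$ is only a change of notation, and omitting the paper's Slater/KKT steps loses nothing because the certificate alone gives both global optimality and uniqueness.
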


The qualifier ``relaxed'' is essential. The theorem solves the block-mass
problem in Eq.~\eqref{eq:weighted-envelope-main}; it does not evaluate the exact
fixed-fidelity convex roof of a general weighted profile. Appendix B proves
global optimality with a Cauchy--Schwarz certificate in addition to the
stationarity equations, so the displayed point is not merely a KKT candidate.

The uniqueness assertion is confined to
\(D_0/m<\lambda<D/m\). The zero-cost branch
\(0\leq\lambda\leq D_0/m\) and the endpoint \(\lambda=D/m\) lie outside the
scalar-root statement and are characterized separately. In particular, no
zero-cost or simplex-boundary minimizer is included in the theorem's
full-support claim.

If all costs vanish, every feasible set has zero objective; the theorem excludes
this case.
For the physical choice \(\widetilde{\boldsymbol d}=\boldsymbol d\), the relaxed
minimizer satisfies
\[
\frac{w_a^\star}{d_a}
\propto\frac{1}{(c_a+\alpha)^2},
\]
which is nonincreasing in \(a\). Thus the minimizer lies in
\(\mathcal W_{\mathcal P}^{\rm ord}\), and the relaxed and ordered values
coincide. This need not hold for arbitrary \(\widetilde{\boldsymbol d}\).

Appendix B proves global optimality and uniqueness.

At the upper endpoint $\lambda=D/m$, the feasible set reduces to
the unique maximizer of $H_{\widetilde{\boldsymbol d}}$,
\[
    w_a^\star
    =
    \frac{\widetilde d_a}{D},
\]
and hence
\[
    \mathcal G^{\rm rel}_{\boldsymbol{\omega},\widetilde{\boldsymbol d}}\left(\frac{D}{m}\right)
    =
    \frac{1}{D}
    \sum_{a=1}^{L}c_a\widetilde d_a.
\]
Together with the zero branch, this endpoint completes
$\mathcal G^{\rm rel}_{\boldsymbol{\omega},\widetilde{\boldsymbol d}}$.

The choice
\[
    L=2,
    \widetilde d_1=r-1,
    \widetilde d_2=m-r+1,
    c_1=0,
    c_2=1.
\]
gives $D_0=r-1$ and $D=m$. Its zero-cost branch is
$\mathcal G^{\rm rel}(\lambda)=0$ for
$1/m\leq\lambda\leq(r-1)/m$. On the nontrivial branch
$(r-1)/m<\lambda\leq1$, the solution reduces to
\[
    \mathcal{G}^{\rm rel}(\lambda)
    =
    \frac{1}{m}
    \left[
        \sqrt{(m-r+1)\lambda}
        -
        \sqrt{(r-1)(1-\lambda)}
    \right]^2,
\]
which is the positive branch of the two-block boundary.

\begin{proposition}[Convexity of the relaxed weighted envelope]
\label{prop:multistep-convex-main}
The function
\(
\mathcal{G}^{\rm rel}_{\boldsymbol{\omega},
\widetilde{\boldsymbol d}}(\lambda)
\)
is nondecreasing and convex on its domain.
\end{proposition}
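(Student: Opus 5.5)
The plan is to derive both properties directly from the structure of the optimization in Eq.~\eqref{eq:weighted-envelope-main}, without using the explicit parametrization of Theorem~\ref{thm:global-weighted-multiscale}. The domain is $0\leq\lambda\leq D/m$, the set on which the feasible set
\[
\mathcal F(\lambda):=\Bigl\{\boldsymbol w\in\Delta_L:\ H_{\widetilde{\boldsymbol d}}(\boldsymbol w)\geq\lambda\Bigr\}
\]
is nonempty. Compactness of the simplex and continuity of $H_{\widetilde{\boldsymbol d}}$ make each $\mathcal F(\lambda)$ compact. The linear objective therefore attains its minimum, so $\mathcal G^{\rm rel}$ is well defined and finite on the domain.

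Monotonicity is immediate from nesting. If $\lambda_1\leq\lambda_2$, then $\mathcal F(\lambda_2)\subseteq\mathcal F(\lambda_1)$, so the minimum over the smaller set is at least as large. This gives $\mathcal G^{\rm rel}(\lambda_1)\leq\mathcal G^{\rm rel}(\lambda_2)$.

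For convexity I will show that $\sqrt{m H_{\widetilde{\boldsymbol d}}(\boldsymbol w)}=\sum_a\sqrt{\widetilde d_a w_a}$ is concave on the simplex. It is a sum of concave functions $w_a\mapsto\sqrt{\widetilde d_a w_a}$. Then:
\begin{enumerate}
\item Fix $\lambda_1,\lambda_2$ in the domain, $\theta\in[0,1]$, and minimizers $\boldsymbol w^{(1)},\boldsymbol w^{(2)}$.
\item Set $\boldsymbol w^\theta=\theta\boldsymbol w^{(1)}+(1-\theta)\boldsymbol w^{(2)}$. Concavity gives
\[
\sqrt{mH(\boldsymbol w^\theta)}\geq\theta\sqrt{m\lambda_1}+(1-\theta)\sqrt{m\lambda_2}.
\]
\item Squaring, $H(\boldsymbol w^\theta)\geq\bigl(\theta\sqrt{\lambda_1}+(1-\theta)\sqrt{\lambda_2}\bigr)^2$.
\end{enumerate}
This lower bound is \emph{not} in general $\geq\theta\lambda_1+(1-\theta)\lambda_2$; by concavity of the square root it is the other way round. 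This mismatch is the main obstacle. The argument above proves only that $\tau\mapsto\mathcal G^{\rm rel}(\tau^2)$ is convex in $\tau=\sqrt\lambda$. Convexity in $\sqrt\lambda$ together with monotonicity does not by itself yield convexity in $\lambda$.

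To close the gap I will use Theorem~\ref{thm:global-weighted-multiscale} on the nontrivial branch. There
\[
g(\alpha):=Z_1(\alpha)/Z_2(\alpha)-\alpha,\qquad \lambda(\alpha)=Z_1(\alpha)^2/(mZ_2(\alpha)),
\]
so convexity amounts to $dg/d\lambda$ being nondecreasing in $\lambda$. I expect an envelope/Lagrangian computation to give $dg/d\lambda$ directly as a monotone function of $\alpha$. Concretely, write the Lagrangian as $\sum_ac_aw_a-\beta(\sqrt{mH}-\sqrt{m\lambda})-\gamma(\sum_aw_a-1)$. Stationarity gives $\sqrt{w_a}\propto\sqrt{\widetilde d_a}/(c_a+\alpha)$, and the multiplier of the $\sqrt\lambda$ constraint is determined explicitly by $\alpha$. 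By the envelope theorem, $dg/d\lambda$ equals that multiplier times $d\sqrt{\lambda}/d\lambda$. I then need to verify that this quantity increases as $\lambda$ increases. The expected mechanism is that $\alpha$ decreases as $\lambda$ grows, which I will check from $d\lambda/d\alpha<0$ via a Cauchy--Schwarz inequality among $Z_1,Z_2,Z_3$.

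Finally, I will glue the pieces at the junctions. The zero branch is constant. At $\lambda=D_0/m$ the function is continuous and its right derivative is $\geq0$. At $\lambda=D/m$, continuity follows from compactness. A continuous function that is convex on each adjacent interval and has nondecreasing one-sided derivatives across the junctions is convex on the whole domain.
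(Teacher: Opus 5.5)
\textbf{Gap 1: the obstacle you identify is not real.} Seeing why is the missing idea. You only use concavity of $\sqrt{mH_{\widetilde{\boldsymbol d}}}=\sum_a\sqrt{\widetilde d_aw_a}$, but $H_{\widetilde{\boldsymbol d}}$ itself is concave on the nonnegative orthant. Expanding the square gives
\[
mH_{\widetilde{\boldsymbol d}}(\boldsymbol w)=\sum_{a}\widetilde d_aw_a+2\sum_{a<b}\sqrt{\widetilde d_a\widetilde d_b}\,\sqrt{w_aw_b},
\]
which is a linear term plus nonnegative multiples of the geometric means $\sqrt{w_aw_b}$, each of them concave (Step~1 of Appendix~B). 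Hence
\[
H_{\widetilde{\boldsymbol d}}(\boldsymbol w^\theta)\geq\theta H_{\widetilde{\boldsymbol d}}(\boldsymbol w^{(1)})+(1-\theta)H_{\widetilde{\boldsymbol d}}(\boldsymbol w^{(2)})\geq\theta\lambda_1+(1-\theta)\lambda_2,
\]
so $\boldsymbol w^\theta$ is feasible at the averaged level, and linearity of the objective gives convexity immediately. This is exactly the paper's short proof; your monotonicity and compactness steps carry over unchanged. Your bound $\bigl(\theta\sqrt{\lambda_1}+(1-\theta)\sqrt{\lambda_2}\bigr)^2$ falls short only because it discards this stronger concavity. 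It does not show that the direct argument fails.

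\textbf{Gap 2: the fallback has the wrong sign.} The route through Theorem~\ref{thm:global-weighted-multiscale} can be repaired, but as written its key step is backwards. Step~3 of Appendix~B shows that $\alpha\mapsto Z_1(\alpha)^2/[mZ_2(\alpha)]$ is strictly increasing: its logarithmic derivative is $2(Z_1Z_3-Z_2^2)/(Z_1Z_2)>0$, by the very Cauchy--Schwarz inequality you plan to invoke. So $\alpha$ increases with $\lambda$, and your intended check $d\lambda/d\alpha<0$ would fail. The correct computation is $g'(\alpha)=2(Z_1Z_3-Z_2^2)/Z_2^2$ and $\lambda'(\alpha)=2Z_1(Z_1Z_3-Z_2^2)/(mZ_2^2)$. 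Hence $d\mathcal G^{\rm rel}/d\lambda=m/Z_1(\alpha)$, which is the multiplier $\beta$ of Appendix~B. Convexity then follows because $Z_1$ decreases in $\alpha$ while $\alpha$ increases in $\lambda$. Had $\alpha$ decreased as $\lambda$ grew, as you expect, this slope would decrease and you would be proving concavity.

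With this correction, the junction at $D_0/m$ is fine: $Z_1(\alpha)\geq\widetilde d_1/\alpha\to\infty$ forces the right slope to $0$. The equal-cost case excluded by the theorem is trivial, because then every $c_a=0$ and $\mathcal G^{\rm rel}\equiv0$. Even so, the detour is unnecessary once $H_{\widetilde{\boldsymbol d}}$ is recognized as concave.
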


\begin{proof}
Increasing \(\lambda\) shrinks the feasible set, proving monotonicity.
Compactness gives a minimizer \(\boldsymbol w^{(i)}\) at each \(\lambda_i\).
For
\(0\leq\theta\leq1\), the concavity of
\(H_{\widetilde{\boldsymbol d}}\) gives
\[
    H_{\widetilde{\boldsymbol d}}
    \big(
        \theta\boldsymbol{w}^{(1)}
        +(1-\theta)\boldsymbol{w}^{(2)}
    \big)
    \ge
    \theta\lambda_1+(1-\theta)\lambda_2.
\]
This convex combination is feasible at
\(\theta\lambda_1+(1-\theta)\lambda_2\); linearity of the objective gives
\[
\begin{aligned}
&\mathcal G^{\rm rel}_{\boldsymbol{\omega},\widetilde{\boldsymbol d}}
\bigl(\theta\lambda_1+(1-\theta)\lambda_2\bigr)\\
&\qquad\leq
\theta\mathcal G^{\rm rel}_{\boldsymbol{\omega},\widetilde{\boldsymbol d}}(\lambda_1)
+(1-\theta)\mathcal G^{\rm rel}_{\boldsymbol{\omega},\widetilde{\boldsymbol d}}(\lambda_2).
\end{aligned}
\]
\end{proof}

For a pure state, replacing \(d_a\) by the corresponding
\(d_a^{\rm eff}\leq d_a\) reduces the relaxed feasible set and yields
\begin{equation*}
    \mathcal{G}^{\rm rel}_{\boldsymbol{\omega},
    \boldsymbol d^{\mathrm{eff}}}(\lambda)
    \ge
    \mathcal{G}^{\rm rel}_{\boldsymbol{\omega},
    \boldsymbol d}(\lambda).
\end{equation*}
A mixed-state convex roof needs extra data for this state-dependent substitution.

For mixed states, define the convex-roof extension
\begin{equation*}
    \mathcal{E}_{\boldsymbol{\omega}}(\rho)
    =
    \inf_{\{p_j,|\psi_j\rangle\}\in\mathfrak E(\rho)}
    \sum_jp_j
    \mathcal{E}_{\boldsymbol{\omega}}
    (|\psi_j\rangle).
\end{equation*}
This convex roof may differ from the weighted sum of the separate roofs.

\begin{proposition}[Full-spectrum weighted baseline]
\label{prop:full-spectrum-weighted}
For every density operator and every nonnegative weight vector,
\begin{equation}
\mathcal E_{\boldsymbol\omega}(\rho)
\geq\sum_{j=1}^{L-1}\omega_jE_{n_j+1}(\rho)
\geq B_{\rm FS}\bigl(F_\phi(\rho)\bigr),
\label{eq:full-spectrum-weighted-chain}
\end{equation}
where
\begin{equation}
B_{\rm FS}(F):=
\sum_{j=1}^{L-1}\omega_j
\mathcal R_{n_j+1,\phi}(F).
\label{eq:full-spectrum-weighted-baseline}
\end{equation}
\end{proposition}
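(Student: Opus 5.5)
The chain in Eq.~\eqref{eq:full-spectrum-weighted-chain} splits into two inequalities, and I will prove each one directly. Neither requires new spectral analysis.

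\emph{First inequality (superadditivity of the roof over the weighted sum).} Fix any ensemble $\{p_i,|\psi_i\rangle\}\in\mathfrak E(\rho)$. Because every $\omega_j\geq0$, I expand the average and exchange the finite sums:
\[
\sum_ip_i\mathcal E_{\boldsymbol\omega}(|\psi_i\rangle)
=\sum_{j=1}^{L-1}\omega_j\sum_ip_iE_{n_j+1}(|\psi_i\rangle).
\]
For each $j$, the inner sum is at least $E_{n_j+1}(\rho)$, since that roof is an infimum over the same ensembles. Multiplying by $\omega_j\geq0$ and summing gives the bound $\sum_ip_i\mathcal E_{\boldsymbol\omega}(|\psi_i\rangle)\geq\sum_j\omega_jE_{n_j+1}(\rho)$ for every ensemble. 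Taking the infimum over ensembles then proves the first inequality. The single common ensemble on the left may be worse than the separately optimized ensembles on the right, which is why only an inequality holds.

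\emph{Second inequality (termwise single-tail baseline).} For each cut $n_j$ with $1\leq j\leq L-1$, the index $r=n_j+1$ satisfies $2\leq r\leq m$, because $1\leq n_j\leq m-1$. Proposition~\ref{thm:single-observable-Er-main}, through Eq.~\eqref{eq:reference-spectrum-mixed-bound}, therefore gives
\[
E_{n_j+1}(\rho)\geq\mathcal R_{n_j+1,\phi}\bigl(F_\phi(\rho)\bigr).
\]
The same reference $|\phi\rangle$ and the same fidelity $F_\phi(\rho)$ are used at every scale. Multiplying by $\omega_j\geq0$ and summing over $j$ yields $B_{\rm FS}(F_\phi(\rho))$ by definition~\eqref{eq:full-spectrum-weighted-baseline}.

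\emph{Main obstacle.} The only care needed is the index range. I must check that every selected cut yields an admissible threshold $r\in[2,m]$. I must also check the degenerate cases: when $\operatorname{SR}(|\phi\rangle)\leq n_j$, we have $A_{n_j+1}=1$, and the corresponding term $\mathcal R_{n_j+1,\phi}$ vanishes. Such terms still give valid, trivial lower bounds, so the chain holds for every density operator and every nonnegative weight vector.
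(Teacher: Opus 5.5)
Your proof is correct and follows the paper's argument: you bound each separate roof $E_{n_j+1}(\rho)$ by the average over a common ensemble, use $\omega_j\geq0$ to sum and then take the infimum, and apply Eq.~\eqref{eq:reference-spectrum-mixed-bound} termwise to get the second inequality. Your explicit check that $1\leq n_j\leq m-1$ places every threshold $n_j+1$ in $[2,m]$, and that cuts with $\operatorname{SR}(|\phi\rangle)\leq n_j$ contribute zero, is a detail the paper leaves implicit.
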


\begin{proof}
In any common pure-state decomposition of $\rho$, each average tail bounds its
separate convex roof $E_{n_j+1}(\rho)$. Multiply by $\omega_j\geq0$, sum, and
minimize to obtain the first inequality. Applying
Eq.~\eqref{eq:reference-spectrum-mixed-bound} gives the second.
\end{proof}

The first inequality is tight only if one decomposition minimizes every
positively weighted tail; the second requires simultaneous saturation of all
active fixed-fidelity bounds. One nonzero weight satisfies both conditions, but
a multiscale profile need not. We do not classify all equality cases.

For comparison, define
\begin{align}
\Lambda_\phi^{(1)}(F)
&:=\max\left\{\frac{F}{m\nu_1},\frac1m\right\},\nonumber\\
B_{\nu_1}(F)
&:=\mathcal G^{\rm rel}_{\boldsymbol\omega,\boldsymbol d}
\bigl(\Lambda_\phi^{(1)}(F)\bigr),
\label{eq:weighted-nu1-baseline}
\end{align}
and
\begin{equation}
B_{\rm best}(F):=\max\{B_{\rm FS}(F),B_{\nu_1}(F)\}.
\label{eq:weighted-best-baseline}
\end{equation}

\begin{proposition}[Auxiliary largest-coefficient bound for weighted profiles]
\label{thm:single-observable-multiscale-main}
For every nonnegative weight vector \(\boldsymbol{\omega}\) and every
$\rho\in\mathcal D(\mathcal H_n\otimes\mathcal H_m)$,
\begin{equation}
    \mathcal{E}_{\boldsymbol{\omega}}(\rho)
    \ge
    B_{\nu_1}\bigl(F_\phi(\rho)\bigr).
\label{eq:weighted-nu1-bound}
\end{equation}
More generally, $\boldsymbol d$ may be replaced by
$\bar{\boldsymbol d}$ with $1\leq\bar d_a\leq d_a$ if, for every
$\varepsilon>0$, there exists an $\varepsilon$-optimal decomposition for
$\mathcal E_{\boldsymbol\omega}(\rho)$ in which every nonzero pure component
satisfies $d^{\rm eff}_{a,j}\leq\bar d_a$ for all $a$.
\end{proposition}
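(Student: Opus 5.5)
The proof goes in three steps: a pure-state inequality, a convex-roof extension, and a monotonicity check on $\Lambda^{(1)}_\phi$.

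\emph{Pure-state inequality.} For a pure $|\psi\rangle$ with ordered Schmidt probabilities $\mu_i$, the von Neumann trace inequality (as in the proof of Proposition~\ref{prop:positive-spectrum-uniqueness}) gives
\[
\sqrt{F_\phi(\psi)}\leq\sum_i\sqrt{\nu_i\mu_i}\leq\sqrt{\nu_1}\sum_i\sqrt{\mu_i}=\sqrt{m\nu_1\lambda(\psi)} .
\]
Hence $\lambda(\psi)\geq F_\phi(\psi)/(m\nu_1)$. Since $\lambda\geq1/m$ always holds, we get $\lambda(\psi)\geq\Lambda^{(1)}_\phi(F_\phi(\psi))$.

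The block masses $w_a$ of $\psi$ lie in the simplex. By Cauchy--Schwarz within blocks (Theorem~\ref{thm:joint-multistep-main}), $H_{\boldsymbol d}(\boldsymbol w)\geq\lambda(\psi)$. So $\boldsymbol w$ is feasible for the relaxed problem at level $\Lambda^{(1)}_\phi(F_\phi(\psi))$. Therefore
\[
\mathcal E_{\boldsymbol\omega}(|\psi\rangle)=\sum_a c_aw_a\geq\mathcal G^{\rm rel}_{\boldsymbol\omega,\boldsymbol d}\bigl(\Lambda^{(1)}_\phi(F_\phi(\psi))\bigr)=B_{\nu_1}(F_\phi(\psi)).
\]
Here monotonicity of $\mathcal G^{\rm rel}$ (Proposition~\ref{prop:multistep-convex-main}) lets us pass from $\lambda(\psi)$ down to the smaller argument. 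Feasibility needs $\Lambda^{(1)}_\phi\leq D/m=1$, which holds because $F\leq1$ and $\nu_1\geq1/m$.

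\emph{Extension to mixed states.} I will show that $F\mapsto B_{\nu_1}(F)$ is convex and nondecreasing on $[0,1]$. The map $\Lambda^{(1)}_\phi$ is a maximum of affine functions, so it is convex and nondecreasing. Composing a convex nondecreasing function $\mathcal G^{\rm rel}$ (Proposition~\ref{prop:multistep-convex-main}) with a convex nondecreasing inner map gives a convex nondecreasing function. On $\lambda\in[0,1/m]$ the value $\mathcal G^{\rm rel}$ lies in the zero-cost branch, since $D_0\geq1$ because $c_1=0$; this is consistent with the analytic extension described before Theorem~\ref{thm:global-weighted-multiscale}.

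Take any decomposition $\rho=\sum_jp_j|\psi_j\rangle\langle\psi_j|$. Then
\[
\sum_jp_j\mathcal E_{\boldsymbol\omega}(\psi_j)\geq\sum_jp_jB_{\nu_1}(F_\phi(\psi_j))\geq B_{\nu_1}\Bigl(\sum_jp_jF_\phi(\psi_j)\Bigr)=B_{\nu_1}(F_\phi(\rho)),
\]
using Jensen's inequality for the convex $B_{\nu_1}$ and linearity of $F_\phi$. Taking the infimum over decompositions gives Eq.~\eqref{eq:weighted-nu1-bound}.

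\emph{Generalization to $\bar{\boldsymbol d}$.} For each component with $d^{\rm eff}_{a,j}\leq\bar d_a$, the pure-state step is rerun with the identity $\lambda=H_{\boldsymbol d^{\rm eff}}(\boldsymbol w)\leq H_{\bar{\boldsymbol d}}(\boldsymbol w)$. This yields the bound with $\mathcal G^{\rm rel}_{\boldsymbol\omega,\bar{\boldsymbol d}}$, whose convexity and monotonicity hold by the same proposition. Jensen then applies on an $\varepsilon$-optimal decomposition, giving $\mathcal E_{\boldsymbol\omega}(\rho)+\varepsilon\geq B^{\bar d}_{\nu_1}(F_\phi(\rho))$. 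Letting $\varepsilon\to0$ completes the argument.

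The main obstacle is the domain bookkeeping for $\mathcal G^{\rm rel}$. One must check that $\Lambda^{(1)}_\phi(F)\leq D/m$ for every $F$. With $\bar{\boldsymbol d}$, $D=\sum_a\bar d_a$ may be less than $m$, so feasibility is not automatic. I would handle this by noting that the components of the decomposition actually realize $\lambda(\psi_j)\leq D/m$. By Jensen, the averaged $F_\phi$ then stays in the region where $\Lambda^{(1)}_\phi\leq D/m$. Alternatively, one sets $\mathcal G^{\rm rel}=+\infty$ outside the domain, which preserves both convexity and monotonicity.
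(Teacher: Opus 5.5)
Your proof is correct. It uses the same ingredients as the paper: the trace inequality $F_\phi(\psi)\leq m\nu_1\lambda(\psi)$, blockwise Cauchy--Schwarz feasibility of each component's block masses, and convexity plus monotonicity of $\mathcal G^{\rm rel}$. The only difference is where convexity is applied. You use Jensen on the composite $B_{\nu_1}=\mathcal G^{\rm rel}\circ\Lambda^{(1)}_\phi$ in the fidelity variable, which needs the convex-nondecreasing composition argument. The paper instead applies convexity of $\mathcal G^{\rm rel}$ to $\sum_jp_j\lambda_j$ and then uses $\Lambda^{(1)}_\phi(F_\phi(\rho))\leq\sum_jp_j\lambda_j\leq\frac1m\sum_a\bar d_a$, so the domain bookkeeping you flag for $\bar{\boldsymbol d}$ comes out automatically.
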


\begin{proof}
Let $\boldsymbol d'$ denote $\boldsymbol d$ in the universal case and
$\bar{\boldsymbol d}$ in the conditional case. Choose an admissible
$\varepsilon$-optimal decomposition; every decomposition is admissible in the
universal case. Componentwise,
\(H_{\boldsymbol d'}(\boldsymbol w_j)\geq
H_{\boldsymbol d_j^{\rm eff}}(\boldsymbol w_j)=\lambda_j\).
Writing \(D':=\sum_a d'_a\), feasibility implies
\(\lambda_j\leq D'/m\). The largest-coefficient trace inequality gives
\[
F_\phi(\rho)=\sum_jp_jF_\phi(|\psi_j\rangle)
\leq m\nu_1\sum_jp_j\lambda_j.
\]
Since every $\lambda_j\geq1/m$,
\[
\Lambda_\phi^{(1)}(\rho)
\leq\sum_jp_j\lambda_j
\leq\frac{D'}{m},
\]
so every argument of
\(\mathcal G^{\rm rel}_{\boldsymbol\omega,\boldsymbol d'}\) lies in its domain.
Each component block vector is feasible and satisfies
\[
    \mathcal{E}_{\boldsymbol{\omega}}(|\psi_j\rangle)
    \ge
    \mathcal{G}^{\rm rel}_{\boldsymbol{\omega},\boldsymbol d'}(\lambda_j).
\]
Convexity from Proposition~\ref{prop:multistep-convex-main} gives
\[
    \sum_jp_j
    \mathcal{E}_{\boldsymbol{\omega}}(|\psi_j\rangle)
    \ge
    \mathcal{G}^{\rm rel}_{\boldsymbol{\omega},\boldsymbol d'}
    \left(
        \sum_jp_j\lambda_j
    \right).
\]
Because
\[
    \sum_jp_j\lambda_j\ge\Lambda_\phi^{(1)}(\rho)
\]
monotonicity then gives
\[
    \sum_jp_j
    \mathcal{E}_{\boldsymbol{\omega}}(|\psi_j\rangle)
    \ge
    \mathcal{G}^{\rm rel}_{\boldsymbol{\omega},\boldsymbol d'}
    \big(
        \Lambda_\phi^{(1)}(\rho)
    \big).
\]
The selected decomposition bounds the left side by
$\mathcal E_{\boldsymbol\omega}(\rho)+\varepsilon$. Let
$\varepsilon\downarrow0$.
\end{proof}

One fidelity does not certify $\bar d_a<d_a$. Any improvement of
$\boldsymbol d$ requires source information satisfying the quantified
$\varepsilon$-optimal-decomposition condition, not one chosen decomposition.

Propositions~\ref{prop:full-spectrum-weighted} and
\ref{thm:single-observable-multiscale-main} together give
\begin{equation}
\mathcal E_{\boldsymbol\omega}(\rho)
\geq B_{\rm best}\bigl(F_\phi(\rho)\bigr).
\label{eq:weighted-best-bound}
\end{equation}
If $j_*$ is the first index with $\omega_{j_*}>0$, then
\[
\begin{aligned}
B_{\rm FS}(F)>0&\iff F>A_{n_{j_*}+1},\\
B_{\nu_1}(F)>0&\iff F>n_{j_*}\nu_1.
\end{aligned}
\]
Because $A_{n_{j_*}+1}\leq n_{j_*}\nu_1$, the full-spectrum baseline activates
no later and is strictly stronger on
$A_{n_{j_*}+1}<F\leq n_{j_*}\nu_1$ whenever the thresholds differ. Once both
bounds are active, neither has a universal advantage. The quantity $B_{\rm FS}$
retains every cumulative reference weight, whereas $B_{\nu_1}$ compresses the
reference to $\nu_1$ before relaxing the weighted profile. We report
$B_{\rm FS}$, $B_{\nu_1}$, and $B_{\rm best}$ below.

Supplied tails constrain coarse-grained spectra and partition refinements. They
are additional inputs, not consequences of one reference expectation.
\emph{The exact fixed-fidelity value of a general weighted profile remains
open.}

\subsection{Operational consequences and incomplete-information inference}

Reference~\cite{WuLiZhu2026} explicitly notes that its characteristic value is
independent of additional finite local dimensions once the nonzero reference
spectrum is fixed. The following corollary packages that observation in the
fixed-fidelity, zero-padding notation and includes the completed zero branch.

\begin{corollary}[Fixed-fidelity form of finite-dimensional independence]
\label{prop:ambient-extension}
Let $|\phi\rangle$ have finite Schmidt rank $s$ and be embedded by local
isometries into a finite-dimensional space
$\mathcal K_A\otimes\mathcal K_B$. Set
$D=\min\{\dim\mathcal K_A,\dim\mathcal K_B\}$ and pad the reference Schmidt
spectrum by zeros to length $D$. For every $2\leq r\leq D$ and
$F\in[0,1]$,
\begin{equation}
\inf_{\substack{\rho\in\mathcal D(\mathcal K_A\otimes\mathcal K_B)\\
F_\phi(\rho)=F}}E_r(\rho)
=\mathcal R_{r,\phi}(F).
\label{eq:ambient-fixed-fidelity-value}
\end{equation}
For $r\leq s$, zero padding leaves $A_r$ and $B_r$ unchanged. For $r>s$,
$A_r=1$ and both sides of Eq.~\eqref{eq:ambient-fixed-fidelity-value} vanish.
No infinite-dimensional extension is asserted.
\end{corollary}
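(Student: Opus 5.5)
Our plan is to reduce the statement to Corollary~\ref{cor:fixed-fidelity-value-dual}, applied to the embedded reference in the ambient space $\mathcal K_A\otimes\mathcal K_B$, and then to check how zero padding changes the coefficients $A_r$ and $B_r$.

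\textbf{Step 1: reduce to the earlier corollary.} Choose local isometries $W_A:\mathbb C^s\to\mathcal K_A$ and $W_B:\mathbb C^s\to\mathcal K_B$, so that
\[
|\phi\rangle=(W_A\otimes W_B)\sum_{i=1}^{s}\sqrt{\nu_i}|ii\rangle .
\]
Relabel so that $D=\dim\mathcal K_B\leq\dim\mathcal K_A$; this is harmless because $E_r$ and $F_\phi$ are symmetric under swapping the two parties. The pair $(\mathcal K_A,\mathcal K_B)$ then plays the role of $(\mathcal H_n,\mathcal H_m)$ with $m=D$. The reference has padded ordered spectrum $(\nu_1,\ldots,\nu_s,0,\ldots,0)$ of length $D$. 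Corollary~\ref{cor:fixed-fidelity-value-dual} uses only this padded spectrum and the choice of Schmidt isometries, so it applies verbatim. It gives the infimum as $\mathcal R_{r,\phi}(F)$, with $A_r$ and $B_r$ computed from the padded spectrum, and the infimum is attained by a pure state.

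\textbf{Step 2: the padding bookkeeping.}
\begin{itemize}
\item For $r\leq s$, the head sum $A_r=\sum_{i<r}\nu_i$ involves only indices $i\leq r-1\leq s-1$. It is therefore unchanged by padding, and so is $B_r=1-A_r$. The value is the same function as in the intrinsic dimension $s$ (or in any other finite embedding), which is the independence claim.
\item For $r>s$, the head contains all nonzero $\nu_i$, so $A_r=1$, $B_r=0$, and $\mathcal R_{r,\phi}\equiv0$.
\end{itemize}
For $r>s$ we also confirm directly that the left side vanishes. The zero-branch construction in the proof of Proposition~\ref{thm:single-observable-Er-main} uses the truncation $|\chi\rangle$, which is now $|\phi\rangle$ itself with Schmidt rank $s\leq r-1$. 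It then follows a unitary path on the second factor to a state orthogonal to $|\phi\rangle$. Every state along this path has Schmidt rank at most $r-1$, so by the intermediate value theorem the path realizes every $F\in[0,1]$ with $E_r=0$. Together with $E_r\geq0$, this shows the left side is zero.

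\textbf{Main obstacle.} The delicate point is the cyclic-shift step when $s=D$, where $r>s$ is impossible anyway, and more generally the requirement that the shift be fixed-point-free on a space of dimension $D\geq2$. This holds because $r\geq2$ forces $D\geq2$. We also need the local dimensions to be finite so that the convex roof is attained and compactness holds; this is why we make no infinite-dimensional claim.
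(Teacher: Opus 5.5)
Your proposal is correct and follows essentially the same route as the paper. In both, the Bures-angle lower bound and the pure attaining constructions (the geodesic two-block state and the cyclic-shift unitary path, which for $r>s$ acts on $|\phi\rangle$ itself) carry over to the padded ambient space, so the fixed-fidelity value holds with $A_r,B_r$ unchanged for $r\leq s$ and with $A_r=1$, $\mathcal R_{r,\phi}\equiv 0$ for $r>s$; the paper phrases the $r\leq s$ case by noting that the attaining states already lie in the original reference support, which amounts to your bookkeeping step.
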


\begin{proof}
Zero padding preserves the fidelity identity and Bures-angle bound in the
larger finite-dimensional space. If $r\leq s$, the positive-branch state
in Eq.~\eqref{eq:positive-geodesic-state} and the zero-branch unitary path both
lie in the original reference support, so the lower bound remains attainable.
If $r>s$, the reference has Schmidt rank at most $r-1$. A fixed-point-free local
cyclic shift maps it to a state orthogonal to it, and a continuous unitary path
between the identity and that shift preserves its Schmidt rank. The squared
overlap therefore covers $[0,1]$, proving zero-valued attainability for every
$F$.
\end{proof}

\begin{corollary}[Leakage-aware local-filtering certificate]
\label{cor:leakage-filtering}
Let $\Pi_m=\Pi_A\otimes\Pi_B$ be a product projector satisfying
$P_\phi=\Pi_mP_\phi\Pi_m$, and define
\[
p_{\rm in}=\operatorname{Tr}(\Pi_m\rho),
\qquad
\rho_m=\frac{\Pi_m\rho\Pi_m}{p_{\rm in}},
\qquad
F_{\rm tot}=\operatorname{Tr}(P_\phi\rho).
\]
For $p_{\rm in}>0$,
\begin{equation}
E_r(\rho)
\geq p_{\rm in}E_r(\rho_m)
\geq p_{\rm in}\mathcal R_{r,\phi}
\left(\frac{F_{\rm tot}}{p_{\rm in}}\right)
\geq\mathcal R_{r,\phi}(F_{\rm tot}).
\label{eq:leakage-filtering-bound}
\end{equation}
\end{corollary}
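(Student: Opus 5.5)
The three inequalities in Eq.~\eqref{eq:leakage-filtering-bound} will be proved separately, from left to right. The assumption $P_\phi=\Pi_mP_\phi\Pi_m$ gives $\operatorname{Tr}(P_\phi\rho)=\operatorname{Tr}(P_\phi\Pi_m\rho\Pi_m)$. Hence $F_\phi(\rho_m)=F_{\rm tot}/p_{\rm in}$. This lies in $[0,1]$ because $P_\phi\leq\Pi_m$.

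\emph{First inequality.} Use the local-filtering monotonicity of the convex roof. Fix any pure-state decomposition $\rho=\sum_jp_j|\psi_j\rangle\langle\psi_j|$. Write $q_j=\|\Pi_m\psi_j\|^2$ and $|\tilde\psi_j\rangle=\Pi_m|\psi_j\rangle/\sqrt{q_j}$ whenever $q_j>0$. Then $\rho_m=\sum_j(p_jq_j/p_{\rm in})|\tilde\psi_j\rangle\langle\tilde\psi_j|$ is a valid decomposition of $\rho_m$. It therefore suffices to prove the pure-state inequality
\[
q\,E_r(|\tilde\psi\rangle)\leq E_r(|\psi\rangle).
\]
Averaging this with weights $p_j$, using $\sum_jp_jq_j=p_{\rm in}$, and then taking the infimum over decompositions gives $p_{\rm in}E_r(\rho_m)\leq E_r(\rho)$.

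For the pure-state inequality, note that $qE_r(|\tilde\psi\rangle)$ is the Vidal tail of the unnormalized vector $(\Pi_A\otimes\Pi_B)|\psi\rangle$. In coefficient-matrix form this is $\sum_{i\geq r}\sigma_i(\Pi_AM\Pi_B^T)^2$. Singular values do not increase under multiplication by contractions, $\sigma_i(XMY)\leq\|X\|\|Y\|\sigma_i(M)$. So each squared singular value is bounded by the corresponding $\sigma_i(M)^2=\mu_i$, and summing over $i\geq r$ gives the claim. This is the step I would check most carefully. Its subtle point is that the normalization factor $q$ multiplies exactly the unnormalized tail.

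\emph{Second inequality.} Apply Eq.~\eqref{eq:reference-spectrum-mixed-bound} to $\rho_m$. This is legitimate because $\rho_m$ is a density operator on the same space (or on the range of $\Pi_m$, where Corollary~\ref{prop:ambient-extension} guarantees that the same $\mathcal R_{r,\phi}$ applies). Since $F_\phi(\rho_m)=F_{\rm tot}/p_{\rm in}$, multiplying by $p_{\rm in}$ gives the middle bound.

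\emph{Third inequality.} This is a perspective argument. The function $\mathcal R:=\mathcal R_{r,\phi}$ is convex on $[0,1]$ by Proposition~\ref{thm:single-observable-Er-main}, and $\mathcal R(0)=0$. For $p\in(0,1]$ and $x=F_{\rm tot}/p\in[0,1]$, convexity gives
\[
\mathcal R(px)=\mathcal R\bigl(px+(1-p)\cdot0\bigr)\leq p\,\mathcal R(x)+(1-p)\,\mathcal R(0)=p\,\mathcal R(x).
\]
This is exactly $p_{\rm in}\mathcal R(F_{\rm tot}/p_{\rm in})\geq\mathcal R(F_{\rm tot})$, which completes the chain.
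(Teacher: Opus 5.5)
Your proof is correct. The second and third inequalities follow the paper exactly: you apply Eq.~\eqref{eq:reference-spectrum-mixed-bound} to $\rho_m$ with $F_\phi(\rho_m)=F_{\rm tot}/p_{\rm in}$, and then use the perspective inequality $\mathcal R(px)\leq p\,\mathcal R(x)$, which comes from convexity and $\mathcal R(0)=0$.

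The genuine difference is in the first inequality.
\begin{itemize}
\item The paper treats the two local binary projections as a four-outcome LOCC instrument. It invokes Vidal's strong monotonicity of the convex-roof tails under such instruments and then discards the three outcomes outside the target subspace.
\item You prove directly only the one-outcome piece that is actually used. Singular-value domination $\sigma_i(\Pi_A M\Pi_B^{T})\leq\sigma_i(M)$ gives $q\,E_r(|\tilde\psi\rangle)\leq E_r(|\psi\rangle)$ term by term. Pushing an arbitrary decomposition of $\rho$ through the filter then yields $p_{\rm in}E_r(\rho_m)\leq E_r(\rho)$.
\end{itemize}

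The two routes trade off as follows.
\begin{itemize}
\item The paper's route is shorter, but it delegates the mixed-state step to the general strong-monotonicity theorem. The convex-roof part of that theorem's proof is essentially your decomposition argument, combined with a pure-state monotonicity result.
\item Your route is self-contained and needs neither LOCC structure nor completeness of an instrument.
\item Your first-step argument also holds verbatim for any product of local contractions $X\otimes Y$ with $\|X\|,\|Y\|\leq1$, not only for projections.
\item You make explicit the ambient-space point that the paper leaves implicit: $\mathcal R_{r,\phi}$ still lower-bounds $E_r$ of a state supported on the range of $\Pi_m$.
\end{itemize}
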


\begin{proof}
The two local binary projections form a four-outcome LOCC instrument. Strong
monotonicity bounds the average Vidal tail. Dropping the three nonnegative terms
outside the target subspace leaves
$E_r(\rho)\geq p_{\rm in}E_r(\rho_m)$
\cite{VidalMonotones}. Because $P_\phi$ is supported in $\Pi_m$,
$F_\phi(\rho_m)=F_{\rm tot}/p_{\rm in}$, and the second inequality follows from
Proposition~\ref{thm:single-observable-Er-main}. Convexity and
$\mathcal R_{r,\phi}(0)=0$ imply
$\mathcal R_{r,\phi}(px)\leq p\mathcal R_{r,\phi}(x)$ for $0\leq p\leq1$.
\end{proof}

The perspective function
\[
G_{r,\phi}(p,F):=p\mathcal R_{r,\phi}(F/p),
\qquad 0\leq F\leq p\leq1,
\]
is nondecreasing in $F$ and nonincreasing in $p$. A joint coverage event with
$F_{\rm tot}\geq F_L$, $p_{\rm in}\leq p_U$, and
$A_r\leq\overline A_r$ yields
\begin{equation}
E_r(\rho)\geq
p_U\mathcal R_{\overline A_r}\!\left(\frac{F_L}{p_U}\right),
\qquad 0\leq F_L\leq p_U\leq1.
\label{eq:leakage-perspective-confidence}
\end{equation}
For a nonrectangular joint region, minimize $p\mathcal R_{A}(F/p)$ directly.
Combine marginal extrema only under simultaneous coverage.

\begin{corollary}[Self-testing-to-tail conversion]
\label{cor:self-testing-tail-conversion}
Let
$\rho_{\rm phys}\in\mathcal D(\mathcal K_A\otimes\mathcal K_B)$
be a finite-dimensional physical state, and let
\[
\Gamma_A:\mathcal L(\mathcal K_A)\to\mathcal L(\mathcal H_n),
\qquad
\Gamma_B:\mathcal L(\mathcal K_B)\to\mathcal L(\mathcal H_m)
\]
be local completely positive trace-preserving extraction maps. Set
\[
\sigma_{\rm ext}:=(\Gamma_A\otimes\Gamma_B)(\rho_{\rm phys}).
\]
Fix a pure reference $|\phi\rangle\in\mathcal H_n\otimes\mathcal H_m$ and
$2\leq r\leq m$. If a self-testing analysis certifies
\[
F_\phi(\sigma_{\rm ext})\geq F_L,
\]
then
\begin{equation}
E_r(\rho_{\rm phys})
\geq E_r(\sigma_{\rm ext})
\geq \mathcal R_{r,\phi}(F_L).
\label{eq:self-testing-tail-conversion}
\end{equation}
The condition $F_L>A_r$ certifies
$\operatorname{SN}(\rho_{\rm phys})\geq r$.
If the fidelity statement holds on an event of probability at least
$1-\epsilon$, the same qualification applies to Eq.~\eqref{eq:self-testing-tail-conversion}.
\end{corollary}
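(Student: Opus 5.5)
The plan is to prove the two inequalities in Eq.~\eqref{eq:self-testing-tail-conversion} separately and then read off the Schmidt-number and confidence statements.

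\emph{First inequality.} It is monotonicity of the convex roof $E_r$ under local operations. Choose an optimal finite decomposition $\rho_{\rm phys}=\sum_jp_j|\psi_j\rangle\langle\psi_j|$, which exists by the Carath\'eodory argument of Sec.~II. Write Kraus representations $\Gamma_A(X)=\sum_kK_kXK_k^\dagger$ and $\Gamma_B(Y)=\sum_lL_lYL_l^\dagger$. Then
\[
\sigma_{\rm ext}=\sum_{j,k,l}p_j\,(K_k\otimes L_l)|\psi_j\rangle\langle\psi_j|(K_k\otimes L_l)^\dagger .
\]
This is a pure-state ensemble for $\sigma_{\rm ext}$, after normalizing each unnormalized vector and weighting it by $q_{jkl}=p_j\|(K_k\otimes L_l)\psi_j\|^2$. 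Hence $E_r(\sigma_{\rm ext})\le\sum_{j,k,l}q_{jkl}E_r(\psi_{jkl})$. For each fixed $j$, the family $\{K_k\otimes L_l\}$ is a local product instrument. Ensemble (strong) monotonicity of the Vidal tail under LOCC~\cite{VidalMonotones}, which is the same fact already used in Corollary~\ref{cor:leakage-filtering}, then gives $\sum_{k,l}q_{jkl}E_r(\psi_{jkl})\le p_jE_r(\psi_j)$. Summing over $j$ proves $E_r(\sigma_{\rm ext})\le E_r(\rho_{\rm phys})$.

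One point needs a remark. The local dimensions change from $\mathcal K_A\otimes\mathcal K_B$ to $\mathcal H_n\otimes\mathcal H_m$, so $E_r$ on the two sides is computed with different zero padding. Zero padding does not alter the tails, so the monotone is well defined across these dimensions. An alternative route avoids the monotonicity citation: Lemma~\ref{lem:convex-roof-fidelity-identity} together with the facts that local CPTP maps send $\mathcal S_{r-1}$ into $\mathcal S_{r-1}$ and cannot decrease Uhlmann fidelity (data processing) also yields the first inequality. I would present this fidelity route, since it is cleaner.

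\emph{Second inequality.} Apply Eq.~\eqref{eq:reference-spectrum-mixed-bound} of Proposition~\ref{thm:single-observable-Er-main} to $\sigma_{\rm ext}\in\mathcal D(\mathcal H_n\otimes\mathcal H_m)$, which gives $E_r(\sigma_{\rm ext})\ge\mathcal R_{r,\phi}(F_\phi(\sigma_{\rm ext}))$. Since $\mathcal R_{r,\phi}$ is nondecreasing (same proposition), the certified bound $F_\phi(\sigma_{\rm ext})\ge F_L$ gives $\mathcal R_{r,\phi}(F_\phi(\sigma_{\rm ext}))\ge\mathcal R_{r,\phi}(F_L)$.

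\emph{Schmidt-number and confidence statements.} If $F_L>A_r$, the final statement of Proposition~\ref{thm:single-observable-Er-main} makes $\mathcal R_{r,\phi}(F_L)>0$, so $E_r(\rho_{\rm phys})>0$. The characterization $E_r(\rho)=0\iff\operatorname{SN}(\rho)\le r-1$ then gives $\operatorname{SN}(\rho_{\rm phys})\ge r$. For the probabilistic qualification, the chain of inequalities is deterministic given the fidelity event, so it holds on that event and the probability bound $1-\epsilon$ carries over unchanged.

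The only step that is not bookkeeping is the first inequality. Care is needed there because the extraction maps change the local Hilbert spaces and are not required to be unital or isometric. The fidelity-based argument handles this uniformly, so I would lead with it and cite LOCC monotonicity only as a remark.
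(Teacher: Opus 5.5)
Your proposal is correct. Your second inequality, your confidence qualification, and your Kraus-plus-strong-monotonicity remark match the paper. The paper proves the first inequality only by citing that convex-roof Vidal tails do not increase under deterministic local operations \cite{VidalMonotones}.

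Your preferred fidelity route to the first inequality is genuinely different. It combines three ingredients:
\begin{itemize}
\item Lemma~\ref{lem:convex-roof-fidelity-identity}, applied to both $\rho_{\rm phys}$ and $\sigma_{\rm ext}$;
\item the fact that local Kraus operators cannot raise Schmidt rank, so $\Gamma_A\otimes\Gamma_B$ maps $\mathcal S_{r-1}$ of the physical space into $\mathcal S_{r-1}$ of $\mathcal H_n\otimes\mathcal H_m$;
\item data-processing monotonicity of Uhlmann fidelity, applied to an optimal $\sigma^\ast$ for $\rho_{\rm phys}$.
\end{itemize}
Together these give $1-E_r(\sigma_{\rm ext})\geq F(\rho_{\rm phys},\sigma^\ast)=1-E_r(\rho_{\rm phys})$.

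This route has real advantages. It stays within the paper's own machinery. It needs no zero-padding bookkeeping when the local dimensions change. It also proves monotonicity under every channel that maps $\mathcal S_{r-1}$ of the input space into $\mathcal S_{r-1}$ of the output space, not only under local channels. What it does not supply is ensemble (selective-instrument) monotonicity. The cited Vidal result covers that as well, and the paper uses it in Corollary~\ref{cor:leakage-filtering}.

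Your Schmidt-number step also differs slightly from the paper's. You conclude from $E_r(\rho_{\rm phys})>0$ and the zero-set characterization of $E_r$ in the physical space. The paper instead first obtains $\operatorname{SN}(\sigma_{\rm ext})\geq r$ and then uses that local channels cannot increase Schmidt number. The two arguments are equivalent, because your first inequality already encodes that nonincrease.
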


\begin{proof}
Convex-roof Vidal tails cannot increase under deterministic local operations
\cite{VidalMonotones}, which gives
$E_r(\rho_{\rm phys})\geq E_r(\sigma_{\rm ext})$.
Proposition~\ref{thm:single-observable-Er-main} and monotonicity of
$\mathcal R_{r,\phi}$ give the second inequality. If $F_L>A_r$, its right-hand
side is positive, so $\operatorname{SN}(\sigma_{\rm ext})\geq r$. Local channels
cannot increase Schmidt number, which proves the final claim.
\end{proof}

In swap-based self-testing, the extraction maps include local isometries and
discarded auxiliary systems. Corollary
\ref{cor:self-testing-tail-conversion} only post-processes a valid fidelity
lower bound; it does not change the sampling, fair-sampling, or device
assumptions.

\begin{corollary}[Confidence-qualified projector calibration]
\label{cor:confidence-qualified-projector-calibration}
Suppose that, on a joint coverage event of probability at least $1-\epsilon$,
the reference expectation and cumulative reference weight satisfy
\[
F_\phi(\rho)\geq F_L,
\qquad
A_r\leq\overline A_r.
\]
Then, on the same event,
\[
E_r(\rho)\geq\mathcal R_{\overline A_r}(F_L).
\]
If, in addition, an implemented pure projector $\widetilde\tau$ obeys
$\|\widetilde\tau-|\phi\rangle\langle\phi|\|_\infty\leq\varepsilon_\phi$
and the measured expectation has a one-sided lower confidence bound
$\operatorname{Tr}(\widetilde\tau\rho)\geq\widetilde F_L$, one may take
$F_L=\max\{0,\widetilde F_L-\varepsilon_\phi\}$.
\end{corollary}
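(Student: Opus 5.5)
The plan is to reduce the statement to the already established mixed-state bound in Proposition~\ref{thm:single-observable-Er-main} by way of two monotonicity facts. The first is that $\mathcal R_{A}(F)$ is nonincreasing in the threshold parameter $A$. The second is that $\mathcal R_{r,\phi}$ is nondecreasing in $F$, which Proposition~\ref{thm:single-observable-Er-main} already records. On the joint coverage event we have $E_r(\rho)\geq\mathcal R_{r,\phi}(F_\phi(\rho))=\mathcal R_{A_r}(F_\phi(\rho))$. Monotonicity in $F$ then gives $\mathcal R_{A_r}(F_\phi(\rho))\geq\mathcal R_{A_r}(F_L)$, and monotonicity in $A$ together with $A_r\leq\overline A_r$ gives $\mathcal R_{A_r}(F_L)\geq\mathcal R_{\overline A_r}(F_L)$. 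Every inequality is deterministic once we are on the event, so the probability qualification transfers with no further argument.

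The step that needs an actual check is monotonicity in $A$. I would use the Bures-angle form $\mathcal R_A(F)=\sin^2\bigl([\arccos\sqrt A-\arccos\sqrt F]_+\bigr)$.
\begin{itemize}
\item As $A$ grows on $[0,1]$, the angle $\arccos\sqrt A$ decreases.
\item Hence the positive part decreases.
\item Since $\sin^2$ is increasing on $[0,\pi/2]$, $\mathcal R_A(F)$ decreases.
\end{itemize}
Here $\overline A_r$ must lie in $[0,1]$. If a supplied bound exceeds $1$, we replace it by $\min\{\overline A_r,1\}$, and then the bound is $0$, which is trivially valid. The boundary case $A=1$ is covered because $\mathcal R_1\equiv0$.

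For the implemented projector I would bound the measured and ideal expectations against each other. Hölder's inequality with $\|\rho\|_1=1$ gives $|\operatorname{Tr}(\widetilde\tau\rho)-F_\phi(\rho)|\leq\|\widetilde\tau-|\phi\rangle\langle\phi|\|_\infty\leq\varepsilon_\phi$. It follows that $F_\phi(\rho)\geq\operatorname{Tr}(\widetilde\tau\rho)-\varepsilon_\phi\geq\widetilde F_L-\varepsilon_\phi$. Combining this with $F_\phi(\rho)\geq0$ shows that $F_L=\max\{0,\widetilde F_L-\varepsilon_\phi\}$ is a valid lower bound. The confidence statement for $\widetilde F_L$ has to hold on the same joint event as the bound on $A_r$. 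I would state this explicitly, so that the marginal guarantees are not combined without simultaneous coverage.

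I expect no real obstacle. The only point needing care is that $F_L$ and $\overline A_r$ must hold simultaneously. The monotonicity in $A$ is the one ingredient not stated earlier in the paper, and the Bures-angle representation above settles it directly.
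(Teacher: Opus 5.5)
Your proposal is correct and follows the paper's proof essentially step for step. The chain is $E_r(\rho)\geq\mathcal R_{A_r}(F_\phi(\rho))\geq\mathcal R_{A_r}(F_L)\geq\mathcal R_{\overline A_r}(F_L)$ on the joint event, plus H\"older's inequality for the implemented projector. The only difference is cosmetic. The paper checks monotonicity in $A$ on the algebraic bracket $\sqrt{(1-A)F}-\sqrt{A(1-F)}$ and treats the zero branch separately, whereas your Bures-angle form covers both branches at once through the positive part.
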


\begin{proof}
The function $\mathcal R_A(F)$ is nondecreasing in $F$. On its positive branch,
the bracket in Eq.~\eqref{eq:reference-spectrum-boundary} is positive and
strictly decreases with $A$. The zero branch preserves this ordering, so
$\mathcal R_A(F)$ is nonincreasing in $A$. Applying
Proposition~\ref{thm:single-observable-Er-main} gives
\[
E_r(\rho)\geq\mathcal R_{A_r}(F_\phi(\rho))
\geq\mathcal R_{\overline A_r}(F_L).
\]
For the implemented projector, H\"older's inequality gives
$|\operatorname{Tr}[(\widetilde\tau-|\phi\rangle\langle\phi|)\rho]|
\leq\varepsilon_\phi$. No independence assumption is needed, provided the
reported event has the stated joint coverage.
\end{proof}

Corollary~\ref{cor:confidence-qualified-projector-calibration} is an inference
rule, not a prescription for measuring $P_\phi$. Any statistic $Z$ is admissible
if a proved analysis supplies a joint event of the
form $F_\phi(\rho)\geq f(Z)$ and $A_r\leq\overline A_r$. Substituting
$F_L=f(Z)$ preserves its coverage and assumptions. This includes bounds from
several local correlations but does not turn them into one local setting.

Near the threshold, if $0<A_r<1$ and
$F=A_r+\delta$ with $\delta\downarrow 0$, then
\[
\mathcal R_{r,\phi}(A_r+\delta)
=\frac{\delta^2}{4A_rB_r}+O(\delta^3).
\]
A certificate requires $F_L>\overline A_r$; sampling and calibration errors may
erase a small nominal violation.

The thresholds $\{A_r\}_{r=2}^{m}$ determine the reference spectrum:
$\nu_1=A_2$, $\nu_i=A_{i+1}-A_i$ for $2\leq i<m$, and
$\nu_m=1-A_m$. The same global projector expectation can consequently be
post-processed at every threshold. One bound uses one $A_r$; the hierarchy uses
the full spectrum. This does not make the projector a single local setting.

\subsubsection{Measurement resources, finite samples, and dimension leakage}

The protocol takes one expectation value as input; its cost depends on the local
decomposition and measurement architecture. In the reference Schmidt basis,
define
\[
\begin{aligned}
D_i&:=|i\rangle\langle i|,\\
X_{ij}&:=\frac{|i\rangle\langle j|+|j\rangle\langle i|}{\sqrt2},
&
Y_{ij}&:=\frac{|i\rangle\langle j|-|j\rangle\langle i|}{i\sqrt2}.
\end{aligned}
\]
The projector $P_\phi=|\phi\rangle\langle\phi|$ then has the local Hermitian
decomposition
\begin{equation}\label{eq:projector-local-decomposition}
P_\phi
=\sum_{i=1}^{m}\nu_iD_i\otimes D_i
+\sum_{1\leq i<j\leq m}\sqrt{\nu_i\nu_j}
\bigl(X_{ij}\otimes X_{ij}-Y_{ij}\otimes Y_{ij}\bigr),
\end{equation}
up to local Schmidt-basis isometries. It contains $m^2$ product correlators.
Separate measurements require at most $m^2$ product settings; grouping all
diagonal terms gives $m^2-m+1$. Commuting groups, collective measurements, or a
direct global projection change the platform-specific cost. An unrestricted
$m\times m$ state has order $m^4$ real tomographic parameters, but this count
alone proves no shot or setting advantage.

For a direct two-outcome measurement $\{P_\phi,I-P_\phi\}$, let
\[
\widehat F=\frac1N\sum_{i=1}^{N}X_i,
\qquad X_i\in\{0,1\}.
\]
Hoeffding's inequality \cite{Hoeffding1963} gives the one-sided confidence bound
\begin{equation}\label{eq:hoeffding-projector-lower}
F_\phi(\rho)\geq
F_L:=\max\left\{0,
\widehat F-\sqrt{\frac{\log(1/\epsilon)}{2N}}
\right\}
\end{equation}
with probability at least $1-\epsilon$. If instead
$P_\phi=\sum_{\ell}a_\ell A_\ell\otimes B_\ell$ is estimated from independent
local samples with normalized outcomes in $[-1,1]$ and $N_\ell$ shots per term,
a conservative Hoeffding radius is
\begin{equation}\label{eq:local-decomposition-hoeffding}
\Delta_\epsilon
=\sqrt{2\log(1/\epsilon)
\sum_\ell\frac{a_\ell^2}{N_\ell}}.
\end{equation}
Then $F_L=\max\{0,\widehat F-\Delta_\epsilon\}$. This radius exposes the shot
allocation; covariance and commuting-group estimators require separate
concentration analyses. Data-derived spectra, calibration errors, or multiple
thresholds require the joint coverage event in
Corollary~\ref{cor:confidence-qualified-projector-calibration}.

For a fixed total budget $N_{\rm tot}=\sum_\ell N_\ell$, minimizing the
independent-term radius in Eq.~\eqref{eq:local-decomposition-hoeffding} gives
$N_\ell\propto|a_\ell|$ (up to integer rounding), and hence
\[
\Delta_\epsilon^{\rm alloc}
=\sqrt{2\log(1/\epsilon)}
\frac{\sum_\ell|a_\ell|}{\sqrt{N_{\rm tot}}}.
\]
If pilot estimates of the setting variances $\sigma_\ell^2$ are available, the
variance objective instead gives $N_\ell\propto|a_\ell|\sigma_\ell$. These rules
allocate shots, not basis changes. In the elementary
grouping of Eq.~\eqref{eq:projector-local-decomposition}, all diagonal terms can
be estimated in one Schmidt-basis setting: for a joint outcome $(a,b)$ use
$Z_D=\nu_a\mathbf 1\{a=b\}$, whose range is $[0,\nu_1]$ and whose expectation is
$\sum_i\nu_i\Pr(i,i)$. Each off-diagonal setting has its own bounded outcome;
grouped settings must use their range and covariance rather than the
independent-term radius. A resource report should list correlators, groups,
shots, allocation, outcome ranges, confidence level, and switching overhead.

\paragraph{Forward design criterion.}
Suppose an experimental model supplies a nominal fidelity
$F_{\rm nom}(\phi)$ for each implementable reference
$\phi\in\Phi_{\rm impl}$ and a valid error radius
$\Delta_\phi(\{N_\ell\},\epsilon)$. For a fixed shot budget, reference and tail
selection can be posed before data collection as
\begin{equation}
\begin{aligned}
\max_{\phi,r,\{N_\ell\}}\;&
\Bigl[\arccos\sqrt{A_r(\phi)}\\[-1mm]
&\quad-\arccos\sqrt{[F_{\rm nom}(\phi)-\Delta_\phi]_+}\Bigr]_+ \\
\text{subject to}\;&
\phi\in\Phi_{\rm impl},\quad 2\leq r\leq m,\\
&N_\ell\in\mathbb N_0,\quad \sum_\ell N_\ell=N_{\rm tot}.
\end{aligned}
\label{eq:reference-shot-design}
\end{equation}
The objective is a certified Bures-angle margin; applying $\sin^2$ gives the
predicted tail certificate. Under one uncertainty model, a nonmaximally
entangled reference is selected only when its fidelity gain offsets its larger
cumulative threshold.
Equation~\eqref{eq:reference-shot-design} remains conditional on the source and
measurement model; it does not identify a universally optimal target.

\paragraph{Data-dependent references and joint one-sided coverage.}
Equation~\eqref{eq:hoeffding-projector-lower} does not cover a target selected
and tested on the same observations because it treats the projector as fixed.
Held-out data avoid this selection effect.

\begin{proposition}[Held-out certificate for a selected reference]
\label{prop:held-out-selected-reference}
Let $\mathcal C$ be an arbitrary calibration record and let
$|\widehat\phi_{\mathcal C}\rangle$ be the normalized reference selected from
that record. Conditional on $\mathcal C$, freeze this reference and measure
$\{P_{\widehat\phi_{\mathcal C}},I-P_{\widehat\phi_{\mathcal C}}\}$ on
$N_F$ fresh, independent and identically distributed test rounds. If $S_F$ is
the number of projector outcomes, define
\[
\widehat F=\frac{S_F}{N_F},
\qquad
F_L=\max\left\{0,\widehat F-
\sqrt{\frac{\log(1/\epsilon_F)}{2N_F}}\right\}.
\]
Write $\widehat A_r$ for the sum of the largest $r-1$ Schmidt probabilities of
$|\widehat\phi_{\mathcal C}\rangle$. Then, with probability at least
$1-\epsilon_F$, the inequalities
\[
E_r(\rho)\geq
\mathcal R_{\widehat A_r}(F_L),
\qquad 2\leq r\leq m,
\]
hold simultaneously for the state sampled in the test rounds.
\end{proposition}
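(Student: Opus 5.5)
The plan is to condition on the calibration record $\mathcal C$ and reduce to the fixed-reference Hoeffding bound, Eq.~\eqref{eq:hoeffding-projector-lower}. After that, every threshold is handled by Proposition~\ref{thm:single-observable-Er-main} and the monotonicity used in Corollary~\ref{cor:confidence-qualified-projector-calibration}.

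\emph{Step 1 (conditioning).} Fix any realization $c$ of $\mathcal C$. The reference $|\widehat\phi_c\rangle$ is then a deterministic normalized vector, and so is its ordered spectrum; hence each $\widehat A_r$, $2\leq r\leq m$, is a fixed number. By hypothesis the $N_F$ test rounds are fresh and, conditional on $\mathcal C=c$, i.i.d. measurements of $\{P_{\widehat\phi_c},I-P_{\widehat\phi_c}\}$ on $\rho$. The outcomes $X_i\in\{0,1\}$ are therefore i.i.d. Bernoulli with mean $F_{\widehat\phi_c}(\rho)$. The one-sided Hoeffding inequality gives
\[
\Pr\bigl[F_{\widehat\phi_c}(\rho)\geq F_L\,\big|\,\mathcal C=c\bigr]\geq1-\epsilon_F .
\]
Truncating at $0$ only enlarges this event.

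\emph{Step 2 (deterministic implication on the event).} On the event $F_{\widehat\phi_c}(\rho)\geq F_L$, apply Eq.~\eqref{eq:reference-spectrum-mixed-bound} with reference $|\widehat\phi_c\rangle$. This gives $E_r(\rho)\geq\mathcal R_{\widehat A_r}(F_{\widehat\phi_c}(\rho))$ for every $r$. Since $\mathcal R_A$ is nondecreasing in $F$ (Proposition~\ref{thm:single-observable-Er-main}), we get $E_r(\rho)\geq\mathcal R_{\widehat A_r}(F_L)$. This holds for all $r$ at once, because a single fidelity event feeds every threshold. No union bound over $r$ is needed, and the thresholds themselves involve no estimation error: they are exact functions of the frozen reference.

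\emph{Step 3 (deconditioning).} Let $\mathsf G$ be the event that all inequalities hold. Then $\Pr[\mathsf G\mid\mathcal C=c]\geq1-\epsilon_F$ for every $c$, so averaging over the law of $\mathcal C$ gives $\Pr[\mathsf G]\geq1-\epsilon_F$ by the tower property.

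The main obstacle is making the conditioning step rigorous. One must check that the selection map $c\mapsto\widehat\phi_c$ is measurable and that the test data are conditionally i.i.d. given $\mathcal C$. This is exactly where held-out data matter, since reusing the data would break Step 1. I will state this assumption explicitly, read ``fresh'' as conditional independence given $\mathcal C$, and note that an arbitrary, possibly adversarial, selection rule is harmless once this holds. Everything after Step 1 is deterministic.
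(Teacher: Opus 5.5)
Your proposal is correct and follows the paper's proof. Conditioning on $\mathcal C$ freezes the reference and every $\widehat A_r$, the one-sided Hoeffding bound gives conditional coverage $1-\epsilon_F$, and one fidelity event feeds every threshold through Eq.~\eqref{eq:reference-spectrum-mixed-bound} and monotonicity in $F$ (exactly Corollary~\ref{cor:confidence-qualified-projector-calibration} with $\overline A_r=\widehat A_r$), so no union bound over $r$ is needed and averaging over $\mathcal C$ preserves coverage; one small wording fix is that truncating at $0$ raises $F_L$ and so can only shrink the event $\{F_{\widehat\phi_c}(\rho)\geq F_L\}$, although here it leaves the event unchanged because $F_{\widehat\phi_c}(\rho)\geq0$ always.
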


\begin{proof}
Conditional on $\mathcal C$, the selected projector and every
$\widehat A_r$ are fixed. Hoeffding's inequality gives
$F_{\widehat\phi_{\mathcal C}}(\rho)\geq F_L$ except on an event of conditional
probability at most $\epsilon_F$. Corollary
\ref{cor:confidence-qualified-projector-calibration} then gives all thresholds
on that same event; no union over $r$ is needed. Averaging over
$\mathcal C$ preserves the coverage.
\end{proof}

This prospective protocol is not implemented in Sec.~VI. Calibration declares
the target; the proposition then certifies fidelity and Vidal tails relative to
it, not the unknown source spectrum. It assumes stationarity and conditional
sampling. Drifting or adversarial rounds require another concentration bound.

Raw mode counts support a second construction when the reference bases,
relative phases, and projector model are fixed but the Schmidt magnitudes are
uncertain. Assume a calibrated Schmidt-basis measurement yields
$\boldsymbol C=(C_1,\ldots,C_m)$ from
$N_A=\sum_iC_i$ independent multinomial trials with probabilities
$\boldsymbol\nu=(\nu_1,\ldots,\nu_m)$. This requires a validated detector-to-mode
map; intensity counts do not certify phases, basis alignment, or leakage. For a prespecified
nonempty set $\mathcal I\subseteq\{2,\ldots,m\}$, let
$q=|\mathcal I|$ and define
\[
\widehat A_r
=\max_{|S|=r-1}\frac{1}{N_A}\sum_{i\in S}C_i,
\]
and
\[
\overline A_r
=\min\left\{1,\widehat A_r+
\sqrt{\frac{\log\binom{m}{r-1}+\log(q/\epsilon_A)}{2N_A}}
\right\}.
\]

\begin{proposition}[Fixed-basis raw-count one-sided certificate]
\label{prop:raw-count-joint-certificate}
Under the multinomial calibration model above,
$A_r\leq\overline A_r$ holds simultaneously for all $r\in\mathcal I$ with
probability at least $1-\epsilon_A$. If a valid fidelity analysis also gives
$F_\phi(\rho)\geq F_L$ except with probability $\epsilon_F$, then, with
probability at least $1-\epsilon_A-\epsilon_F$,
\[
E_r(\rho)\geq\mathcal R_{\overline A_r}(F_L)
\quad\text{for every }r\in\mathcal I.
\]
The fidelity event must refer to the same fixed bases, phases, and underlying
reference spectrum as the calibration event. The two failure events need not be
independent. Any operator-calibration error
must additionally be included through Corollary
\ref{cor:confidence-qualified-projector-calibration}.
\end{proposition}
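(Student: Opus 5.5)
Our plan is to prove the calibration event first and then intersect it with the fidelity event through a union bound. Fix $r\in\mathcal I$ and let $S^\ast_r$ be the index set of the $r-1$ largest true probabilities, so that $A_r=\sum_{i\in S^\ast_r}\nu_i$. We will not work with the data-dependent maximizer in $\widehat A_r$. Instead we use the elementary inequality
\[
\widehat A_r\geq\frac{1}{N_A}\sum_{i\in S^\ast_r}C_i=:\widehat A_{S^\ast_r}.
\]
Because $S^\ast_r$ is determined by $\boldsymbol\nu$ alone, it is a fixed set. Under the multinomial model, $N_A\widehat A_{S^\ast_r}$ is therefore a sum of $N_A$ independent Bernoulli indicators, each with mean $A_r$. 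Hoeffding's inequality \cite{Hoeffding1963} then gives
\[
\Pr\!\left[A_r>\widehat A_{S^\ast_r}+t_r\right]\leq e^{-2N_At_r^2}.
\]

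Next we choose the radius
\[
t_r=\sqrt{\bigl[\log\tbinom{m}{r-1}+\log(q/\epsilon_A)\bigr]/(2N_A)},
\]
so the failure probability is at most $\epsilon_A/\bigl(q\binom{m}{r-1}\bigr)\leq\epsilon_A/q$. The binomial factor is more than we need, since $S^\ast_r$ is fixed. As a fallback that does not rely on this observation, one could take a union over all $\binom{m}{r-1}$ subsets $S$ of size $r-1$. In that case one applies Hoeffding to each $\sum_{i\in S}\nu_i-\widehat A_S$ and uses $A_r=\max_S\sum_{i\in S}\nu_i$ together with $\widehat A_r\geq\widehat A_S$. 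Either route gives $A_r\leq\widehat A_r+t_r$ with failure probability at most $\epsilon_A/q$. Truncating at $1$ is harmless because $A_r\leq1$. A union bound over the $q$ indices in $\mathcal I$ then yields simultaneous coverage with probability at least $1-\epsilon_A$.

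For the second claim, let $\mathcal A$ be the calibration event and $\mathcal F$ the event $F_\phi(\rho)\geq F_L$. The union bound gives $\Pr(\mathcal A\cap\mathcal F)\geq1-\epsilon_A-\epsilon_F$ with no independence assumption. On $\mathcal A\cap\mathcal F$ we apply Corollary~\ref{cor:confidence-qualified-projector-calibration} for each $r\in\mathcal I$. That corollary rests on the monotonicity of $\mathcal R_A(F)$, nondecreasing in $F$ and nonincreasing in $A$, and on Proposition~\ref{thm:single-observable-Er-main}. It gives $E_r(\rho)\geq\mathcal R_{A_r}(F_\phi(\rho))\geq\mathcal R_{\overline A_r}(F_L)$ simultaneously for all $r\in\mathcal I$.

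The main obstacle is the careful handling of the data-dependent maximum in $\widehat A_r$. The key point is that the one-sided direction we need only requires a lower bound on $\widehat A_r$ by a fixed-set statistic, so the maximization over subsets only helps. We will also remark that the calibration and fidelity events must refer to the same $\boldsymbol\nu$ and bases; otherwise the quantity $A_r$ entering $\mathcal R$ is not the one that was calibrated. Any operator-calibration error enters only through the $\varepsilon_\phi$ adjustment already provided in Corollary~\ref{cor:confidence-qualified-projector-calibration}.
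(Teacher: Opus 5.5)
Your argument is correct, and for the calibration event it takes a sharper route than the paper.

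The paper applies Hoeffding's inequality to every fixed subset $S$ of $r-1$ labels and takes a union bound over all $\binom{m}{r-1}$ subsets. That union bound is the source of the $\log\binom{m}{r-1}$ term in $\overline A_r$, and your fallback reproduces it exactly.

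Your primary route instead uses the single deterministic set $S^\ast_r$ of true top labels as an analysis device, together with $\widehat A_r\geq\widehat A_{S^\ast_r}$. For the one-sided claim $A_r\leq\overline A_r$, maximizing over subsets can only raise the statistic, so one Hoeffding bound per $r$ suffices. This proves the proposition as stated a fortiori, and it shows more: the radius $\sqrt{\log(q/\epsilon_A)/(2N_A)}$ already gives coverage $1-\epsilon_A$.

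Consequently the requirement in Eq.~\eqref{eq:spectrum-calibration-sample-complexity} becomes $N_A\geq\log(q/\epsilon_A)/(2\Delta_A^2)$, independent of $m$. For $\Delta_A=0.05$, $q=1$, $\epsilon_A=0.01$ this is about $922$ trials, against $141724$ at $m=1021$ in Table~\ref{tab:spectrum-sample-complexity}. The paper's remark that the combinatorial factor may make the guarantee impractical at central high-dimensional cuts therefore reflects its proof technique, not the one-sided problem.

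Uniform control over all subsets is what you would need in the opposite direction. Examples are a lower confidence bound on $A_r$, where the upward bias of the data-selected maximizing subset must be controlled, or simultaneous statements about arbitrary subset sums.

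Your second step coincides with the paper's: the union bound for $\mathcal A\cap\mathcal F$ without an independence assumption, followed by Corollary~\ref{cor:confidence-qualified-projector-calibration} applied pointwise on that intersection.
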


\begin{proof}
For a fixed subset $S$ of $r-1$ mode labels, the sum
$\sum_{i\in S}C_i/N_A$ is an empirical mean of Bernoulli variables with mean
$\sum_{i\in S}\nu_i$. Hoeffding's inequality and a union bound over the
$\binom{m}{r-1}$ subsets give failure probability at most
$\epsilon_A/q$ for the stated upper bound at threshold $r$. A second union
bound over $r\in\mathcal I$ proves simultaneous spectrum coverage. Combining
that event with the fidelity event and applying Corollary
\ref{cor:confidence-qualified-projector-calibration} proves the result.
\end{proof}

The proposition excludes data-dependent Schmidt bases, unknown phases, and
projectors selected from the test counts. These cases require the joint model
below. To guarantee radius $\Delta_A$ for $q$ prespecified thresholds, require
\begin{equation}
N_A\geq
\frac{\log\binom{m}{r-1}+\log(q/\epsilon_A)}{2\Delta_A^2}.
\label{eq:spectrum-calibration-sample-complexity}
\end{equation}
The confidence term has a plus sign. At central high-dimensional thresholds,
the combinatorial factor may make this guarantee impractical, motivating a
structured spectrum model or simultaneous likelihood region.

For illustration, take $q=1$ and $\epsilon_A=0.01$, and let $r-1$ be the
central cut. Equation~\eqref{eq:spectrum-calibration-sample-complexity} gives
\begin{table}[t]
\caption{Distribution-free calibration counts from
Eq.~\eqref{eq:spectrum-calibration-sample-complexity} for a central cut,
$q=1$, and $\epsilon_A=0.01$.}
\label{tab:spectrum-sample-complexity}
\centering
\begin{tabular}{cccc}
\toprule
$m$ & $r-1$ & $N_A(0.05)$ & $N_A(0.01)$ \\
\midrule
8 & 4 & 1771 & 44269 \\
32 & 16 & 4964 & 124098 \\
1021 & 510 & 141724 & 3543089 \\
\bottomrule
\end{tabular}
\end{table}
These counts apply to unstructured simultaneous subset control, not a justified
low-parameter spectral model.

The combinatorial correction is conservative at large $m$. A multinomial
likelihood region may replace it if it has simultaneous coverage. Marginal
plug-in intervals fail when target selection and fidelity estimation overlap.
Use either held-out rounds as in Proposition
\ref{prop:held-out-selected-reference}, or construct a selection-uniform
confidence set $\mathcal C_{1-\epsilon}(D)$ for the complete state,
reference, and measurement model. A suitable nuisance parameter includes
\[
\begin{aligned}
\vartheta=(&\boldsymbol\nu,U_A,U_B,\text{phases},P_{\rm impl},\rho,p_{\rm in},\\
&\text{sampling and detector parameters}).
\end{aligned}
\]
Such a region supports the direct profile minimum
\begin{equation}
L_r(D):=\inf_{\vartheta\in\mathcal C_{1-\epsilon}(D)}
\mathcal R_{r,\phi(\vartheta)}
\bigl(F_{\phi(\vartheta)}(\rho(\vartheta))\bigr).
\label{eq:joint-confidence-direct-profile}
\end{equation}
On the common coverage event, $E_r(\rho)\geq L_r(D)$. A simpler rectangular
reduction uses
\[
F_L(D)=\inf_{\vartheta\in\mathcal C_{1-\epsilon}(D)}F(\vartheta),
\qquad
\overline A_r(D)=
\sup_{\vartheta\in\mathcal C_{1-\epsilon}(D)}A_r(\vartheta).
\]
Both preserve joint coverage; incompatible target-specific marginals do not.
With leakage, minimize the perspective
$p\mathcal R_{r,\phi}(F/p)$ over the same confidence region. The rectangular
specialization is Eq.~\eqref{eq:leakage-perspective-confidence}. Proposition
\ref{prop:ambient-extension} shows that occupation outside the nominal support
preserves the fixed-fidelity theorem; a certified local-filter outcome in
Corollary~\ref{cor:leakage-filtering} strengthens the unconditional bound on
$E_r(\rho)$.

One event for a fixed state--reference pair covers all $r$ without another
union bound. Multiple states, selected references, or target families require a
prespecified primary comparison, joint region, or family-level allocation such
as Bonferroni or Holm. Individual $1-\epsilon$ entries do not give table-wide
$1-\epsilon$ coverage.

For comparison with the analytical one-global-projector construction of
Ref.~\cite{pra16zhang}, define the largest-coefficient relaxation
\[
\Lambda_\phi^{(1)}(\rho)
:=\max\left\{
\frac{F_\phi(\rho)}{m\nu_1},\frac1m
\right\}.
\]

\begin{corollary}[Comparison with the largest-coefficient relaxation]
\label{cor:dominance-nu1-relaxation}
For every $F\in[0,1]$,
\[
\mathcal R_{r,\phi}(F)
\geq
\mathcal F_r\left(
\max\left\{\frac{F}{m\nu_1},\frac1m\right\}
\right).
\]
The cumulative-spectrum bound is positive at $F>A_r$; the largest-coefficient
bound requires $F>(r-1)\nu_1$. Since
$A_r\leq(r-1)\nu_1$, the first activation threshold is never higher. If
$A_r<(r-1)\nu_1$, the cumulative-spectrum-adapted bound is strictly positive
while the largest-coefficient relaxation vanishes throughout
$A_r<F\leq(r-1)\nu_1$. For a maximally entangled reference, the two bounds
coincide. This comparison does not rank all incomplete-data methods.
\end{corollary}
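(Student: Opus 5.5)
The inequality $\mathcal R_{r,\phi}(F)\geq\mathcal F_r(\max\{F/(m\nu_1),1/m\})$ will follow by comparing two lower bounds on the same pure-state minimum. Fix $F\in[0,1]$ and let $|\psi\rangle$ be any normalized pure state with $|\langle\phi|\psi\rangle|^2=F$, with ordered Schmidt probabilities $\mu_i$ and parameter $\lambda(\psi)$ from Eq.~\eqref{eq:lambda-general-definition}.

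First, the largest-coefficient trace inequality (von Neumann) gives
\[
\sqrt F\leq\sum_i\sqrt{\nu_i\mu_i}\leq\sqrt{\nu_1}\sum_i\sqrt{\mu_i}.
\]
Squaring this yields $F\leq m\nu_1\lambda(\psi)$. Together with $\lambda(\psi)\geq1/m$, we obtain $\lambda(\psi)\geq\Lambda:=\max\{F/(m\nu_1),1/m\}$. By definition of $\mathcal F_r$ as the lower boundary of the pure-state $(\lambda,E_r)$ region, $E_r(|\psi\rangle)\geq\mathcal F_r(\lambda(\psi))$. Proposition~\ref{prop:Fr-convex-main} says $\mathcal F_r$ is nondecreasing, so $E_r(|\psi\rangle)\geq\mathcal F_r(\Lambda)$. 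Minimizing over all such $|\psi\rangle$ and invoking Proposition~\ref{thm:single-observable-Er-main}, which says the fixed-fidelity pure minimum equals $\mathcal R_{r,\phi}(F)$, gives the inequality. The one point to check is that $\Lambda\leq1$, so $\mathcal F_r(\Lambda)$ is defined. This holds because $F\leq1$ and $m\nu_1\geq1$.

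Next come the activation thresholds. $\mathcal R_{r,\phi}(F)>0$ iff $F>A_r$, by Eq.~\eqref{eq:reference-spectrum-boundary}. $\mathcal F_r(\Lambda)>0$ iff $\Lambda>(r-1)/m$, by Lemma~\ref{thm:exact-Fr-main}. Because $(r-1)/m\geq1/m$, this is equivalent to $F/(m\nu_1)>(r-1)/m$, that is, $F>(r-1)\nu_1$. Ordering of the reference spectrum gives $A_r=\sum_{i<r}\nu_i\leq(r-1)\nu_1$. Hence on the interval $A_r<F\leq(r-1)\nu_1$, which is nonempty whenever the inequality is strict, the cumulative-spectrum bound is strictly positive while the relaxation is zero.

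Finally, for the maximally entangled reference take $\nu_i=1/m$. Then $A_r=(r-1)/m$, $B_r=(m-r+1)/m$, and $\Lambda=\max\{F,1/m\}$. Substituting into Eq.~\eqref{eq:Fr-exact-main} reproduces Eq.~\eqref{eq:reference-spectrum-boundary} term by term: the positive branches agree since $\frac1m[\sqrt{(m-r+1)F}-\sqrt{(r-1)(1-F)}]^2=[\sqrt{B_rF}-\sqrt{A_r(1-F)}]^2$. The zero branches agree as well, including the region $F<1/m$, where both vanish. The step that needs the most care is this branch bookkeeping; the main inequality is a direct chain of already established results.
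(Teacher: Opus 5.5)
Your proof is correct and follows the paper's route. The largest-coefficient (von Neumann) trace inequality gives $\lambda(\psi)\ge F/(m\nu_1)$, and Lemma~\ref{thm:exact-Fr-main} with monotonicity of $\mathcal F_r$ gives a universal pure-state lower bound at fixed $F$. Proposition~\ref{thm:single-observable-Er-main} supplies the exact fixed-fidelity minimum, and the thresholds compare through $A_r\le(r-1)\nu_1$.

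Your explicit checks that $\Lambda\le1$ and that the branches match for a maximally entangled reference fill in details the paper leaves implicit. The only slip is the side remark that $A_r<F\le(r-1)\nu_1$ is nonempty whenever $A_r<(r-1)\nu_1$. Inside $[0,1]$ this fails when $A_r=1$, for example $\boldsymbol\nu=(0.6,0.4,0,0)$ with $r=3$. The corollary's claim is then vacuous, so nothing in your proof depends on it.
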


\begin{proof}
For a pure state with overlap $F$, the trace inequality used in
Ref.~\cite{pra16zhang} gives
$\lambda\geq F/(m\nu_1)$. Lemma~\ref{thm:exact-Fr-main} and monotonicity of
$\mathcal F_r$ therefore make the right-hand side a universal pure-state lower
bound at fixed $F$. Proposition~\ref{thm:single-observable-Er-main} gives the exact
minimum at that same $F$, proving the pointwise inequality. The threshold
comparison follows from $A_r=\sum_{i=1}^{r-1}\nu_i\leq(r-1)\nu_1$.
\end{proof}

\begin{corollary}[Maximally entangled reference]
\label{cor:maximally-entangled-reference}
Let $|\Phi_m^+\rangle=m^{-1/2}\sum_{i=1}^{m}|ii\rangle$ and
$F_+(\rho)=\langle\Phi_m^+|\rho|\Phi_m^+\rangle$. Then
\[
E_r(\rho)\geq\mathcal F_r\bigl(F_+(\rho)\bigr).
\]
The explicit right-hand side is zero for
$F_+(\rho)\leq(r-1)/m$ and equals
\[
\frac1m\left[
\sqrt{(m-r+1)F_+(\rho)}
-\sqrt{(r-1)(1-F_+(\rho))}
\right]^2
\]
above that threshold. A fidelity satisfying
$F_+(\rho)>(r-1)/m$ therefore implies $\operatorname{SN}(\rho)\geq r$.
\end{corollary}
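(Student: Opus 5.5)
The plan is to specialize Proposition~\ref{thm:single-observable-Er-main} to the maximally entangled reference and identify the resulting function with $\mathcal F_r$ from Lemma~\ref{thm:exact-Fr-main}. For $|\phi\rangle=|\Phi_m^+\rangle$ the ordered reference spectrum is $\nu_i=1/m$ for all $i$. This gives
\[
A_r=\frac{r-1}{m},\qquad B_r=\frac{m-r+1}{m}.
\]
Equation~\eqref{eq:reference-spectrum-mixed-bound} then yields $E_r(\rho)\geq\mathcal R_{r,\Phi_m^+}(F_+(\rho))$ for every density operator, with no further work on the mixed-state side.

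Next I would compare the two explicit formulas. On the zero branch, $\mathcal R_{r,\Phi_m^+}(F)=0$ for $F\leq A_r=(r-1)/m$, which matches the zero branch of $\mathcal F_r$. On the positive branch, substituting $A_r$ and $B_r$ gives
\[
\bigl[\sqrt{B_rF}-\sqrt{A_r(1-F)}\bigr]^2
=\frac1m\bigl[\sqrt{(m-r+1)F}-\sqrt{(r-1)(1-F)}\bigr]^2,
\]
which is exactly the second branch of Eq.~\eqref{eq:Fr-exact-main} evaluated at $\lambda=F$.

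One subtlety concerns the domain: $\mathcal F_r$ is defined on $[1/m,1]$, while $F_+(\rho)$ may lie below $1/m$. I would handle this by reading $\mathcal F_r(F_+)$ through the explicit formula in the statement, which is zero for all $F_+\leq(r-1)/m$. Equivalently, one can note that Corollary~\ref{cor:dominance-nu1-relaxation} with $\nu_1=1/m$ gives the argument $\max\{F,1/m\}$, and the two bounds coincide, as already remarked there. I expect no genuine obstacle; the only care needed is this domain bookkeeping and the case $r-1\geq1$, which guarantees $0<A_r<1$ for $r\leq m$.

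For the Schmidt-number conclusion, suppose $F_+(\rho)>(r-1)/m$. Then the right-hand side is strictly positive, because the bracket vanishes only at $F=(r-1)/m$ and increases thereafter by Proposition~\ref{prop:Fr-convex-main}. Hence $E_r(\rho)>0$. The equivalence $E_r(\rho)=0\iff\operatorname{SN}(\rho)\leq r-1$ from the Preliminaries then gives $\operatorname{SN}(\rho)\geq r$. This is consistent with the known threshold $k/m$ for $|\Phi_m^+\rangle$.
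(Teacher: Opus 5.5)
Your proposal is correct and takes the same route as the paper. The paper gives no separate proof, but it treats the corollary as the specialization of Proposition~\ref{thm:single-observable-Er-main} to the uniform spectrum $\nu_i=1/m$. There $A_r=(r-1)/m$ makes $\mathcal R_{r,\Phi_m^+}$ coincide with the zero-extended $\mathcal F_r$, as also remarked in Corollary~\ref{cor:dominance-nu1-relaxation}, and the Schmidt-number claim then follows from the zero-set characterization of $E_r$; your domain bookkeeping below $1/m$ and your positivity argument are both sound.
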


The optimizer in \eqref{eq:two-block-optimal-spectrum-main} is blockwise uniform:
\begin{equation*} 
    \underbrace{
        \frac{1-t}{r-1},
        \ldots,
        \frac{1-t}{r-1}
    }_{\text{retained sector}}
    \ge
    \underbrace{
        \frac{t}{m-r+1},
        \ldots,
        \frac{t}{m-r+1}
    }_{\text{Schmidt tail}}.
\end{equation*}
Blockwise Cauchy--Schwarz equality and ordering impose it.
Equivalently, it maximizes \(\lambda\) at fixed \(E_r\) and minimizes \(E_r\) at
fixed \(\lambda\).

\section{Quantitative Vidal-tail calibration of Schmidt-number witnesses}

This section asks how a certified Vidal-tail lower bound changes a qualitative
Schmidt-number witness into a quantitative statement. Geometric measures,
positive-operator $S(k)$-norms, and Schmidt-number witnesses probe the same
bounded-Schmidt-number sets through different optimizations. Their support
identities are established, and we use them only as a transfer mechanism.

A standard witness reports whether a state lies outside a bounded-Schmidt-number
set. A certified tail value supplies more: it measures the distance, in the
geometric hierarchy used here, from the corresponding zero set and therefore
calibrates how far the observed value lies beyond a rank threshold. This
distinction matters in high dimensions, where two states can have the same
Schmidt number but very different weight above the first $k$ Schmidt modes.
Section~III provides the resource lower bound; the present section converts it
into a valid witness coefficient and states the conditions under which the
resulting operator is nontrivial and detecting.

We reuse the support reduction on $\mathcal S_k$ from
Refs.~\cite{norm1,XiongSze2026} and the established rank-one identity
\[
\bigl\|\,|\psi\rangle\langle\psi|\,\bigr\|_{S(k)}
=\sum_{i=1}^{k}\mu_i
\]
\cite{norm1}. The complement of this support coefficient is the pure Vidal tail
$E_{k+1}$.

For a pure reference projector, Schmidt approximation gives
\[
\max_{\sigma\in\mathcal S_{r-1}}
\operatorname{Tr}(P_\phi\sigma)=A_r.
\]
Corollary~\ref{cor:projector-witness-calibration} converts threshold violation
into a quantitative bound. The operator
$(1-L_{k+1}(\tau))I-\tau$ is another reused, generally nonoptimal witness
template: a geometric lower bound supplies a valid coefficient but does not
replace exact $S(k)$ optimization.

\subsection{Comparisons within the geometric hierarchy}

The sequence \(E_r\) decreases as its reference set grows. The next comparison
gives the sharp coefficient for the mixed-state extension.

\begin{proposition}\label{pureRGM1}
Let $|\psi\rangle\in\mathcal{H}_n\otimes\mathcal{H}_m$, with $m\leq n$, have Schmidt amplitudes
\[
  s_1\geq s_2\geq\cdots\geq s_m\geq 0 .
\]
For $2\leq r\leq m$, the geometric measure relative to states of Schmidt rank
at most $r-1$ satisfies
\begin{equation}\label{GMeq1}
  E_r(|\psi\rangle)
  \leq
  \cdots
  \leq
  E_2(|\psi\rangle)
  =
  E_G(|\psi\rangle),
\end{equation}
For $3\leq r\leq m$, one has
\begin{footnotesize}
    \begin{equation}\label{GMeq2}
  E_r(|\psi\rangle)
  \leq
  \frac{m+1-r}{m+2-r}
  E_{r-1}(|\psi\rangle)
  \leq
  \frac{m+1-r}{m-1}
  E_2(|\psi\rangle).
\end{equation}
\end{footnotesize}
\end{proposition}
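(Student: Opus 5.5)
The plan is to work entirely with the ordered Schmidt probabilities $\mu_i=s_i^2$ and the tail formula in Eq.~\eqref{RGMeq1}, so that $E_r(|\psi\rangle)=\sum_{i=r}^m\mu_i$. The chain in Eq.~\eqref{GMeq1} is then immediate: $E_{r-1}-E_r=\mu_{r-1}\geq0$. Equivalently, $\mathcal V_{r-2}\subseteq\mathcal V_{r-1}$ enlarges the maximization set, so $E_r\leq E_{r-1}$. The identification $E_2=E_G$ is the definition recalled after Eq.~\eqref{RGMeq1}.

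For the first inequality in Eq.~\eqref{GMeq2}, I will use ordering to compare the removed entry $\mu_{r-1}$ with the tail average. Since $\mu_{r-1}\geq\mu_i$ for every $i\geq r$, and the tail $\{r,\ldots,m\}$ has $m-r+1$ entries,
\[
\mu_{r-1}\geq\frac{1}{m-r+1}\sum_{i=r}^m\mu_i=\frac{E_r}{m-r+1}.
\]
Hence
\[
E_{r-1}=\mu_{r-1}+E_r\geq E_r\left(1+\frac{1}{m-r+1}\right)=\frac{m+2-r}{m+1-r}\,E_r ,
\]
which rearranges to the claimed coefficient. The same statement says that the average of the last $m-r+2$ probabilities is at least the average of the last $m-r+1$; this averaging form is the cleanest way to present it.

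For the second inequality I will iterate the one-step bound from index $r$ down to index $3$. The product
\[
\prod_{j=3}^{r}\frac{m+1-j}{m+2-j}=\frac{m+1-r}{m-1}
\]
telescopes. This gives $E_r\leq\frac{m+1-r}{m-1}E_2$, and therefore $\frac{m+1-r}{m+2-r}E_{r-1}\leq\frac{m+1-r}{m+2-r}\cdot\frac{m+2-r}{m-1}E_2$, because the intermediate $E_{r-1}$ obeys the analogous bound. For $r=3$ the second inequality is an equality of coefficients, so the induction base is trivial.

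No step is a genuine obstacle. The only care needed is to check the index counting: the tail has $m-r+1$ entries, and the telescoping product runs over $j=3,\ldots,r$. I would also note sharpness, which the surrounding text hints at for the mixed-state extension. The uniform spectrum $\mu_i=1/m$ gives $E_r=(m-r+1)/m$ and saturates both coefficients, so they cannot be improved.
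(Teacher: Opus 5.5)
Your proof is correct and takes essentially the same route as the paper. Both arguments use the Schmidt ordering to compare the removed probability $\mu_{r-1}$ with the tail; the paper does this through the ratio $E_r/E_{r-1}$ and the estimates $\sum_{j\ge r}s_j^2\le(m-r+1)s_r^2$, $s_{r-1}^2\ge s_r^2$, and then telescopes the one-step coefficients down to $E_2$, as you do. Your additive form $\mu_{r-1}\ge E_r/(m-r+1)$ is slightly cleaner: it involves no division, so it needs no separate treatment of $E_{r-1}=0$ or $s_r=0$.
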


\begin{proof}
Fix \(3\leq r\leq m\). The claim is immediate when
\(E_{r-1}(\lvert\psi\rangle)=0\); assume it is positive.
By definition,
\[
  E_r(|\psi\rangle)
  =
  1-\sum_{j=1}^{r-1}s_j^2 .
\]
Using $\sum_{j=1}^{m}s_j^2=1$, we obtain
\begin{align*}
  \frac{E_r(|\psi\rangle)}{E_{r-1}(|\psi\rangle)}
  &=
  \frac{1-\sum_{j=1}^{r-1}s_j^2}
       {1-\sum_{j=1}^{r-2}s_j^2}  \\
  &=
  \frac{s_r^2+\cdots+s_m^2}
       {s_{r-1}^2+s_r^2+\cdots+s_m^2}.
\end{align*}
Since the Schmidt amplitudes are ordered nonincreasingly,
\[
  s_{r-1}^2\geq s_r^2,
  \qquad
  \sum_{j=r}^{m}s_j^2
  \leq
  (m-r+1)s_r^2 .
\]
These ordering estimates imply
\begin{align*}
  \frac{E_r(|\psi\rangle)}{E_{r-1}(|\psi\rangle)}
  &\leq
  \frac{(m+1-r)s_r^2}
       {s_{r-1}^2+(m+1-r)s_r^2} \\
  &\leq
  \frac{(m+1-r)s_r^2}
       {s_r^2+(m+1-r)s_r^2} \\
  &=
  \frac{m+1-r}{m+2-r}.
\end{align*}
Hence
\[
  E_r(|\psi\rangle)
  \leq
  \frac{m+1-r}{m+2-r}E_{r-1}(|\psi\rangle).
\]
Iteration gives
\[
  E_r(|\psi\rangle)
  \leq
  \frac{m+1-r}{m-1}E_2(|\psi\rangle).
\]
Finally, $E_2(|\psi\rangle)=E_G(|\psi\rangle)$.
\end{proof}

If \(E_{r-1}(|\psi\rangle)>0\), the one-step bound is tight exactly when
\[
  s_{r-1}=s_r=\cdots=s_m>0.
\]
If \(E_{r-1}=0\), both sides vanish. For \(E_2>0\), the iterated bound is an
equality exactly when \(s_2=\cdots=s_m>0\); product states give the zero
endpoint. Maximally entangled spectra saturate every step, so both coefficients
are optimal for pure states.

The convex roof preserves this coefficient.

\begin{proposition}[Mixed-state hierarchy]\label{thm:mixed-geometric-hierarchy}
Let
\(
  \rho\in\mathcal{D}(\mathcal H_n\otimes\mathcal H_m)
\)
be a bipartite mixed state. For $2\leq r\leq m$, the geometric measures satisfy
\begin{equation}\label{GMmixeq01}
  E_r(\rho)
  \leq
  \cdots
  \leq
  E_2(\rho)
  =
  E_G(\rho),
\end{equation}
For $3\leq r\leq m$, the sharper inequalities are
\begin{equation}\label{GMmixeq02}
  E_r(\rho)
  \leq
  \frac{m+1-r}{m+2-r}E_{r-1}(\rho)
  \leq
  \frac{m+1-r}{m-1}E_2(\rho).
\end{equation}
\end{proposition}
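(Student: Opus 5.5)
The plan is to transfer the pure-state inequalities of Proposition~\ref{pureRGM1} to the convex roofs by evaluating them on a single optimal decomposition. The argument will deliberately use the optimal ensemble of the \emph{larger-index-minus-one} measure, never that of $E_r$.

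First, the plain chain \eqref{GMmixeq01}. For any ensemble $\{p_i,|\psi_i\rangle\}\in\mathfrak E(\rho)$, Proposition~\ref{pureRGM1} gives $E_r(|\psi_i\rangle)\leq E_{r-1}(|\psi_i\rangle)$ termwise. Hence
\[
E_r(\rho)\leq\sum_ip_iE_r(|\psi_i\rangle)\leq\sum_ip_iE_{r-1}(|\psi_i\rangle).
\]
Taking the infimum over ensembles on the right yields $E_r(\rho)\leq E_{r-1}(\rho)$. Equivalently, this follows from $\mathcal S_{r-2}\subseteq\mathcal S_{r-1}$ together with Lemma~\ref{lem:convex-roof-fidelity-identity}. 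The identity $E_2=E_G$ holds by definition.

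For the sharper step in \eqref{GMmixeq02}, fix $3\leq r\leq m$. By the finite-dimensional attainment noted in the preliminaries (continuity, compactness, Carath\'eodory), choose a finite ensemble $\{p_i,|\psi_i\rangle\}$ attaining $E_{r-1}(\rho)$. If attainment were in doubt, an $\varepsilon$-optimal ensemble followed by $\varepsilon\downarrow0$ would suffice. Since this ensemble is feasible for the convex roof of $E_r$, applying the pure-state one-step bound of Proposition~\ref{pureRGM1} componentwise gives
\[
E_r(\rho)\leq\sum_ip_iE_r(|\psi_i\rangle)
\leq\frac{m+1-r}{m+2-r}\sum_ip_iE_{r-1}(|\psi_i\rangle)
=\frac{m+1-r}{m+2-r}E_{r-1}(\rho).
\]
The main point to get right is the choice of ensemble. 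It must be optimal for $E_{r-1}$, not $E_r$, because the inequality bounds $E_r$ from above by a multiple of $E_{r-1}$. The coefficient is a constant independent of the component, so it factors out of the average.

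Iterating this one-step bound from index $r$ down to $3$ produces the product of the step coefficients. This product telescopes as
\[
\prod_{s=3}^{r}\frac{m+1-s}{m+2-s}=\frac{m+1-r}{m-1},
\]
which gives the second inequality in \eqref{GMmixeq02}. Alternatively, one can apply the iterated pure bound directly on an $E_2$-optimal ensemble. Either route uses only Proposition~\ref{pureRGM1} and the definition of the convex roof, so I expect no genuine obstacle beyond the ensemble-selection order noted above.
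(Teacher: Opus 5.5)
Your proposal is correct and follows the paper's proof: the paper also applies the pure-state one-step bound of Proposition~\ref{pureRGM1} componentwise on an $\varepsilon$-optimal decomposition for $E_{r-1}(\rho)$ (not $E_r$), lets $\varepsilon\downarrow0$, and iterates to obtain the telescoped coefficient $(m+1-r)/(m-1)$. The only cosmetic difference is that you invoke exact attainment of the minimizing ensemble, with the $\varepsilon$-optimal version as a fallback, whereas the paper works with the $\varepsilon$-optimal ensemble directly.
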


\begin{proof}
The pure-state ordering and convex roofs give Eq.~\eqref{GMmixeq01}. For the
one-step coefficient, fix $3\leq r\leq m$ and $\varepsilon>0$. By the definition
of the convex
roof, choose a pure-state decomposition
\[
  \rho=\sum_i p_i|\psi_i\rangle\langle\psi_i|
\]
such that
\[
  \sum_i p_i E_{r-1}(|\psi_i\rangle)
  \leq
  E_{r-1}(\rho)+\varepsilon.
\]
Proposition~\ref{pureRGM1} and this decomposition give
\begin{align*}
  E_r(\rho)
  &\leq
  \sum_i p_i E_r(|\psi_i\rangle) \\
  &\leq
  \frac{m+1-r}{m+2-r}
  \sum_i p_i E_{r-1}(|\psi_i\rangle) \\
  &\leq
  \frac{m+1-r}{m+2-r}
  \bigl(E_{r-1}(\rho)+\varepsilon\bigr).
\end{align*}
Letting $\varepsilon\downarrow0$ and iterating gives
\[
  E_r(\rho)
  \leq
  \frac{m+1-r}{m-1}E_2(\rho).
\]
The identity $E_2(\rho)=E_G(\rho)$ completes the proof.
\end{proof}

The maximally entangled pure state saturates the coefficient within the
mixed-state domain, so no smaller state-independent value works. This does not
classify mixed equality cases.

\subsection{Support-norm identities and witness construction}

The optimizations differ: \(E_{k+1}(\rho)\) is a convex-roof infimum, whereas
the positive-operator \(S(k)\)-norm is a support supremum over
\(\mathcal V_k\). Their pure-state identity yields only a one-sided mixed-state
relation.

We use the standard reduction from bounded-Schmidt-number states to
bounded-Schmidt-rank pure states.

\begin{lemma}[Standard support reduction \cite{norm1,XiongSze2026}]
\label{prop:witness-coefficient}
Assume that $2\leq k\leq m\leq n$, and let $M=M^\dagger$ act on
$\mathcal H_n\otimes\mathcal H_m$. Let $\mathcal S_{k-1}$ denote the
set of density operators with Schmidt number at most $k-1$, and define
\[
    c_{k-1}
    :=
    \max_{\sigma\in\mathcal S_{k-1}}
    \operatorname{Tr}(M\sigma),
    \qquad
    W_k:=c_{k-1}I-M.
\]
Then
\[
    c_{k-1}
    =
    \max_{\substack{\langle\psi|\psi\rangle=1\\
                     \operatorname{SR}(|\psi\rangle)\leq k-1}}
    \langle\psi|M|\psi\rangle.
\]
The operator $W_k$ is therefore $(k-1)$-block positive. It is a nontrivial
$k$-Schmidt-number witness if and only if
\[
    \lambda_{\max}(M)>c_{k-1}.
\]
Under this condition, $W_k$ detects every maximal-eigenvalue eigenvector of $M$,
whose Schmidt rank is at least $k$. The maximizer defining $c_{k-1}$ need not
have Schmidt rank exactly $k-1$.
\end{lemma}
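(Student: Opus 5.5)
The plan is to treat the four assertions of Lemma~\ref{prop:witness-coefficient} in order: the support reduction, block positivity, the nontriviality criterion, and detection of top eigenvectors.

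\textbf{Support reduction.} Since $\mathcal S_{k-1}=\operatorname{conv}\{|\psi\rangle\langle\psi|:|\psi\rangle\in\mathcal V_{k-1}\}$ and $\sigma\mapsto\operatorname{Tr}(M\sigma)$ is linear, the maximum over the compact convex set $\mathcal S_{k-1}$ is attained at an extreme point. Concretely, for $\sigma=\sum_ip_i|\psi_i\rangle\langle\psi_i|$ with each $\operatorname{SR}(|\psi_i\rangle)\leq k-1$, we have $\operatorname{Tr}(M\sigma)=\sum_ip_i\langle\psi_i|M|\psi_i\rangle\leq\max_{\mathcal V_{k-1}}\langle\psi|M|\psi\rangle$. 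The reverse inequality holds because pure rank-$(k-1)$ projectors lie in $\mathcal S_{k-1}$. The maximum over $\mathcal V_{k-1}$ exists because that set is closed in the unit sphere: Schmidt rank is lower semicontinuous, so a bounded-rank limit keeps bounded rank.

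\textbf{Block positivity.} For $\sigma\in\mathcal S_{k-1}$ we get $\operatorname{Tr}(W_k\sigma)=c_{k-1}-\operatorname{Tr}(M\sigma)\geq0$. Lemma~\ref{lemkbp}, applied with $k-1$, then gives $(k-1)$-block positivity.

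\textbf{Nontriviality.} If $\lambda_{\max}(M)>c_{k-1}$, take a unit eigenvector $|v\rangle$ of $M$ with eigenvalue $\lambda_{\max}(M)$. Then $\operatorname{Tr}(W_k|v\rangle\langle v|)=c_{k-1}-\lambda_{\max}(M)<0$, so $|v\rangle\langle v|\notin\mathcal S_{k-1}$ and $W_k$ satisfies the witness definition. This also proves the detection claim, and it shows $\operatorname{SN}(|v\rangle\langle v|)=\operatorname{SR}(|v\rangle)\geq k$. Conversely, if $\lambda_{\max}(M)\leq c_{k-1}$, then $W_k\succeq0$ and no density operator has negative expectation, so $W_k$ is not a witness. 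The closing remark, that the maximizer need not have Schmidt rank exactly $k-1$, needs only an example. For $M=|00\rangle\langle00|$, the maximizer is a product state for every $k$.

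\textbf{Main obstacle.} There is no real obstacle here. The only point requiring care is attainment of the maximum in the support reduction, which rests on compactness of $\mathcal V_{k-1}$. I will handle it through the lower semicontinuity of rank argued above, and alternatively note that $\mathcal V_{k-1}$ is the image of a compact parametrization by $k-1$ amplitude/vector pairs.
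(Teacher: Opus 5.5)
Your proposal is correct and follows essentially the same route as the paper: you reduce the linear maximum over $\mathcal S_{k-1}$ to pure states in $\mathcal V_{k-1}$ by compactness and linearity, and you read block positivity, the nontriviality criterion, and detection of top eigenvectors off $\lambda_{\min}(W_k)=c_{k-1}-\lambda_{\max}(M)$. You also supply details the paper leaves implicit: closedness of $\mathcal V_{k-1}$, the explicit appeal to Lemma~\ref{lemkbp}, and an example for the final remark; that example is informative only for $k\geq3$, since for $k=2$ every maximizer necessarily has Schmidt rank $1=k-1$.
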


\begin{proof}
Compactness gives a pure-state maximizer. Linearity bounds every convex
combination of projectors from $\mathcal V_{k-1}$ by this maximum, and each
projector lies in $\mathcal S_{k-1}$. Hence
$\operatorname{Tr}(W_k\sigma)\geq0$ on $\mathcal S_{k-1}$. Its smallest
eigenvalue is $\lambda_{\min}(W_k)=c_{k-1}-\lambda_{\max}(M)$, which gives both the
nontriviality criterion and the claim about a maximal-eigenvalue eigenvector.
\end{proof}

Section III calibrates the support of a pure projector.

\begin{corollary}[Quantitative calibration of a projector witness]
\label{cor:projector-witness-calibration}
Let $\tau_\phi=|\phi\rangle\langle\phi|$ have ordered Schmidt probabilities
$\nu_1\geq\cdots\geq\nu_m$, and let
$A_r=\sum_{i=1}^{r-1}\nu_i$ and $B_r=1-A_r$. Then
\[
W_r(\phi):=A_r I-\tau_\phi
\]
is nonnegative on $\mathcal S_{r-1}$. It is a nontrivial
$r$-Schmidt-number witness if and only if $B_r>0$, equivalently
$\operatorname{SR}(|\phi\rangle)\geq r$. If a state $\rho$ violates the witness
by
\[
\delta_r:=-\operatorname{Tr}[W_r(\phi)\rho]
=F_\phi(\rho)-A_r>0,
\]
then
\begin{equation}\label{eq:quantitative-witness-calibration}
E_r(\rho)\geq
\left[
\sqrt{B_r(A_r+\delta_r)}
-\sqrt{A_r(B_r-\delta_r)}
\right]^2.
\end{equation}
At each admissible violation, this is the optimal state-independent lower bound
over all density operators and is attained by a pure state. A specified mixed
state need not saturate it.
\end{corollary}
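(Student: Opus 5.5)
The argument has three parts: block positivity, the nontriviality criterion, and the quantitative bound with its optimality. All three will be reduced to results already established.

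\emph{Block positivity.} Apply Lemma~\ref{prop:witness-coefficient} with $M=\tau_\phi$ and $k=r$. By the Schmidt approximation identity stated just before this subsection, the support coefficient is
\[
c_{r-1}=\max_{|v\rangle\in\mathcal V_{r-1}}|\langle v|\phi\rangle|^2=\sum_{i=1}^{r-1}\nu_i=A_r.
\]
This also follows from the rank-one $S(k)$ identity and Lemma~\ref{lem4}. Hence $W_r(\phi)=c_{r-1}I-\tau_\phi$, and it is nonnegative on $\mathcal S_{r-1}$.

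\emph{Nontriviality.} Since $\lambda_{\max}(\tau_\phi)=1$, Lemma~\ref{prop:witness-coefficient} says $W_r(\phi)$ is a nontrivial witness iff $A_r<1$. This is the same as $B_r>0$, which is the same as at least one of $\nu_r,\dots,\nu_m$ being nonzero, that is, $\operatorname{SR}(|\phi\rangle)\geq r$. The detected eigenvector is $|\phi\rangle$ itself.

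\emph{Quantitative bound.} Write $F=F_\phi(\rho)=A_r+\delta_r$, which lies in $(A_r,1]$ when $\delta_r>0$. In that case $0<\delta_r\leq B_r$, so $1-F=B_r-\delta_r$. Substituting into the positive branch of Eq.~\eqref{eq:reference-spectrum-boundary} turns $\mathcal R_{r,\phi}(F)$ into the right-hand side of Eq.~\eqref{eq:quantitative-witness-calibration}. The inequality is then exactly Eq.~\eqref{eq:reference-spectrum-mixed-bound} of Proposition~\ref{thm:single-observable-Er-main}.

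\emph{Optimality.} For optimality over all density operators, use Corollary~\ref{cor:fixed-fidelity-value-dual}. It shows that $\inf\{E_r(\rho):F_\phi(\rho)=F\}=\mathcal R_{r,\phi}(F)$ and that this infimum is attained by a pure state. An explicit attaining state is the geodesic state in Eq.~\eqref{eq:positive-geodesic-state}. So no state-independent function of the violation $\delta_r$ can exceed this bound at any admissible $\delta_r$.

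The remark that a specified mixed state need not saturate the bound needs only one example. A convex mixture $\rho=(1-p)\tau_\phi+p\,\tau_\perp$, with $\tau_\perp$ a product state orthogonal to $|\phi\rangle$, has $E_r(\rho)$ generally strictly above $\mathcal R$ at its fidelity. Alternatively, one can simply say that equality is not claimed in general.

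\emph{Main obstacle.} The only delicate point is bookkeeping rather than analysis. One must check that the admissible range $0<\delta_r\leq B_r$ is exactly where the positive branch applies, and that the case $B_r=0$ leaves no admissible violation. Everything else is direct invocation of the baseline proposition and the support lemma.
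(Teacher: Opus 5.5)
Your proposal is correct and follows the paper's proof essentially step for step. The paper also obtains the support coefficient $A_r$ from Lemma~\ref{prop:witness-coefficient} with the Schmidt-approximation identity, and nontriviality from $\lambda_{\max}(\tau_\phi)=1$. It then substitutes $F_\phi(\rho)=A_r+\delta_r$ into Proposition~\ref{thm:single-observable-Er-main} and takes optimality from Corollary~\ref{cor:fixed-fidelity-value-dual} with a pure attaining state. Your mixed-state example is only asserted (``generally''), but the paper's proof does not argue that caveat either, so nothing hinges on it.
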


\begin{proof}
Lemma~\ref{prop:witness-coefficient} and the Schmidt approximation
formula give
\[
\max_{\sigma\in\mathcal S_{r-1}}
\operatorname{Tr}(\tau_\phi\sigma)=A_r.
\]
Since $\lambda_{\max}(\tau_\phi)=1$, nontriviality is equivalent to $A_r<1$.
The detection condition is $F_\phi(\rho)>A_r$. Substituting
$F_\phi(\rho)=A_r+\delta_r$ into
Proposition~\ref{thm:single-observable-Er-main} gives
Eq.~\eqref{eq:quantitative-witness-calibration}. Fixed-violation optimality
follows from Corollary~\ref{cor:fixed-fidelity-value-dual}, with the pure
saturating spectrum in Eq.~\eqref{eq:reference-spectrum-optimizer}. On the
positive branch this is a witness interpretation of the established single-tail
characteristic boundary, not a new optimization curve.
\end{proof}

The same structure links the geometric measure to the $S(k)$-norm.

\begin{proposition}[Convex-roof support relation]\label{normGM}
Assume that \(1\le k<m\), and let
\(
  \rho\in\mathcal{D}(\mathcal H_n\otimes\mathcal H_m)
\)
be a bipartite mixed state. Then
\begin{equation*}
    1-E_{k+1}(\rho)
    =
    \sup_{\{p_i,|\psi_i\rangle\}\in\mathfrak E(\rho)}
    \sum_i p_i
    \bigl\||\psi_i\rangle\langle\psi_i|\bigr\|_{S(k)},
\end{equation*}
and
\begin{equation}\label{SKGM}
  \|\rho\|_{S(k)}
  \leq
  1-E_{k+1}(\rho).
\end{equation}
\end{proposition}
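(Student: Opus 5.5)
The plan is to deduce both statements from the pure-state identity $\||\psi\rangle\langle\psi|\|_{S(k)}=\sum_{i=1}^k\mu_i=1-E_{k+1}(|\psi\rangle)$, combined with the definition of the convex roof and with Lemma~\ref{lem4} for the positive operator $\rho$.

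For the equality, we start from the convex-roof definition,
\[
E_{k+1}(\rho)=\inf_{\{p_i,|\psi_i\rangle\}\in\mathfrak E(\rho)}\sum_ip_iE_{k+1}(|\psi_i\rangle).
\]
Since $\sum_ip_i=1$, we can write $1-E_{k+1}(\rho)=\sup\sum_ip_i\bigl(1-E_{k+1}(|\psi_i\rangle)\bigr)$. By Eq.~\eqref{RGMeq1}, $1-E_{k+1}(|\psi_i\rangle)=\sum_{j\le k}\mu_j^{(i)}$. The rank-one identity quoted from \cite{norm1} then turns each summand into $\||\psi_i\rangle\langle\psi_i|\|_{S(k)}$. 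This step is purely a matter of rewriting. Finiteness of the ensembles and existence of an optimum (Carath\'eodory) are not needed, because the statement uses a supremum.

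For the inequality \eqref{SKGM}, fix an arbitrary decomposition $\rho=\sum_ip_i|\psi_i\rangle\langle\psi_i|$. Since $\rho\succeq0$, Lemma~\ref{lem4} gives
\[
\|\rho\|_{S(k)}=\max_{|v\rangle\in\mathcal V_k}\langle v|\rho|v\rangle .
\]
For the maximizing $|v\rangle$, linearity gives
\[
\langle v|\rho|v\rangle=\sum_ip_i|\langle v|\psi_i\rangle|^2\le\sum_ip_i\max_{|w\rangle\in\mathcal V_k}|\langle w|\psi_i\rangle|^2=\sum_ip_i\||\psi_i\rangle\langle\psi_i|\|_{S(k)}.
\]
The last equality again follows from Lemma~\ref{lem4}, or equivalently from the vector $s(k)$-norm definition squared. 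The resulting bound holds for every decomposition, and the right-hand side is bounded by the supremum over decompositions. Hence $\|\rho\|_{S(k)}\le 1-E_{k+1}(\rho)$ by the first part.

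The one point that needs care is the ordering of the two optimizations. On the left, a single vector $|v\rangle$ is fixed for all components, whereas the convex roof allows a different optimal $|w_i\rangle$ for each component $|\psi_i\rangle$. This is exactly why only a one-sided inequality holds: the maximum of an average is at most the average of the maxima. I would note that the inequality is generally strict for mixed states, and that it is an equality for pure $\rho$. No further obstacle is expected.
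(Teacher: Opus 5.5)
Your proposal is correct and follows the paper's proof. You obtain the identity by rewriting the convex roof with the rank-one identity $\||\psi\rangle\langle\psi|\|_{S(k)}=\sum_{i\le k}\mu_i$ together with $\inf(1-x)=1-\sup x$, and the inequality by bounding $\|\rho\|_{S(k)}\le\sum_ip_i\||\psi_i\rangle\langle\psi_i|\|_{S(k)}$ for each decomposition and then taking the supremum; the only difference is that you justify this bound explicitly via Lemma~\ref{lem4} and ``max of an average $\le$ average of maxima,'' whereas the paper simply cites convexity of the $S(k)$-norm.
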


\begin{proof}
For every pure state $|\psi_i\rangle$, the geometric measure is
\[
  E_{k+1}(|\psi_i\rangle)
  =
  1-
  \max_{|\phi\rangle\in\mathcal V_k}
  |\langle\phi|\psi_i\rangle|^2 .
\]
For ordered Schmidt probabilities
\(\mu_1^{(i)},\ldots,\mu_m^{(i)}\), the rank-one $S(k)$ identity gives
\[
\bigl\||\psi_i\rangle\langle\psi_i|\bigr\|_{S(k)}
=\sum_{a=1}^{k}\mu_a^{(i)}.
\]
It follows that
\[
  E_{k+1}(|\psi_i\rangle)
  =
  1-
  \bigl\|
  |\psi_i\rangle\langle\psi_i|
  \bigr\|_{S(k)} .
\]
Every ensemble then satisfies
\[
\sum_i p_i E_{k+1}(|\psi_i\rangle)
=
1-
\sum_i p_i
\bigl\||\psi_i\rangle\langle\psi_i|\bigr\|_{S(k)}.
\]
Minimizing the left side and using $\inf(1-x)=1-\sup x$ proves the identity.
Convexity of the $S(k)$-norm gives
\[
  \|\rho\|_{S(k)}
  \leq
  \sum_i p_i
  \bigl\|
  |\psi_i\rangle\langle\psi_i|
  \bigr\|_{S(k)}
\]
Taking the ensemble supremum gives
Eq.~\eqref{SKGM}.
\end{proof}

\begin{remark}
For $\rho=|\psi\rangle\langle\psi|$ and every \(1\leq k<m\), equality holds in
Eq.~\eqref{SKGM}. With the squared-fidelity convention fixed in Sec.~II, the
case \(k=1\) reduces to
\begin{align*}
  E_2(|\psi\rangle)
  &=
  1-
  \max_{|\phi\rangle\in\mathcal V_1}
  |\langle\phi|\psi\rangle|^2 \\
  &=
  1-
  F\!\left(
  \rho,
  |\phi_{\mathrm{opt}}\rangle
  \langle\phi_{\mathrm{opt}}|
  \right) \\
  &=
  1-\|\rho\|_{S(1)},
\end{align*}
where $|\phi_{\mathrm{opt}}\rangle$ is a closest product state. For this optimizer,
\[
  \|\rho\|_{S(1)}
  =
  1-E_2(\rho)
  =
  F\!\left(
  \rho,
  |\phi_{\mathrm{opt}}\rangle
  \langle\phi_{\mathrm{opt}}|
  \right),
\]
so the $S(1)$-norm is the maximal squared fidelity with a product state.
\end{remark}

Lemma~\ref{lem4} and Proposition~\ref{normGM} turn any certified geometric lower
bound into a witness coefficient. Here ``nontrivial'' means block positive on
the specified Schmidt-number set but not positive semidefinite.

\begin{proposition}[Certified lower-bound construction]
\label{prop:certified-witness}
Let $\tau\in\mathcal D(\mathcal H_n\otimes\mathcal H_m)$, let
$1\leq k<m$, and suppose that a rigorously justified number
$L_{k+1}(\tau)$ satisfies
\[
0\leq L_{k+1}(\tau)\leq E_{k+1}(\tau).
\]
Define
\[
\widetilde W_{k+1}(\tau)
:=\bigl(1-L_{k+1}(\tau)\bigr)I-\tau.
\]
Then $\widetilde W_{k+1}(\tau)$ is $k$-block positive. It is a
nontrivial $(k+1)$-Schmidt-number witness if and only if
\[
\lambda_{\max}(\tau)>1-L_{k+1}(\tau).
\]
For any test state $\sigma$, it detects $\sigma$ precisely when
\[
\operatorname{Tr}(\tau\sigma)>1-L_{k+1}(\tau).
\]
\end{proposition}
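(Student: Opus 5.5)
The plan is to reduce the block-positivity claim to the support bound $\|\tau\|_{S(k)}\leq 1-L_{k+1}(\tau)$ and then read off nontriviality and detection from the spectrum and trace of $\widetilde W_{k+1}(\tau)$. First, since $\tau\succeq0$, Lemma~\ref{lem4} gives
\[
\max_{\sigma\in\mathcal S_k}\operatorname{Tr}(\tau\sigma)=h_k(\tau)=\|\tau\|_{S(k)} .
\]
Next, Proposition~\ref{normGM}, inequality~\eqref{SKGM}, gives $\|\tau\|_{S(k)}\leq 1-E_{k+1}(\tau)$. The hypothesis $L_{k+1}(\tau)\leq E_{k+1}(\tau)$ then yields
\[
\|\tau\|_{S(k)}\leq 1-L_{k+1}(\tau).
\]
By the second part of Lemma~\ref{lem4}, with $\lambda=1-L_{k+1}(\tau)$ and $X=\tau$, the operator $\lambda\mathbb I-\tau$ is $k$-block positive. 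Equivalently, by Lemma~\ref{lemkbp}, $\operatorname{Tr}(\widetilde W_{k+1}(\tau)\sigma)=\lambda-\operatorname{Tr}(\tau\sigma)\geq0$ for every $\sigma\in\mathcal S_k$. This is the nonnegativity half of the witness definition with threshold index $k+1$.

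For nontriviality, I would compute $\lambda_{\min}(\widetilde W_{k+1})=1-L_{k+1}(\tau)-\lambda_{\max}(\tau)$. The operator therefore fails to be positive semidefinite exactly when $\lambda_{\max}(\tau)>1-L_{k+1}(\tau)$.
\begin{itemize}
\item If this holds, a top eigenvector $|v\rangle$ of $\tau$ gives $\operatorname{Tr}(\widetilde W_{k+1}|v\rangle\langle v|)<0$. Block positivity then forces $|v\rangle\langle v|\notin\mathcal S_k$, so the existence clause of the definition is satisfied.
\item Conversely, if $\lambda_{\max}(\tau)\leq 1-L_{k+1}(\tau)$, then $\widetilde W_{k+1}\succeq0$. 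No density operator has a negative expectation, so the operator is not a witness.
\end{itemize}
This is the same argument as in Lemma~\ref{prop:witness-coefficient}, with the exact coefficient $c_k$ replaced by the certified upper bound $1-L_{k+1}(\tau)\geq c_k$.

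The detection criterion follows from $\operatorname{Tr}(\widetilde W_{k+1}\sigma)=1-L_{k+1}(\tau)-\operatorname{Tr}(\tau\sigma)$ for normalized $\sigma$. This is negative iff $\operatorname{Tr}(\tau\sigma)>1-L_{k+1}(\tau)$.

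The only substantive step is the chain $h_k(\tau)\leq 1-E_{k+1}(\tau)$. This relies on the convex-roof support relation, which in turn uses convexity of the $S(k)$-norm and the rank-one identity. Once that inequality is cited, the rest is spectral bookkeeping.

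One minor point I would state explicitly: in the mixed case $h_k(\tau)$ may be strictly below $1-E_{k+1}(\tau)$, so the witness is valid but generally not optimal. The "if and only if" for nontriviality refers to this specific coefficient, not to the optimal one.
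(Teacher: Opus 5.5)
Your proposal is correct and follows essentially the same route as the paper. You use the chain $\|\tau\|_{S(k)}\leq 1-E_{k+1}(\tau)\leq 1-L_{k+1}(\tau)$ from Proposition~\ref{normGM}, Lemma~\ref{lem4} for $k$-block positivity, the smallest eigenvalue $1-L_{k+1}(\tau)-\lambda_{\max}(\tau)$ for nontriviality, and direct evaluation for detection; your top-eigenvector check of the existence clause spells out a step the paper leaves implicit.
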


\begin{proof}
The assumptions give
\[
\|\tau\|_{S(k)}\leq1-E_{k+1}(\tau)
\leq1-L_{k+1}(\tau).
\]
Lemma~\ref{lem4} proves $k$-block positivity. The smallest eigenvalue of
$\widetilde W_{k+1}(\tau)$ is
$1-L_{k+1}(\tau)-\lambda_{\max}(\tau)$, which proves the nontriviality
criterion; direct evaluation gives the detection condition.
\end{proof}

Setting $L_{k+1}(\tau)=E_{k+1}(\tau)$ gives the formal specialization
\[
W_{k+1}(\tau)=\bigl(1-E_{k+1}(\tau)\bigr)I-\tau.
\]
This is circular unless $E_{k+1}(\tau)$ is known. Nontriviality differs
from self-detection: the specialization detects its defining state only when
\(
\operatorname{Tr}(\tau^2)>1-E_{k+1}(\tau)
\).

The construction is noncircular for a known convex roof. For example, let
$\tau=\rho_x^{\rm iso}=a(I-\Phi_m)+x\Phi_m$, where
$a=(1-x)/(m^2-1)$ and $x\geq1/m^2$. Then
$L_{k+1}(\tau)=\mathcal F_{k+1}(x)$ is exact by
Proposition~\ref{prop:isotropic-exact-Er}, while
\[
\|\tau\|_{S(k)}=a+(x-a)\frac{k}{m}.
\]
Proposition~\ref{prop:certified-witness} gives a mixed, nonprojector witness when
$x>1-\mathcal F_{k+1}(x)$. Section VI compares a four-dimensional instance with
the exact support coefficient to quantify the certificate's conservatism.

Estimates of $L_{k+1}$ and the test overlap require joint coverage, through
simultaneous one-sided bounds or a union-bound budget. Calibration and test data
must be separate or modeled jointly.

Here one expectation means the global observable $\tau$. A decomposition
$\tau=\sum_\ell a_\ell A_\ell\otimes B_\ell$ still requires local settings and
sampling. Access to $L_{k+1}$ alone establishes no experimental advantage.

\section{Auxiliary comparisons with negativity and concurrence}

Partial-transpose negativity, convex-roof extended negativity (CREN), and
$I$-concurrence admit one-sided comparisons with Vidal tails. Their mixed-state
optimizations differ, so these are neither equivalences nor projector-fidelity
calibrations. The purpose of this section is to identify what the multiscale
geometric hierarchy implies for established entanglement measures. In
particular, a higher tail distinguishes the amount of Schmidt weight above a
specified rank threshold, whereas ordinary negativity aggregates all pairwise
products of Schmidt amplitudes.

We use the negativity normalization of Ref.~\cite{pra16zhang}. Here $k$ matches
the Sec.~IV $S(k)$ notation and denotes the threshold called $r$ in Sec.~III.

For projector calibration, Ref.~\cite{pra16zhang} uses
\[
\Lambda_\phi^{(1)}
=\max\left\{\frac{F_\phi}{m\nu_1},\frac1m\right\}.
\]
Section~III replaces this relaxation by the exact $A_r$-adapted envelope.

For negativity, Ref.~\cite{pra16zhang} gives the $k=2$ specialization
$\mathcal N_c\leq g_{2,m}(E_2)-1$. General CREN constructions and negativity
bounds on Schmidt number are also established
\cite{CREN03,EltschkaSiewert2015,Regula2018}.
Theorem~\ref{thm:negativity-geometric} gives a $k\geq3$ Vidal-tail
envelope in this setting together with its analytic inverse and its
if-and-only-if pure-state equality spectrum. This statement is restricted to the
combined higher-tail $E_k$ relation; it does not concern CREN itself or general
Schmidt-number bounds. The implication remains directional: a lower bound on
$E_k$ does not yield a lower bound on negativity.

For concurrence, Ref.~\cite{pra16zhang} gives a nonlinear mixed-state relation
for $E_2$. Proposition~\ref{prop:concurrence-geometric} gives a linear
higher-tail relation, with no improvement at $k=2$ and a possibly nonoptimal
coefficient for $m>2$.

For pure states, all Vidal tails reconstruct the spectrum and negativity.
Separate mixed-state roofs need not share a spectrum.

One convex-roof lemma covers both mixed-state transfers.

\begin{lemma}[Convex-roof transfer]
\label{lem:convex-roof-transfer}
Let $e$ and $q$ be continuous functions on pure states, and let $E$ and $Q$
denote their convex roofs. If $q(|\psi\rangle)\leq g(e(|\psi\rangle))$ for a
continuous, nondecreasing, concave function $g$, then
$Q(\rho)\leq g(E(\rho))$. If instead
$q(|\psi\rangle)\geq c\,e(|\psi\rangle)$ for a constant $c\geq0$, then
$Q(\rho)\geq cE(\rho)$.
\end{lemma}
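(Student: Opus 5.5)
Both parts of Lemma~\ref{lem:convex-roof-transfer} follow from one pattern: pick a nearly optimal ensemble for the roof being bounded, apply the pure-state inequality to each component, and pass to the other roof. For the first (upper) statement, fix $\varepsilon>0$ and choose an ensemble $\{p_i,|\psi_i\rangle\}\in\mathfrak E(\rho)$ that is $\varepsilon$-optimal for $E$:
\[
\sum_ip_ie(|\psi_i\rangle)\leq E(\rho)+\varepsilon .
\]
Since $Q$ is an infimum over ensembles, it is bounded by the average over this one. The pure-state hypothesis, concavity of $g$ (Jensen over the finite ensemble) and monotonicity of $g$ then give
\[
Q(\rho)\leq\sum_ip_iq(|\psi_i\rangle)
\leq\sum_ip_ig\bigl(e(|\psi_i\rangle)\bigr)
\leq g\Bigl(\sum_ip_ie(|\psi_i\rangle)\Bigr)
\leq g\bigl(E(\rho)+\varepsilon\bigr).
\]
Letting $\varepsilon\downarrow0$ and using continuity of $g$ gives $Q(\rho)\leq g(E(\rho))$.

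Two technical points need attention here. First, $g$ must be defined at $E(\rho)+\varepsilon$. I would handle this by assuming $g$ is defined on an interval containing the range of $e$. If $E(\rho)$ is the right endpoint of that range, the infimum is attained (finite-dimensional compactness and Carath\'eodory, as noted in Sec.~II for $E_r$), so one can take $\varepsilon=0$ directly. Alternatively, one can work with $\sum_ip_ie(|\psi_i\rangle)$, which lies in the range, and use monotonicity only after taking the limit. Second, the continuity of $e$ and $q$ is used only to guarantee that the convex roofs are well defined and attained. Attainment would let me avoid $\varepsilon$ altogether, though the $\varepsilon$-argument already suffices.

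For the second (lower) statement, the roles swap: choose an ensemble that is $\varepsilon$-optimal for $Q$. Since $c\geq0$, the infimum defining $E$ scales, so
\[
Q(\rho)+\varepsilon\geq\sum_ip_iq(|\psi_i\rangle)\geq c\sum_ip_ie(|\psi_i\rangle)\geq cE(\rho),
\]
and letting $\varepsilon\downarrow0$ finishes the proof.

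I expect the only delicate step to be the domain and continuity bookkeeping for $g$ near the endpoint of the range of $e$ in the first statement. Attainment of the convex roof in finite dimensions removes this issue, so I would invoke it explicitly rather than rely on the limit argument.
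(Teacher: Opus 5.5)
Your proof is correct and follows the paper's own argument. For the first claim you take an $\varepsilon$-optimal ensemble for $E$ and apply the pure-state inequality, Jensen's inequality and monotonicity of $g$; for the second you take an $\varepsilon$-optimal ensemble for $Q$ and use the infimum defining $E$ together with $c\geq0$. Your extra care about the domain of $g$ at $E(\rho)+\varepsilon$ is a harmless refinement that the paper leaves implicit.
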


\begin{proof}
For the first claim, apply the pure-state inequality and Jensen's inequality to
an $\varepsilon$-optimal decomposition for $E$. For the second, apply the
pointwise inequality to one for $Q$ and use the infimum defining $E$. Let
$\varepsilon\downarrow0$; no common optimizer is needed.
\end{proof}

\subsection{Geometric measures and negativity}

Use the partial-transpose normalization of Ref.~\cite{Negativity02}:
\begin{equation}
    \mathcal{N}_{\mathrm{PT}}(\rho)
    :=
    \left\|\rho^{T_B}\right\|_1-1,
    \label{eq:pt-negativity}
\end{equation}
where $T_B$ is the partial transpose and $\|\cdot\|_1$ the trace norm. This is
twice the convention
$(\|\rho^{T_B}\|_1-1)/2$.

Following Ref.~\cite{CREN03}, define the convex-roof extended negativity (CREN)
by
\begin{equation*}
    \mathcal{N}_{c}(\rho)
    :=
    \inf_{\{p_j,\lvert\psi_j\rangle\}\in\mathfrak E(\rho)}
    \sum_j p_j
    \mathcal{N}_{\mathrm{PT}}
    \bigl(\lvert\psi_j\rangle\langle\psi_j\rvert\bigr),
\end{equation*}
with the Sec.~II ensemble set $\mathfrak E(\rho)$. The symbols
$\mathcal N_{\rm PT}$ and $\mathcal N_c$ remain distinct.

Let
\begin{equation}
    \lvert\psi\rangle
    =
    \sum_{i=1}^{m}
    s_i\lvert a_i b_i\rangle,
    s_1\geq \cdots\geq s_m\geq0,
    \sum_{i=1}^{m}s_i^2=1,
    \label{eq:schmidt-state-negativity}
\end{equation}
be a pure state in $\mathcal{H}_n\otimes\mathcal{H}_m$, with $m\leq n$. Then
\begin{equation*}
    \mathcal{N}_{\mathrm{PT}}
    \bigl(\lvert\psi\rangle\langle\psi\rvert\bigr)
    =
    \left(\sum_{i=1}^{m}s_i\right)^2-1
    =
    2\sum_{i<j}s_i s_j.
\end{equation*}
The Sec.~III parameter satisfies
\begin{equation*}
    \lambda(\psi)
    =
    \frac{1}{m}
    \left(\sum_{i=1}^{m}s_i\right)^2
    =
    \frac{
        \mathcal{N}_{\mathrm{PT}}
        (\lvert\psi\rangle\langle\psi\rvert)+1
    }{m}.
\end{equation*}
Thus the pure-state bound reparametrizes Lemma~\ref{thm:exact-Fr-main}. Its
mixed-state transfer distinguishes \(\mathcal N_{\rm PT}\) from
\(\mathcal N_c\).

Set \(p:=k-1\), \(q:=m-k+1\), and
\begin{equation*}
    g_{k,m}(x)
    :=
    \left[
        \sqrt{p(1-x)}
        +
        \sqrt{qx}
    \right]^2,
    \qquad
    0\leq x\leq\frac{q}{m}.
\end{equation*}
Ordering gives \(0\leq E_k(\lvert\psi\rangle)\leq q/m\); the same bound holds
for $E_k(\rho)$. The function $g_{k,m}$ is
nondecreasing on $[0,q/m]$ and maps this interval onto $[k-1,m]$.
With the normalization above, every bipartite state satisfies
\[
0\leq\mathcal N_{\rm PT}(\rho)
\leq\mathcal N_c(\rho)\leq m-1.
\]

\begin{theorem}[Vidal-tail upper bound for negativity]
\label{thm:negativity-geometric}
Assume that $2\leq k\leq m\leq n$, and 
let \(\rho\in\mathcal{D}(\mathcal{H}_n\otimes\mathcal{H}_m)\) be a bipartite state. Then
\begin{equation}
    \mathcal{N}_{\mathrm{PT}}(\rho)
    \leq
    \mathcal{N}_{c}(\rho)
    \leq
    g_{k,m}\bigl(E_k(\rho)\bigr)-1.
    \label{eq:negativity-geometric-bound}
\end{equation}
\end{theorem}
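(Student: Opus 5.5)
The plan has three steps: a pure-state bound from Lemma~\ref{thm:exact-Fr-main}, a concavity check on $g_{k,m}$, and a transfer to mixed states via Lemma~\ref{lem:convex-roof-transfer}.

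\textbf{First inequality.} The inequality $\mathcal N_{\rm PT}(\rho)\leq\mathcal N_c(\rho)$ follows from convexity of the trace norm. For any ensemble $\{p_j,|\psi_j\rangle\}\in\mathfrak E(\rho)$ we have $\|\rho^{T_B}\|_1\leq\sum_jp_j\|(|\psi_j\rangle\langle\psi_j|)^{T_B}\|_1$. Subtracting $1$ and taking the infimum over ensembles gives the claim.

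\textbf{Pure-state bound.} Fix a pure state with Schmidt probabilities $\mu_i=s_i^2$ and set $x=E_k(|\psi\rangle)\in[0,q/m]$. By Lemma~\ref{thm:exact-Fr-main}, $\mathcal F_k$ is the lower boundary of the pure $(\lambda,E_k)$ region, so $x\geq\mathcal F_k(\lambda)$. I will invert this using the equivalent form \eqref{eq:Fr-equivalent-main}. A cleaner route is direct Cauchy--Schwarz on the two blocks $\{1,\dots,k-1\}$ and $\{k,\dots,m\}$:
\[
\sum_i s_i\leq\sqrt{p(1-x)}+\sqrt{qx}.
\]
Squaring and subtracting $1$ gives $\mathcal N_{\rm PT}(|\psi\rangle\langle\psi|)\leq g_{k,m}(x)-1$. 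Equality holds iff the spectrum is uniform on each block, consistent with Theorem~\ref{thm:joint-multistep-main} applied to the partition $\{0,k-1,m\}$. This also yields the equality spectrum that the paper promises.

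\textbf{Properties of $g_{k,m}$ and mixed-state transfer.} I must check that $g_{k,m}$ is continuous, nondecreasing, and concave on $[0,q/m]$. Writing $x=\sin^2\theta$ gives
\[
g_{k,m}=\bigl[\sqrt p\cos\theta+\sqrt q\sin\theta\bigr]^2=\tfrac m2\bigl[1+\cos(2\theta-2\theta_0)\bigr],\qquad \tan\theta_0=\sqrt{q/p},
\]
so $g_{k,m}$ peaks at $x=\sin^2\theta_0=q/m$, which is the monotonicity claim already recorded. For concavity, direct differentiation gives
\[
g''=-\frac{\sqrt{pq}}{2[x(1-x)]^{3/2}}<0,
\]
which mirrors the computation in Proposition~\ref{prop:Fr-convex-main}. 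The endpoint $p=0$ does not occur because $k\geq2$. Continuity of $E_k$ and of $\mathcal N_{\rm PT}$ on pure states is clear. Lemma~\ref{lem:convex-roof-transfer}, with $e=E_k$, $q=\mathcal N_{\rm PT}+1$, and $g=g_{k,m}$, then gives $\mathcal N_c(\rho)+1\leq g_{k,m}(E_k(\rho))$. This transfer needs $E_k(\rho)\leq q/m$ so that $g$ is evaluated on its domain. That holds because the ensemble average of pure values, each at most $q/m$, bounds the roof.

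\textbf{Main obstacle.} The main obstacle is the Jensen step inside the transfer lemma. Monotonicity is essential there, because an $\varepsilon$-optimal decomposition for $E_k$ only gives $\sum_jp_jE_k(\psi_j)\geq E_k(\rho)$ up to $\varepsilon$, and I need the inequality in the right direction. Concavity gives $\sum_jp_jg(e_j)\leq g(\sum_jp_je_j)$. Monotonicity then gives $g(\sum_jp_je_j)\leq g(E_k(\rho)+\varepsilon)$, and continuity finishes the argument as $\varepsilon\downarrow0$. For $E_k(\rho)+\varepsilon>q/m$, I extend $g$ by the constant $m$.
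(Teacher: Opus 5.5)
Your proposal is correct and follows the paper's Appendix~C proof essentially step for step: trace-norm convexity gives $\mathcal N_{\rm PT}\le\mathcal N_c$, blockwise Cauchy--Schwarz gives the pure-state bound~\eqref{eq:pure-negativity-bound}, and the first part of Lemma~\ref{lem:convex-roof-transfer}, applied with the continuous, nondecreasing, concave $g_{k,m}$, transfers that bound to the convex roof. Your explicit computation of $g_{k,m}''$ and your handling of the domain $[0,q/m]$ (through $E_k(\rho)\le q/m$ and the constant extension) supply details that the paper only asserts.
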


Appendix C proves Theorem~\ref{thm:negativity-geometric}.
At $k=2$, Eq.~\eqref{eq:negativity-geometric-bound} is the
geometric-measure--CREN relation of Ref.~\cite{pra16zhang}. The higher-tail
extension includes the inverse and equality conditions below.

Since $g_{k,m}$ is nondecreasing on $[0,q/m]$, invert
Theorem~\ref{thm:negativity-geometric}. For
$\mathcal{M}\in\{\mathcal{N}_{\mathrm{PT}},\mathcal{N}_{c}\}$ with
$0\leq\mathcal M(\rho)\leq m-1$,
\begin{equation*}
    E_k(\rho)
    \geq
    \frac{1}{m^2}
    \left[
        \sqrt{
            q\bigl(\mathcal{M}(\rho)+1\bigr)
        }
        -
        \sqrt{
            p\bigl(m-\mathcal{M}(\rho)-1\bigr)
        }
    \right]_{+}^{\,2},
\end{equation*}
where $[x]_{+}:=\max\{x,0\}$. 
The zero-set characterization of $E_k$ then gives
\begin{equation*}
    \mathcal{M}(\rho)>k-2
    \quad\Longrightarrow\quad
    E_k(\rho)>0.
\end{equation*}
Thus $\mathcal M>k-2$ excludes decompositions with component ranks at most
$k-1$. For a known density operator, $\mathcal N_{\rm PT}$ is computable;
CREN retains a convex roof.

For a pure state $\lvert\psi\rangle$ with Schmidt amplitudes
$s_1\geq\cdots\geq s_m$, set $t:=E_k(\lvert\psi\rangle)$. Then
\begin{align}
    \mathcal{N}_{\mathrm{PT}}
    \bigl(\lvert\psi\rangle\langle\psi\rvert\bigr)+1
    &=
    \left(
        \sum_{i=1}^{m}s_i
    \right)^2                                                   \notag\\
    &\leq
    \left[
        \sqrt{p(1-t)}
        +
        \sqrt{qt}
    \right]^2                                                   \notag\\
    &=
    g_{k,m}(t).
    \label{eq:pure-negativity-bound}
\end{align}
Equality holds if and only if the Schmidt amplitudes are uniform within
each of the two blocks, namely,
\begin{footnotesize}
    \begin{align*}
    s_1=\cdots=s_{k-1}
    &=
    \sqrt{\frac{1-E_k(\lvert\psi\rangle)}{k-1}},
    \\
    s_k=\cdots=s_m
    &=
    \sqrt{\frac{E_k(\lvert\psi\rangle)}{m-k+1}}.
\end{align*}
\end{footnotesize}

These are the two-block spectra of Lemma~\ref{thm:exact-Fr-main}, so the
pure-state inequality is tight. CREN equality requires one decomposition to
minimize both $E_k$ and $\mathcal N_c$, with two-block-uniform nonzero components
and equality in Jensen's inequality. Strict concavity of $g_{k,m}$ forces equal
component tails except at degenerate endpoints. Equality in
$\mathcal N_{\rm PT}\leq\mathcal N_c$ requires equality in trace-norm
convexity for the partial transposes. We do not classify mixed-state saturation.

For pure states, the full hierarchy reconstructs the spectrum:
\begin{footnotesize}
\begin{align*}
    s_1^2&=1-E_2(|\psi\rangle),\\
    s_i^2&=E_i(|\psi\rangle)-E_{i+1}(|\psi\rangle),
    \qquad 2\leq i\leq m-1,\\
    s_m^2&=E_m(|\psi\rangle).
\end{align*}
\end{footnotesize}
Substitution yields
\begin{footnotesize}
   \begin{align*}
    &\mathcal{N}_{\mathrm{PT}}
    \bigl(\lvert\psi\rangle\langle\psi\rvert\bigr)\\
    &\quad=
    \Bigg[
        \sqrt{1-E_2(\lvert\psi\rangle)}\\
    &\qquad
        +
        \sum_{i=2}^{m-1}
        \sqrt{E_i(\lvert\psi\rangle)-E_{i+1}(\lvert\psi\rangle)}\\
    &\qquad
        +
        \sqrt{E_m(\lvert\psi\rangle)}
    \Bigg]^2-1.
\end{align*} 
\end{footnotesize}
Separate mixed-state convex roofs do not determine one Schmidt spectrum.

\subsection{Geometric measures and concurrence}
Following Refs.~\cite{Concurrence00,Concurrence01}, define the pure-state
\(I\)-concurrence by
\begin{equation*}
    C(\lvert\psi\rangle)
    :=
    \sqrt{
        2\left(
            1-\operatorname{Tr}\rho_A^2
        \right)
    },
    \qquad
    \rho_A
    :=
    \operatorname{Tr}_B
    \lvert\psi\rangle\langle\psi\rvert.
\end{equation*}
For a mixed state $\rho$, define the convex roof
\begin{equation*}
    C(\rho)
    :=
    \inf_{\{p_j,\lvert\psi_j\rangle\}\in\mathfrak E(\rho)}
    \sum_jp_jC(\lvert\psi_j\rangle).
\end{equation*}
For the state in Eq.~\eqref{eq:schmidt-state-negativity},
\begin{equation*}
    C^2(\lvert\psi\rangle)
    =
    2\left(1-\sum_{i=1}^{m}s_i^4\right)
    =
    4\sum_{i<j}s_i^2s_j^2.
\end{equation*}

\begin{proposition}[Dimension-dependent concurrence bound]\label{prop:concurrence-geometric}
Assume that \(2\leq k\leq m\leq n\). Let
\(\rho\in\mathcal{D}(\mathcal{H}_n\otimes\mathcal{H}_m)\)
be a bipartite state.
Then
\begin{equation}\label{GMnegeq2}
E_{k}(\rho)\leq\frac{1}{2}\sqrt{\frac{m-k+1}{k-1}}C(\rho).
\end{equation}
\end{proposition}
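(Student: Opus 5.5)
The proof first establishes a linear pure-state inequality $C(|\psi\rangle)\geq c\,E_k(|\psi\rangle)$ with $c=2\sqrt{(k-1)/(m-k+1)}$. It then transfers this to mixed states through the second clause of Lemma~\ref{lem:convex-roof-transfer}. That clause requires only a nonnegative constant and continuity of both pure-state functions, and both $C$ and $E_k$ are continuous on pure states. Inverting gives $E_k(\rho)\leq c^{-1}C(\rho)=\tfrac12\sqrt{(m-k+1)/(k-1)}\,C(\rho)$, which is Eq.~\eqref{GMnegeq2}.

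For the pure-state step, write $\mu_i=s_i^2$, $p=k-1$, $q=m-k+1$ and $t=E_k(|\psi\rangle)=\sum_{j\geq k}\mu_j$. Starting from $C^2=4\sum_{i<j}\mu_i\mu_j$, I discard every pair except those with $i\leq k-1<j$. All terms are nonnegative, so
\[
C^2(|\psi\rangle)\geq 4\Bigl(\sum_{i=1}^{k-1}\mu_i\Bigr)\Bigl(\sum_{j=k}^{m}\mu_j\Bigr)=4t(1-t).
\]
Next I use the ordering bound $t\leq q/m$, already recorded before Theorem~\ref{thm:negativity-geometric}. It holds because the tail average cannot exceed the overall average $1/m$. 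Since $t\leq q/m$ is equivalent to $mt\leq q$, hence to $qt+pt\leq q$, hence to $1-t\geq (p/q)t$, this yields $4t(1-t)\geq 4(p/q)t^2$. Taking square roots (both sides are nonnegative) gives $C(|\psi\rangle)\geq 2\sqrt{p/q}\,t$, which is the claimed pure-state inequality.

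The mixed-state step is then immediate from Lemma~\ref{lem:convex-roof-transfer} with $e=E_k$, $q=C$ and $c=2\sqrt{p/q}\geq 0$: it gives $C(\rho)\geq cE_k(\rho)$ without needing a common optimal decomposition. The only point needing care is the choice of which cross terms to discard, since keeping the full sum does not produce a clean bound in $t$. The ordering estimate $t\leq q/m$ is what converts the quadratic lower bound $4t(1-t)$ into a linear one. No substantial obstacle is expected.

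The coefficient need not be optimal for $m>2$. Equality in the pure-state chain would require all within-block products to vanish and also $t=q/m$, and these conditions are compatible only in degenerate cases. I would note this nonoptimality rather than attempt to sharpen it.
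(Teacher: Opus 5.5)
Your proposal is correct and follows essentially the same route as the paper. It discards all but the head--tail cross terms to get $C^2\geq 4E_k(1-E_k)$, linearizes with the ordering bound $E_k\leq (m-k+1)/m$, and transfers to mixed states via the second clause of Lemma~\ref{lem:convex-roof-transfer}. Your rearrangement $1-t\geq (p/q)t$ avoids the paper's division by $E_k$, so the zero-tail case needs no separate remark; just avoid reusing $q$ both for $m-k+1$ and for the lemma's function name.
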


\begin{proof}
Let \(|\psi_j\rangle\in \mathcal{H}_n\otimes\mathcal{H}_m\) have Schmidt
amplitudes
\[
s_1^j\geq s_2^j\geq\cdots\geq s_m^j\geq0.
\]
Then
\[
E_k(|\psi_j\rangle)=
\sum_{i=k}^{m}(s_i^j)^2.
\]
The concurrence satisfies
\begin{equation*}
C^2(|\psi_j\rangle)
=
4\sum_{i<t}(s_i^j)^2(s_t^j)^2 .
\end{equation*}
Keeping cross terms between the first \(k-1\) amplitudes and the tail gives
\begin{align*}
C^2(|\psi_j\rangle)
&\geq
4
\left(\sum_{i=1}^{k-1}(s_i^j)^2\right)
\left(\sum_{t=k}^{m}(s_t^j)^2\right)  \\
&=
4\bigl(1-E_k(|\psi_j\rangle)\bigr)
E_k(|\psi_j\rangle).
\end{align*}
Ordering gives
\[
E_k(|\psi_j\rangle)
=
\sum_{i=k}^{m}(s_i^j)^2
\leq
\frac{m-k+1}{m}.
\]
For $E_k(|\psi_j\rangle)>0$, this gives
\[
\frac{1-E_k(|\psi_j\rangle)}{E_k(|\psi_j\rangle)}
\geq
\frac{k-1}{m-k+1}.
\]
Together these give
\[
C^2(|\psi_j\rangle)
\geq
4\frac{k-1}{m-k+1}
E_k^2(|\psi_j\rangle),
\]
and hence
\begin{equation*}
C(|\psi_j\rangle)
\geq
2\sqrt{\frac{k-1}{m-k+1}}
E_k(|\psi_j\rangle).
\end{equation*}

Lemma~\ref{lem:convex-roof-transfer} with
$c=2\sqrt{(k-1)/(m-k+1)}$ yields
\[
C(\rho)
\geq
2\sqrt{\frac{k-1}{m-k+1}}
E_k(\rho).
\]
Equivalently,
\[
E_k(\rho)
\leq
\frac{1}{2}
\sqrt{\frac{m-k+1}{k-1}}
C(\rho).
\]
This proves Eq.~\eqref{GMnegeq2}; zero tail is immediate.
\end{proof}

For $k=2$, this linear bound is weaker than the nonlinear mixed-state bound in
Ref.~\cite{pra16zhang}. For $m>2$, discarding intrablock terms and linearizing
$E_k(1-E_k)$ may weaken its coefficient. For example, the
maximally entangled pure state has
\[
E_k=\frac{m-k+1}{m},
\qquad
C=\sqrt{\frac{2(m-1)}{m}},
\]
and the linear inequality is strict for every $m>2$.

\begin{remark}
For pure \(|\psi\rangle\), the argument retains
\[
C^2(|\psi\rangle)
\geq
4E_k(|\psi\rangle)\bigl(1-E_k(|\psi\rangle)\bigr),
\]
and
\[
C(|\psi\rangle)
\geq
2\sqrt{E_k(|\psi\rangle)\bigl(1-E_k(|\psi\rangle)\bigr)}.
\]
The convex roof does not transfer this nonlinear inequality without another
convexity argument. For mixed states we retain only
\[
E_{k}(\rho)
\leq
\frac{1}{2}
\sqrt{\frac{m-k+1}{k-1}}C(\rho).
\]
The coefficient may be nonoptimal for $m>2$, even for pure states.
\end{remark}

The case $m=k=2$ yields a sharp coefficient.

\begin{corollary}
Let
\(
\rho\in\mathcal D(\mathcal H_n\otimes\mathcal H_2)
\)
be a bipartite state with one two-dimensional subsystem. Then
\begin{equation*}
    E_{2}(\rho)\leq \frac{1}{2}C(\rho).
\end{equation*}
\end{corollary}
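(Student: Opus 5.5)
This corollary is the specialization $m=k=2$ of Proposition~\ref{prop:concurrence-geometric}. We take the smaller local dimension to be $m=2$, so that $\mathcal H_2$ plays the role of $\mathcal H_m$ and $n\geq2$. Then the only admissible threshold is $k=2$. The coefficient becomes
\[
\frac12\sqrt{\frac{m-k+1}{k-1}}=\frac12\sqrt{\frac{1}{1}}=\frac12 .
\]
Substituting into Eq.~\eqref{GMnegeq2} gives $E_2(\rho)\leq\tfrac12 C(\rho)$ directly. The hypotheses $2\leq k\leq m\leq n$ hold, and no further argument is needed for the inequality. The ordering convention $m\leq n$ does not matter if the two-dimensional factor is listed first, since a relabeling of the subsystems leaves both $E_2$ and $C$ unchanged.

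For completeness, I would also check the pure-state step by hand, because the corollary calls the coefficient sharp. A pure state with Schmidt probabilities $\mu_1\geq\mu_2$ has $E_2=\mu_2$ and $C=2\sqrt{\mu_1\mu_2}$. Since $\mu_1\geq\mu_2$, we get $C\geq 2\mu_2=2E_2$. In this case the discarded intrablock terms in the proof of Proposition~\ref{prop:concurrence-geometric} are empty, so no information is lost there. Lemma~\ref{lem:convex-roof-transfer} with $c=2$ then transfers the bound to mixed states.

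The only point needing care is the sharpness remark, which supports the phrase ``sharp coefficient.'' Equality $C=2E_2$ holds exactly when $\mu_1=\mu_2=1/2$, that is, for a maximally entangled two-qubit pure state, which gives $E_2=1/2$ and $C=1$. This shows that no smaller state-independent constant is possible. I do not expect any real obstacle; the step to verify is that the earlier strictness example (``strict for every $m>2$'') does not apply here, and indeed $m=2$ is excluded from that example.
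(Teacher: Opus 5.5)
Your argument is correct and matches the paper's: the corollary is Proposition~\ref{prop:concurrence-geometric} at $m=k=2$, where the coefficient $\tfrac12\sqrt{(m-k+1)/(k-1)}$ equals $\tfrac12$. Your hand check $C=2\sqrt{\mu_1\mu_2}\geq2\mu_2=2E_2$, followed by Lemma~\ref{lem:convex-roof-transfer} with $c=2$, is that same proof written out. One correction to your sharpness remark: pure-state equality $C=2E_2$ holds when $\mu_2=0$ or $\mu_1=\mu_2$, not only for the maximally entangled spectrum. Product states, where both sides vanish, also saturate it, as the paper notes.
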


Pure maximally entangled states saturate the coefficient \(1/2\), and product
states saturate the zero endpoint. Mixed equality cases are not classified.

For a pure qubit--qudit state, the exact relation
\[
C^2(|\psi\rangle)=
4E_2(|\psi\rangle)\bigl(1-E_2(|\psi\rangle)\bigr)
\]
can be inverted, since \(0\leq E_2(|\psi\rangle)\leq 1/2\), to give
\[
E_2(|\psi\rangle)
=
\frac{1-\sqrt{1-C^2(|\psi\rangle)}}{2}.
\]
The linear bound
\[
E_2(|\psi\rangle)\leq \frac{1}{2}C(|\psi\rangle)
\]
is saturated precisely at
\[
E_2(|\psi\rangle)=0
\qquad\text{or}\qquad
E_2(|\psi\rangle)=\frac{1}{2},
\]
corresponding to product and maximally entangled states on the two-dimensional
Schmidt support.

The negativity inequality is sharp on the stated pure two-block spectra and
extends to CREN. The concurrence estimate may be nonoptimal in higher
dimensions. Neither scalar reconstructs the mixed-state hierarchy.
\section{Numerical illustrations and experimental interfaces}
\label{sec:numerical-illustration}

We first illustrate partition refinement and the two weighted baselines because
they correspond to the principal results of Sec.~III. Isotropic states then
provide a known mixed-state calibration, followed by a finite-sample threshold
example. Further calculations examine non-isotropic mixtures, structured
fidelity inputs, and witness conservatism. Published spatial-mode and
time--frequency values are treated as trusted summaries. No calculation below
uses raw data to test the method end to end or independently evaluates an
unknown mixed-state convex roof; the raw-count audit states what either task
would require.

\subsection{Multiscale refinement and weighted-bound crossings}

Define the four-level benchmark
\[
\begin{aligned}
|\psi_{\rm b}\rangle
&=\frac{1}{\sqrt{0.9983}}
\bigl(0.70|00\rangle+0.51|11\rangle\\
&\hspace{5em}+0.39|22\rangle+0.31|33\rangle\bigr),\\
\tau_{\rm b}&=|\psi_{\rm b}\rangle\langle\psi_{\rm b}|.
\end{aligned}
\]
The decimal amplitudes define the benchmark. Its normalized Schmidt
probabilities are
\[
\boldsymbol\nu
=(0.4908344,0.2605429,0.1523590,0.0962636),
\]
rounded after normalization. Direct evaluation gives
\[
\begin{aligned}
E_2&=0.509166,& E_3&=0.248623,& E_4&=0.096264,\\
\lambda&=0.913578,&
\mathcal N_{\rm PT}&=2.654312,& C&=1.147797.
\end{aligned}
\]
The known fixed-$\lambda$ boundaries are $0.463448$, $0.219014$, and
$0.049870$ for $r=2,3,4$, respectively. Their positive gaps from the exact tails
show the information lost under single-scalar compression; these substitutions
are illustrative rather than independent validation.

The scalar $\lambda$ loses the location of Schmidt weight. For the refinement in
Theorem~\ref{thm:joint-multistep-main}, use
\[
\begin{aligned}
\mathcal P_2&=\{0,1,4\},&
\mathcal P_3&=\{0,1,2,4\},\\
\mathcal P_4&=\{0,1,2,3,4\}.
\end{aligned}
\]
Each unoptimized step splits one residual block and gives
\[
\begin{aligned}
\Lambda_{\mathcal P_2}&=0.937523,&
\Lambda_{\mathcal P_3}&=0.917942,\\
\Lambda_{\mathcal P_4}&=\lambda=0.913578.
\end{aligned}
\]
This is the refinement chain of Proposition~\ref{prop:partition-refinement}.
Singleton blocks reproduce $\sum_i\sqrt{\mu_i}$, so $\mathcal P_4$ uses the full
spectrum; tighter values require more tails.

In the known Schmidt basis, the Sec.~III observables satisfy
\[
\begin{aligned}
\operatorname{Tr}(Q_1\tau_{\rm b})&=E_2,
&\operatorname{Tr}(Q_2\tau_{\rm b})&=E_3,\\
\operatorname{Tr}(Q_3\tau_{\rm b})&=E_4.
\end{aligned}
\]
They commute and derive from one mode-resolved measurement, assuming a pure
state and known basis. Mixed-state populations need not equal convex-roof tails.

The weighted comparison uses the two bounds in
Eqs.~\eqref{eq:full-spectrum-weighted-baseline} and
\eqref{eq:weighted-nu1-baseline} for singleton blocks in dimension four. The
three prespecified cases in Table~\ref{tab:weighted-baseline-comparison} show,
respectively, full-spectrum dominance,
post-activation $\nu_1$ dominance, and activation of only the full-spectrum
bound.

\begin{table*}[t]
\caption{Weighted fixed-fidelity lower bounds for singleton blocks. The last
column is the bound that should be reported. All displayed values are generated
directly from the two independently defined baselines, not from the exact
weighted convex roof.}
\label{tab:weighted-baseline-comparison}
\centering
\begin{tabular}{ccccccc}
\toprule
Case & $\boldsymbol\nu$ & $\boldsymbol\omega$ & $F$ & $B_{\rm FS}$ &
$B_{\nu_1}$ & $B_{\rm best}$ \\
\midrule
I & $(0.40,0.30,0.20,0.10)$ & $(1,1,1)/3$ & 0.9256
& 0.139772 & 0.063678 & 0.139772 \\
II & $(0.28,0.27,0.24,0.21)$ & $(1,1,1)/3$ & 0.4700
& 0.012870 & 0.018105 & 0.018105 \\
III & $(0.40,0.30,0.20,0.10)$ & $(0,1,1)/2$ & 0.7500
& 0.001569 & 0 & 0.001569 \\
\bottomrule
\end{tabular}
\end{table*}

For case I, a prespecified $20001$-point uniform scan on $[0,1]$ found
$B_{\rm FS}\geq B_{\nu_1}$ at every sampled point. The largest sampled
difference was $0.251031$ at $F=1$, and the trapezoidal integral of the positive
difference was $0.017036$. This grid does not prove analytical dominance. Case
II gives the opposite ordering. In case III the first positive
weight occurs at the $E_3$ cut, so
$A_3=0.70<F=0.75\leq2\nu_1=0.80$; it lies in the strict activation interval
$B_{\rm FS}>0=B_{\nu_1}$. Hence Eq.~\eqref{eq:weighted-best-baseline} reports
their maximum. Neither bound is shown tight, and the weights are mathematical,
not an experimental payoff.

\subsection{Exact mixed-state benchmark: isotropic states}

Let $\Phi_m=|\Phi_m^+\rangle\langle\Phi_m^+|$ and define the isotropic family
\[
\rho_F^{\rm iso}
:=\frac{1-F}{m^2-1}(I-\Phi_m)+F\Phi_m,
\qquad 0\leq F\leq1.
\]
It satisfies
$\operatorname{Tr}(\Phi_m\rho_F^{\rm iso})=F$ and is invariant under
$U\otimes\overline U$ twirling.

\begin{proposition}[Known isotropic convex roof recovered]
\label{prop:isotropic-exact-Er}
For every $2\leq r\leq m$,
\[
E_r(\rho_F^{\rm iso})=\mathcal F_r(F).
\]
The maximally entangled specialization of
Proposition~\ref{thm:single-observable-Er-main} is therefore tight on this mixed
family.
\end{proposition}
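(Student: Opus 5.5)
The proof is a two-sided sandwich. The lower bound $E_r(\rho_F^{\rm iso})\geq\mathcal F_r(F)$ is already available: Corollary~\ref{cor:maximally-entangled-reference} gives $E_r(\rho)\geq\mathcal F_r(F_+(\rho))$ for every state, and $\operatorname{Tr}(\Phi_m\rho_F^{\rm iso})=F$. It therefore remains to exhibit, for each $F$, a pure-state ensemble for $\rho_F^{\rm iso}$ whose average tail equals $\mathcal F_r(F)$. I would obtain this ensemble by twirling, following Vollbrecht--Werner.

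For the upper bound, first take $F>(r-1)/m$. Choose the pure state $|\psi_F\rangle=\sum_i\sqrt{\mu_i}|ii\rangle$ with the two-block spectrum~\eqref{eq:two-block-optimal-spectrum-main} of Lemma~\ref{thm:exact-Fr-main}, at $\lambda=F$. Its overlap with $\Phi_m$ is $\frac1m(\sum_i\sqrt{\mu_i})^2=\lambda=F$, and $E_r(|\psi_F\rangle)=\mathcal F_r(F)$.

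Apply the $U\otimes\overline U$ twirl $\mathcal T(X)=\int dU\,(U\otimes\overline U)X(U\otimes\overline U)^\dagger$. It maps every state to the isotropic state with the same $\Phi_m$-fidelity, so $\mathcal T(|\psi_F\rangle\langle\psi_F|)=\rho_F^{\rm iso}$. The twirl is a convex combination of the pure states $(U\otimes\overline U)|\psi_F\rangle$, each with the same Schmidt spectrum and hence the same tail. By Carath\'eodory a finite subensemble suffices. Convex-roof minimality then gives $E_r(\rho_F^{\rm iso})\leq\mathcal F_r(F)$.

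For $1/m\leq F\leq(r-1)/m$, the same argument applies with any pure state of Schmidt rank at most $r-1$ and overlap $F$ with $\Phi_m$. Such a state exists by the intermediate-value path in the proof of Proposition~\ref{thm:single-observable-Er-main}, or directly as a uniform state on $r-1$ modes rotated continuously. This gives $E_r=0=\mathcal F_r(F)$.

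For $F<1/m$, the function $\mathcal F_r$ evaluates to $0$ at such arguments by the extended zero branch. The isotropic state is then separable, since its overlap is at most $1/m$, and the same path construction realizes overlap $F$ with product states. This confirms $E_r=0$.

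The main obstacle is justifying the twirl identity cleanly: that $\mathcal T$ is a trace-preserving projection onto the isotropic family preserving $\operatorname{Tr}(\Phi_m\cdot)$, and that the ensemble realization is legitimate inside the convex-roof infimum. I would handle the first point by the invariance $(U\otimes\overline U)|\Phi_m^+\rangle=|\Phi_m^+\rangle$ together with Schur's lemma, which shows the commutant is spanned by $\Phi_m$ and $I$. For the second, I would replace the Haar integral with a finite unitary $2$-design in $U\otimes\overline U$ form, so the decomposition is explicitly finite. Tightness of the maximally entangled specialization then follows immediately, since both sides equal $\mathcal F_r(F)$.
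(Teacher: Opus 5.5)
Your proposal is correct and takes the route the paper indicates. The lower bound is the maximally entangled specialization in Corollary~\ref{cor:maximally-entangled-reference}, and the upper bound comes from twirling a single-tail equality state over the $U\otimes\overline U$ orbit; the paper writes out neither step and defers to Vollbrecht--Werner and Girard--Gour. Your separate treatment of $F<1/m$ is harmless but unnecessary, because the zero-branch path already gives Schmidt-rank-$(r-1)$ states with every overlap in $[0,(r-1)/m]$.
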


The equality is the established isotropic Vidal convex roof
\cite{VollbrechtWerner2001,GirardGour2017}. It also follows by twirling a
single-tail equality state over the $U\otimes\overline U$ orbit. We use it only
as a mixed-state calibration and do not repeat the known derivation.

Take $m=4$ and $F=0.82$. The mixed state $\rho_{0.82}^{\rm iso}$ exceeds every
threshold $F>(r-1)/4$, certifying Schmidt number four.
Table~\ref{tab:isotropic-mixed-benchmark} lists the exact roofs, equal to the
Sec.~III bounds.

\begin{table}[t]
\caption{Exact Vidal-tail convex roofs for the mixed isotropic state
$\rho_{0.82}^{\rm iso}$ in dimension four. The witness violation is
$\delta_r=0.82-(r-1)/4$.}
\label{tab:isotropic-mixed-benchmark}
\centering
\begin{tabular}{cccc}
\toprule
$r$ & $(r-1)/4$ & $\delta_r$ & $E_r(\rho_{0.82}^{\rm iso})$ \\
\midrule
2 & 0.25 & 0.57 & 0.327284 \\
3 & 0.50 & 0.32 & 0.115813 \\
4 & 0.75 & 0.07 & 0.007284 \\
\bottomrule
\end{tabular}
\end{table}

A synthetic direct-projection example quantifies finite-sample loss. Take
$S_F=410$ projector outcomes in $N=500$ trials, so that
$\widehat F=S_F/N=0.82$, and set the one-sided failure probability to
$\epsilon=0.01$.
Equation
\eqref{eq:hoeffding-projector-lower} gives
\[
F_L=0.752139>\frac34.
\]
With confidence at least $0.99$, the same data give
\[
E_4(\rho)
\geq\mathcal R_{4,\Phi_4}(F_L)
=6.12\times10^{-6}>0,
\]
which certifies Schmidt number four. The quadratic threshold opening reduces
the nominal $0.07$ violation to
$6.12\times10^{-6}$. This calculation assumes a direct Bernoulli model. For a
local implementation of
Eq.~\eqref{eq:projector-local-decomposition}, the shot allocation and confidence
radius must follow Eq.~\eqref{eq:local-decomposition-hoeffding} instead.

For the same Bernoulli data, exact one-sided binomial inversion gives the
Clopper--Pearson limit
\[
F_L^{\rm CP}=0.776524,
\qquad
\mathcal R_{4,\Phi_4}(F_L^{\rm CP})=9.74\times10^{-4}.
\]
At a true success probability $F_0=0.82$, the exact binomial probability that
the lower endpoint crosses $3/4$ is $0.925$ for this rule, compared with $0.574$
for the displayed Hoeffding rule; the corresponding false-negative
probabilities are $0.075$ and $0.426$. For failure probability $0.01$ and target
power $0.9$, a Hoeffding design requires approximately $1370$, $2685$, and
$16776$ shots when the true margins $F_0-A_4$ are $0.07$, $0.05$, and $0.02$,
respectively. Exact binomial limits apply only to the direct two-outcome
measurement; a weighted
local-correlator estimator requires a concentration or likelihood analysis for
its own outcome model.

\subsection{Screening illustration based on published tomography summaries}
\label{subsec:self-testing-application}

Zhang \textit{et al.} self-tested photonic bipartite states up to dimension four
under fair sampling \cite{Zhang2019SelfTesting}. For two qubits, a semidefinite
program converted tilted-Bell violations into fidelity lower bounds $F_S$. In
higher dimensions, four $2\times2$ blocks determined a pure target spectrum,
but the moment-relaxation cost prevented evaluation of a global self-testing
fidelity bound. Trusted tomography instead gave target fidelities $F_T$. We
lack the raw Bell, block, and tomography counts, a joint target-spectrum region,
and a high-dimensional self-testing bound. We therefore treat $F_T$ as a trusted
summary, not an end-to-end self-testing reanalysis.

Prospectively, if a Bell-correlation analysis supplies a numerical lower bound
$F_S\leq F_\phi(\sigma_{\rm ext})$, Corollary
\ref{cor:self-testing-tail-conversion} gives
\[
E_r(\rho_{\rm phys})
\geq\mathcal R_{r,\phi}(F_S),
\qquad
F_S>A_r\Longrightarrow
\operatorname{SN}(\rho_{\rm phys})\geq r.
\]
For the two-qubit targets
$|\phi(\theta)\rangle=\cos\theta|00\rangle+\sin\theta|11\rangle$,
the only threshold is
$A_2=\max\{\cos^2\theta,\sin^2\theta\}$. For a nearly product target, the tail
bound becomes positive only when the self-testing fidelity is close to one.
The publication plots but does not tabulate its two-qubit $F_S$ values; we do
not use digitized values as self-testing inputs.

The high-dimensional figure reports ten tomography fidelities and target
coefficients. For deterministic screening, we digitized the central bar heights
$h_i^{(j)}$ from Fig.~4 of Ref.~\cite{Zhang2019SelfTesting} at 300 dpi and set
\[
\widetilde c_i^{(j)}
=\frac{h_i^{(j)}}{\sqrt{\sum_l(h_l^{(j)})^2}},
\qquad
\widetilde\nu_i^{(j)}
=\operatorname{sort}_{\downarrow}\bigl[(\widetilde c_i^{(j)})^2\bigr].
\]
An uncertainty-qualified digitization would instead assign pixel-resolution
intervals $h_i^{(j)}\in[\underline h_i^{(j)},\overline h_i^{(j)}]$ and solve
\[
\begin{aligned}
\overline A_r^{(j)}=\sup\Biggl\{
&\sum_{i=1}^{r-1}\nu_i:
\nu_i=\frac{h_i^2}{\sum_lh_l^2},\\
&h_i\in[\underline h_i^{(j)},\overline h_i^{(j)}],\quad
\boldsymbol\nu=\operatorname{sort}_{\downarrow}\boldsymbol\nu
\Biggr\}.
\end{aligned}
\]
The plot lacks a calibrated pixel-to-amplitude error model, so the central
digitization is not a confidence region and is reported at limited precision.
For a lower-shifted screening input, define
$F_j^\star:=F_{T,j}-\sigma_{T,j}$, where $\sigma_{T,j}$ is the reported standard
deviation. Table~\ref{tab:self-testing-screen} evaluates
$\mathcal R_{r,\widetilde\phi_j}(F_j^\star)$. The digitized heights and all
unrounded values are recorded in the accompanying script.

\begin{table*}[t]
\caption{Retrospective screening of the high-dimensional photonic data in
Fig.~4 of Ref.~\cite{Zhang2019SelfTesting}. The ordered target probabilities
$\widetilde{\boldsymbol\nu}^{(j)}$ are obtained from digitized central bar
heights. We use $F_j^\star=F_{T,j}-\sigma_{T,j}$ only as a one-standard-deviation
screening input. The last column is the largest activated threshold, not a
confidence-qualified Schmidt-number statement. Probabilities and bounds are
rounded to three decimal places.}
\label{tab:self-testing-screen}
\centering
\scriptsize
\begin{tabular}{ccccccc}
\toprule
State & $F_T\pm\sigma_T$ & $F_j^\star$
& $\widetilde{\boldsymbol\nu}^{(j)}$
& $\mathcal R_2$ & $\mathcal R_3$ & $\mathcal R_4$ / largest $r$ \\
\midrule
$\psi_0$ & $0.974\pm0.003$ & $0.971$
    & $(0.273,0.265,0.240,0.222)$ & $0.564$ & $0.297$ & $0.099$ / $4$ \\
$\psi_1$ & $0.960\pm0.001$ & $0.959$
    & $(0.266,0.250,0.247,0.237)$ & $0.539$ & $0.287$ & $0.090$ / $4$ \\
$\psi_2$ & $0.969\pm0.001$ & $0.968$
    & $(0.270,0.247,0.247,0.237)$ & $0.560$ & $0.309$ & $0.104$ / $4$ \\
$\psi_3$ & $0.968\pm0.002$ & $0.966$
    & $(0.254,0.253,0.250,0.243)$ & $0.571$ & $0.312$ & $0.105$ / $4$ \\
$\psi_4$ & $0.951\pm0.004$ & $0.947$
    & $(0.259,0.257,0.248,0.237)$ & $0.519$ & $0.262$ & $0.074$ / $4$ \\
$\psi_5$ & $0.966\pm0.003$ & $0.963$
    & $(0.484,0.263,0.148,0.105)$ & $0.326$ & $0.107$ & $0.019$ / $4$ \\
$\psi_6$ & $0.957\pm0.004$ & $0.953$
    & $(0.499,0.286,0.182,0.033)$ & $0.290$ & $0.068$ & $0$ / $3$ \\
$\psi_7$ & $0.955\pm0.002$ & $0.953$
    & $(0.407,0.399,0.098,0.096)$ & $0.376$ & $0.055$ & $0.009$ / $4$ \\
$\psi_8$ & $0.950\pm0.006$ & $0.944$
    & $(0.711,0.145,0.143,\approx0)$ & $0.104$ & $0.022$ & $0$ / $3$ \\
$\psi_9$ & $0.981\pm0.004$ & $0.977$
    & $(0.997,0.002,0.001,0)$ & $0$ & $0$ & $0$ / -- \\
\bottomrule
\end{tabular}
\end{table*}

The nonuniform targets show the effect of cumulative spectral data. At the
largest activated threshold for $\psi_5$, $\psi_6$, $\psi_7$, and $\psi_8$, the values
in Table~\ref{tab:self-testing-screen} are, respectively,
$0.019$, $0.068$, $0.009$, and $0.022$, whereas the
largest-coefficient relaxation of Ref.~\cite{pra16zhang} is zero in all four
cases. The state $\psi_6$ shows that a fidelity above $0.95$ need not
activate the fourth tail when the smallest reference probability is only about
$0.033$. The nearly product target $\psi_9$ is a negative control: its high
fidelity does not cross even $A_2$.

The input $F_T$ comes from tomography, not a high-dimensional self-testing
bound. The shift $F_T-\sigma_T$ is not a one-sided confidence limit, target
uncertainties are unavailable, and digitization does not replace counts. A
confidence-qualified analysis requires the block and tomography counts,
setting and outcome labels, trial or exposure totals, the accidental-count
policy, and the map from the block data to the selected target. It would also
need either a documented calibration--test split or a selection-uniform joint
analysis as described after Proposition
\ref{prop:raw-count-joint-certificate}. Block-resolved probabilities could
constrain the Sec.~III multistep profiles only after conversion into valid
constraints on mixed-state decompositions. The fitted target spectrum alone is
insufficient.

For rank-$r$ design, the margin is $F_L-A_r$. At full rank it becomes
$F_L-(1-\nu_m)$, so weak occupation of the last Schmidt mode demands greater
precision. Balanced spectra may therefore favor high Schmidt-number
certification.
The four-dimensional experiment used 64 projective measurements for its
blockwise self-testing analysis and 256 for tomography
\cite{Zhang2019SelfTesting}. Equation
\eqref{eq:projector-local-decomposition} instead contains 16 product
correlators and at most 13 settings after elementary grouping. The counts are
not directly comparable: self-testing reduces device assumptions, whereas the
projector route presumes calibrated local bases. They show a model-dependent
trust--measurement tradeoff, not a universal advantage.

\subsection{Fidelity-certified photonic experiments across dimensions}
\label{subsec:experimental-fidelity-interfaces}

Proposition~\ref{thm:single-observable-Er-main} needs a valid lower bound on
$F_\phi(\rho)$, not a direct measurement of $P_\phi$. Let
$\mathcal D_{\rm exp}$ denote the event on which an
experimental analysis establishes
$F_\phi(\rho)\geq F_L$ and, when the target spectrum is uncertain,
$A_r\leq\overline A_r$. Corollary
\ref{cor:confidence-qualified-projector-calibration} gives the composable chain
\[
\begin{aligned}
\mathcal D_{\rm exp}
&\ \Longrightarrow\ 
F_\phi(\rho)\geq F_L,\quad A_r\leq\overline A_r,\\
&\ \Longrightarrow\ 
E_r(\rho)\geq\mathcal R_{\overline A_r}(F_L).
\end{aligned}
\]
The second implication is deterministic and changes neither measurements nor
the assumptions defining $\mathcal D_{\rm exp}$. The same boundary can process
projector data, structured-correlation bounds, or post-extraction self-testing
fidelity.

The two-basis protocol of Bavaresco \textit{et al.} provides a
structured-correlation interface \cite{bavaresco2018measurements}. Standard-basis
data determine a target
$|\phi\rangle=\sum_i\lambda_i|ii\rangle$, and correlations in a tilted basis
give a lower bound $\widetilde F\leq F_\phi(\rho)$. Their Schmidt-number
threshold is
$B_k(\phi)=\sum_{i=1}^{k}\lambda_i^2$. With
$\nu_i=\lambda_i^2$ and $k=r-1$, this threshold is precisely $A_r$. The
condition $\widetilde F>A_r$ recovers rank certification, whereas
Proposition~\ref{thm:single-observable-Er-main} retains the violation size
through $E_r(\rho)\geq\mathcal R_{r,\phi}(\widetilde F)$. The conversion also
applies to this architecture, including the pixel-entanglement experiment of
Ref.~\cite{Valencia2020Pixel}, if its reported fidelity is a valid lower bound.
If overlapping data determine both target and fidelity, the pair
$(F_L,\overline A_r)$ must have joint coverage; separate point estimates are
insufficient for a confidence statement.

A four-dimensional experiment gives an analytic nonuniform example. Guo
\textit{et al.} considered
\cite{Guo2018GMLE}
\[
|\xi_2\rangle
=\frac{\sqrt3}{2}|00\rangle
+\frac{|11\rangle+|22\rangle+|33\rangle}{2\sqrt3},
\]
whose ordered Schmidt probabilities are
$(3/4,1/12,1/12,1/12)$. They reported a target fidelity
$0.991\pm0.003$. We use the lower-shifted value $F_G^\star=0.988$ only for
screening, without assigning unreported confidence. Since
$A_2=3/4$, $A_3=5/6$, and $A_4=11/12$, the adapted boundary evaluated at this
screening input returns
\[
\begin{aligned}
\mathcal R_{2,\xi_2}(F_G^\star)&=0.161703,
&\mathcal R_{3,\xi_2}(F_G^\star)&=0.093508,\\
\mathcal R_{4,\xi_2}(F_G^\star)&=0.033145.
\end{aligned}
\]
At the same input, the largest-coefficient relaxation in Corollary
\ref{cor:dominance-nu1-relaxation} gives $0.007660$, $0$, and $0$,
respectively. The full spectrum leaves the adapted third- and fourth-tail
outputs positive after the
$\nu_1$-only relaxation has become trivial. They become confidence-qualified
lower bounds only if $F_G^\star$ is replaced by a valid $F_L$. The original
experiment tested genuine multilevel entanglement under a specified local
tensor-product structure, distinct from bipartite Schmidt number. Our
post-processing gives a possible Vidal-tail interpretation of the reported
fidelity but neither replaces the witness nor compares noise tolerance.

For maximally entangled targets, the adapted and $\nu_1$ envelopes coincide, but
the output still exceeds a rank threshold. Hu \textit{et al.} reported
$F_+=0.933\pm0.001$ for a $32$-dimensional spatial-mode source and certified
Schmidt number at least $30$ \cite{Hu2020Multipath}. Using
$F_{32}^\star=0.932$ as a lower-shifted screening value, Corollary
\ref{cor:maximally-entangled-reference} returns
\[
\begin{aligned}
\mathcal F_2(F_{32}^\star)&=0.817396,
&\mathcal F_{16}(F_{32}^\star)&=0.275746,\\
\mathcal F_{30}(F_{32}^\star)&=0.002242.
\end{aligned}
\]
The last value is positive because $0.932>29/32$, consistent with the reported
rank threshold. The first two show the lower-order information retained beyond
that binary statement. The source analysis establishes the rank threshold; the
displayed values are screening outputs.

A time--frequency experiment reported a fidelity lower-bound estimate
$0.654\pm0.004$ for a $1021$-dimensional maximally entangled target and a maximum
certified Schmidt number of $668$ \cite{Chang2026TimeFrequency}. At the central
value, the boundary gives
$\mathcal F_2(0.654)=0.623938$,
$\mathcal F_{512}(0.654)=0.024156$, and
$\mathcal F_{668}(0.654)=5.71\times10^{-7}$. The article displays the quoted
error bars at
three standard deviations. If their lower endpoint $F_L=0.650$ is used without
further reinterpretation, the corresponding values are
\[
\begin{aligned}
\mathcal F_2(0.650)&=0.619866,
&\mathcal F_{512}(0.650)&=0.022884,\\
\mathcal F_{664}(0.650)&=4.45\times10^{-7},
\end{aligned}
\]
whereas the $r=668$ bound vanishes. Lower-order tails remain sizable, while the
highest activated rank is sensitive to a small fidelity shift near
$A_r=(r-1)/1021$. These numbers inherit the experiment's three-standard-
deviation convention and are not presented as distribution-free confidence
bounds.

\subsection{Raw-count status and joint-confidence requirements}
\label{subsec:raw-count-status}

The examples above are not raw-count reanalyses. For
Ref.~\cite{Zhang2019SelfTesting}, our files contain reported tomography
fidelities and digitized target amplitudes, but no setting-resolved Bell, block,
or tomography counts. Only this example fits both target spectrum and fidelity
from experimental data. Proposition~\ref{prop:held-out-selected-reference} would
apply if the target-selection record were separated from fresh fidelity-test
rounds and the latter admitted a valid one-sided analysis. If records overlap,
the target-selection rule must be included in a uniform confidence region.
The source article states that supporting data appear in the article and
Supplement or are available on request. We report only the records obtained for
this analysis, not that other records do not exist.

The other targets are fixed by design. The nonuniform spectrum in
Ref.~\cite{Guo2018GMLE} is specified analytically, while the targets in
Refs.~\cite{Hu2020Multipath,Chang2026TimeFrequency} are maximally entangled.
Their $A_r$ values need no statistical estimate unless the implemented
projectors differ from the nominal targets. Without complete counts and
metadata, we cannot recompute one-sided fidelity bounds; central values and
error-bar endpoints remain screening inputs.

A raw-data record should contain, for each setting, its label, outcome counts,
trials or acquisition time,
randomization and stopping rule, background and accidental-count treatment,
detector-efficiency correction, and timestamps or batch labels. It should also
specify the measurement operators, local-mode ordering, target-selection
algorithm, calibration--test relation, confidence level, allocation of failure
probability, and any dimension or leakage test. These fields select the
Bernoulli, multinomial, independent-term, or martingale model. Without them, neither
Proposition~\ref{prop:raw-count-joint-certificate} nor a sharper likelihood
region can be evaluated reproducibly.

The reproducibility files prepared for release at
\url{https://github.com/Liangxiong920/normGM} supply input templates and an
implementation of Propositions
\ref{prop:held-out-selected-reference} and
\ref{prop:raw-count-joint-certificate}. The program accepts integer counts and
records the confidence budget and selection protocol. It rejects an overlapping
selection-and-test declaration unless the input already supplies a valid joint
region. No published count data are bundled, and the program produces no
experimental certificate from the screening values in this section.

The tail profile does not replace entropy or key-rate estimates. For a pure
state, Vidal tails enter the optimal stochastic
LOCC conversion probability \cite{VidalSingleCopy}; several $r$ values retain
information lost by one Schmidt-number label. For mixed states, our convex-roof
lower bounds imply neither distillable entanglement nor a key rate.

For a fixed reference, a valid $F_L$ yields the vector
$\{\mathcal R_{r,\phi}(F_L)\}_{r=2}^{m}$ without further measurements. For
design, the margin is $F_L-A_r$: balanced targets favor large-$r$ certification,
whereas an adapted target may increase fidelity for an uneven source. Report the
target spectrum, one-sided fidelity and coverage, joint uncertainty for a
data-dependent target, and the certified subspace. A correlation or steering
statistic enters only through a proved conversion to $F_L$.

\paragraph{Single-scale consistency check.}
A generic constrained SLSQP calculation, supplied with neither the proportional
two-block ansatz nor its objective value, reproduces six instances of the known
single-tail boundary to an absolute objective gap below $4.0\times10^{-13}$ and
a constraint residual below $8.3\times10^{-13}$. Seeds, starts, and unrounded
outputs are included in the reproducibility archive. This substitution check is
not evidence for theorem priority and is omitted from the main tables.

\subsection{Exact non-isotropic two-qubit benchmark}
\label{subsec:exact-two-qubit-benchmark}

Dimension two has an independent exact convex roof. For a pure two-qubit state,
$E_2=f(C)$ with
\[
f(C)=\frac{1-\sqrt{1-C^2}}{2}.
\]
The function $f$ is increasing and convex. Mixed-state concurrence gives
$E_2(\rho)\geq f(C(\rho))$; Wootters' equal-concurrence decomposition gives the
reverse inequality \cite{Wootters1998}. Hence
\begin{equation}
E_2(\rho)=\frac{1-\sqrt{1-C(\rho)^2}}{2}
\label{eq:two-qubit-exact-E2}
\end{equation}
for every two-qubit density operator.

Consider the rank-two family
\begin{align*}
|\phi_0\rangle
&=\sqrt{0.55}|00\rangle+\sqrt{0.45}|11\rangle,\\
|\xi\rangle
&=(I\otimes U_{0.5})
\bigl(\sqrt{0.65}|00\rangle+\sqrt{0.35}|11\rangle\bigr),\\
\rho_p&=p|\phi_0\rangle\langle\phi_0|
+(1-p)|\xi\rangle\langle\xi|,
\end{align*}
where $U_{0.5}$ rotates by $0.5$ radians. The components have different Schmidt
spectra and local bases, so the family is neither isotropic nor an affine
reparametrization of one projector. Table~\ref{tab:exact-two-qubit-benchmark}
compares the main bound, Eq.~\eqref{eq:two-qubit-exact-E2}, and the fixed
reference $|\Phi_2^+\rangle$.

\begin{table*}[t]
\caption{Exact non-isotropic mixed-state benchmark. The ratio is
$\mathcal R_{2,\phi_0}(F_{\phi_0})/E_2(\rho_p)$. The final column uses the fixed
maximally entangled reference in the displayed basis.}
\label{tab:exact-two-qubit-benchmark}
\centering
\begin{tabular}{ccccccc}
\toprule
$p$ & $F_{\phi_0}$ & $C(\rho_p)$ & Exact $E_2(\rho_p)$ &
$\mathcal R_{2,\phi_0}$ & Ratio & $\mathcal R_{2,\Phi_2}$ \\
\midrule
0.40 & 0.857275 & 0.860154 & 0.244982 & 0.116234 & 0.474 & 0.143366 \\
0.60 & 0.904850 & 0.869405 & 0.252950 & 0.167564 & 0.662 & 0.199284 \\
0.80 & 0.952425 & 0.916102 & 0.299527 & 0.242959 & 0.811 & 0.278940 \\
0.90 & 0.976213 & 0.951945 & 0.346865 & 0.300756 & 0.867 & 0.337875 \\
\bottomrule
\end{tabular}
\end{table*}

The analytic bound approaches but does not saturate the exact roof across these
four points. The maximally entangled reference is stronger throughout. This
negative result prevents target-favoring selection and gives a relative-tightness
check, not a universal witness advantage.

\subsection{Non-isotropic mixed-state bracketing with an ensemble upper bound}
\label{subsec:colored-noise-benchmark}

Away from the isotropic orbit, let
$|\Phi_d^+\rangle=d^{-1/2}\sum_{i=1}^{d}|ii\rangle$ and define
\begin{align}
\rho_{y,\kappa}^{(d)}
&=y|\Phi_d^+\rangle\langle\Phi_d^+|
+(1-y)\sum_{i=1}^{d}q_i(\kappa)|ii\rangle\langle ii|,
\label{eq:colored-noise-family}\\
q_i(\kappa)&=
\frac{e^{-\kappa(i-1)}}{\sum_{j=1}^{d}e^{-\kappa(j-1)}}.\nonumber
\end{align}
For $\kappa>0$, this state is not isotropically invariant. Use the
basis-mismatched reference
\[
\begin{aligned}
|\phi_{\eta,\theta}^{(d)}\rangle
&=\sum_{i=1}^{d}\sqrt{\nu_i(\eta)}
|i\rangle\otimes U_\theta|i\rangle,\\
\nu_i(\eta)
&=\frac{e^{-\eta(i-1)}}{\sum_{j=1}^{d}e^{-\eta(j-1)}}.
\end{aligned}
\]
where $U_\theta$ rotates the first two modes by $\theta$. The benchmark fixes
$d=4$, $r=3$, $\eta=0.2$, $\kappa=0.45$, and $\theta=0.25$ radians; these values
define the test and are not optimized against the reported gaps.

For each visibility, Table~\ref{tab:colored-noise-bracket} reports the analytic
bound $L_{\rm FS}=\mathcal R_{3,\phi}(F_\phi)$ and the established
largest-coefficient comparator
\[
L_{\nu_1}=\mathcal F_3\!\left(
\max\left\{\frac{F_\phi}{4\nu_1},\frac14\right\}
\right).
\]
We obtain the upper estimate $U_{\rm HJW}$ by diagonalizing $\rho$ and searching
rank-$\rho$ ensembles in the Hughston--Jozsa--Wootters
parametrization \cite{HughstonJozsaWootters1993}. Generic complex Givens
rotations minimize average $E_3$ without the fixed-fidelity envelope or KKT
equation. Each feasible ensemble gives an upper bound; the search does not
certify global optimality.

\begin{table*}[t]
\caption{Non-isotropic colored-noise bracket. $U_{\rm HJW}$ is a constructive
ensemble upper estimate, not the exact convex roof. The final two columns report
the absolute and relative unresolved gaps, with the latter defined as
$(U_{\rm HJW}-L_{\rm FS})/U_{\rm HJW}$.}
\label{tab:colored-noise-bracket}
\centering
\begin{tabular}{cccccccc}
\toprule
$y$ & $F_\phi$ & $L_{\rm FS}$ & $L_{\nu_1}$ & $U_{\rm HJW}$ &
$U_{\rm HJW}-L_{\rm FS}$ & Relative gap & Reconstruction residual \\
\midrule
0.60 & 0.677924 & 0.006802 & 0.000221 & 0.076871 & 0.070068 & 0.912 & $<1.9\times10^{-15}$ \\
0.70 & 0.746999 & 0.025069 & 0.004553 & 0.117839 & 0.092770 & 0.787 & $<1.9\times10^{-15}$ \\
0.80 & 0.816073 & 0.057812 & 0.014559 & 0.174833 & 0.117021 & 0.669 & $<1.9\times10^{-15}$ \\
0.90 & 0.885147 & 0.111410 & 0.030603 & 0.262282 & 0.150872 & 0.575 & $<1.9\times10^{-15}$ \\
0.95 & 0.919685 & 0.150730 & 0.041077 & 0.329008 & 0.178278 & 0.542 & $<1.9\times10^{-15}$ \\
\bottomrule
\end{tabular}
\end{table*}

Here $A_3=0.598688$, and the affine fidelity $F_\phi(y)$ activates the main
bound at $y\simeq0.48529$. The standard maximally entangled projector has
$F_{\Phi_4}(\rho_y)=y+(1-y)/4$ and crosses its Schmidt-number-three threshold
$1/2$ already at $y>1/3$. Thus the family shows no detection advantage over the
standard witness. For these parameters, the cumulative-spectrum bound exceeds
the $\nu_1$ relaxation but remains below the independent ensemble upper bound.
Establishing equality for mixed states would require an independent lower
relaxation or globally certified convex-roof optimization.

\subsection{Cross-referenced isotropic consistency check for a nonmaximal reference}

Separate projector and state. Define
\[
|\phi_{\rm c}\rangle
=\sqrt{0.4}|00\rangle+\sqrt{0.3}|11\rangle
+\sqrt{0.2}|22\rangle+\sqrt{0.1}|33\rangle
\]
and let $\Phi_4=|\Phi_4^+\rangle\langle\Phi_4^+|$. The reference spectrum is
$\boldsymbol\nu=(0.4,0.3,0.2,0.1)$, and
\[
\lambda_{\rm c}:=|\langle\Phi_4^+|\phi_{\rm c}\rangle|^2
=0.944414.
\]
Let $\rho_x^{\rm iso}$ be centered on $\Phi_4$, not
$|\phi_{\rm c}\rangle$. Its expectation of the nonmaximal projector is
\[
F_{\rm c}(x)
:=\langle\phi_{\rm c}|\rho_x^{\rm iso}|\phi_{\rm c}\rangle
=x\lambda_{\rm c}
+\frac{(1-x)(1-\lambda_{\rm c})}{15}.
\]
Proposition~\ref{prop:isotropic-exact-Er} gives its convex roofs:
\[
E_r(\rho_x^{\rm iso})=\mathcal F_r(x).
\]
Compare them with the cumulative-spectrum certificate
$\mathcal R_{r,\phi_{\rm c}}(F_{\rm c}(x))$ and the
largest-coefficient certificate
\[
\mathcal F_r\!\left(
\max\left\{\frac{F_{\rm c}(x)}{4\nu_1},\frac14\right\}
\right)
\]
from Ref.~\cite{pra16zhang}.

With $c=(1-\lambda_{\rm c})/15$, the activation points are
\[
x_r^{\rm cum}=\frac{A_r-c}{\lambda_{\rm c}-c},
\qquad
x_r^{(1)}=\frac{(r-1)\nu_1-c}{\lambda_{\rm c}-c}.
\]
If $x_r^{(1)}>1$, the largest-coefficient certificate remains zero on the
family. At $x=0.98$, $F_{\rm c}=0.925600$, giving
Table~\ref{tab:cross-isotropic}.

\begin{table*}[t]
\caption{Cross-referenced comparison at $x=0.98$. The exact column gives the
known convex roof of $\rho_{0.98}^{\rm iso}$; the last columns use the same
projector expectation. A dash marks an activation point outside
$0\leq x\leq1$.}
\label{tab:cross-isotropic}
\centering
\begin{tabular}{ccccccc}
\toprule
$r$ & $A_r$ & $x_r^{\rm cum}$ & $x_r^{(1)}$
& Exact $E_r$ & $A_r$-adapted & $\nu_1$ relaxation \\
\midrule
2 & 0.4 & 0.421272 & 0.421272 & 0.618756 & 0.328001 & 0.111607 \\
3 & 0.7 & 0.740181 & 0.846484 & 0.360000 & 0.089247 & 0.006201 \\
4 & 0.9 & 0.952786 & --       & 0.138756 & 0.002068 & 0 \\
\bottomrule
\end{tabular}
\end{table*}

\begin{figure*}[t]
\centering
\begin{tikzpicture}
\begin{groupplot}[
group style={group size=3 by 1,horizontal sep=1.0cm},
width=0.30\textwidth,
height=0.23\textwidth,
xmin=0,xmax=1,
xlabel={$x$},
grid=major,
tick label style={font=\scriptsize},
label style={font=\small},
title style={font=\small},
every axis plot/.append style={mark=none,thick}
]
\nextgroupplot[
title={$r=2$},ylabel={bound on $E_r$},ymin=0,ymax=0.68,
legend style={font=\scriptsize,at={(0.03,0.97)},anchor=north west,
draw=none,fill=none}]
\addplot[black] table[x=x,y=exact2]
{normGM-260721-v1-cross-isotropic.dat};
\addlegendentry{exact convex roof}
\addplot[blue!75!black] table[x=x,y=adapted2]
{normGM-260721-v1-cross-isotropic.dat};
\addlegendentry{$A_r$-adapted}
\addplot[red!75!black,dashed] table[x=x,y=nu12]
{normGM-260721-v1-cross-isotropic.dat};
\addlegendentry{$\nu_1$ relaxation}
\nextgroupplot[title={$r=3$},ymin=0,ymax=0.42]
\addplot[black] table[x=x,y=exact3]
{normGM-260721-v1-cross-isotropic.dat};
\addplot[blue!75!black] table[x=x,y=adapted3]
{normGM-260721-v1-cross-isotropic.dat};
\addplot[red!75!black,dashed] table[x=x,y=nu13]
{normGM-260721-v1-cross-isotropic.dat};
\nextgroupplot[title={$r=4$},ymin=0,ymax=0.18]
\addplot[black] table[x=x,y=exact4]
{normGM-260721-v1-cross-isotropic.dat};
\addplot[blue!75!black] table[x=x,y=adapted4]
{normGM-260721-v1-cross-isotropic.dat};
\addplot[red!75!black,dashed] table[x=x,y=nu14]
{normGM-260721-v1-cross-isotropic.dat};
\end{groupplot}
\end{tikzpicture}
\caption{Exact Vidal-tail convex roofs and two fixed-fidelity bounds for the
cross-referenced isotropic family. The state is centered on $\Phi_4$, while the
measured projector is $|\phi_{\rm c}\rangle\langle\phi_{\rm c}|$. The gap to the
exact curve shows that this mixed family does not saturate the
bound associated with the nonmaximally entangled reference.}
\label{fig:cross-isotropic}
\end{figure*}

At $r=2,3$, the adapted bound exceeds the $\nu_1$ comparator; at $r=4$, only it
is nonzero. All lie below the known roofs. The pointwise gain agrees with
Corollary~\ref{cor:fixed-fidelity-value-dual} but neither saturates this family
nor constructs a strictly mixed saturator for the nonmaximal reference.

To scan reference-spectrum dependence, set
\[
\boldsymbol\nu(s)=\frac14(1+3s,1+s,1-s,1-3s),
\qquad 0\leq s\leq\frac13,
\]
for fixed $\rho_{0.98}^{\rm iso}$. This path joins the maximally entangled
spectrum to a rank-three boundary. Figure~\ref{fig:reference-scan} plots the
difference from the $\nu_1$ relaxation. It vanishes at $s=0$ and is positive on
parts of the sampled interior. Nonmonotonicity reflects lost cumulative-threshold
violations. The largest gaps on the $201$-point scan are
$0.256599$, $0.173351$, and $0.087335$
for $r=2,3,4$, respectively.

\begin{figure}[t]
\centering
\begin{tikzpicture}
\begin{axis}[
width=\columnwidth,
height=0.68\columnwidth,
xmin=0,xmax=0.3334,
ymin=0,ymax=0.28,
xlabel={$s$},
ylabel={difference from $\nu_1$ bound},
grid=major,
tick label style={font=\scriptsize},
label style={font=\small},
legend style={font=\scriptsize,draw=none,fill=none,at={(0.98,0.98)},anchor=north east},
every axis plot/.append style={mark=none,thick}
]
\addplot[blue!75!black] table[x=s,y=gap2]
{normGM-260721-v1-reference-scan.dat};
\addlegendentry{$r=2$}
\addplot[red!75!black,dashed] table[x=s,y=gap3]
{normGM-260721-v1-reference-scan.dat};
\addlegendentry{$r=3$}
\addplot[black,dotted] table[x=s,y=gap4]
{normGM-260721-v1-reference-scan.dat};
\addlegendentry{$r=4$}
\end{axis}
\end{tikzpicture}
\caption{Reference-spectrum scan at fixed
$\rho_{0.98}^{\rm iso}$. The ordinate is
$\mathcal R_{r,\phi(s)}(F_{\phi(s)})$ minus the
largest-coefficient lower bound. Nonmonotonicity reflects both retained
cumulative information and loss of threshold violation for highly nonuniform
references.}
\label{fig:reference-scan}
\end{figure}

\subsection{Consistency benchmark for a mixed calibration operator}

Take the mixed calibration operator
\[
\begin{aligned}
\tau=\rho_{0.99}^{\rm iso}
&=a(I-\Phi_4)+0.99\Phi_4\\
&=aI+(0.99-a)\Phi_4,
\qquad a=\frac{0.01}{15},
\end{aligned}
\]
with the known value
$L_{k+1}(\tau)=\mathcal F_{k+1}(0.99)$ in
Proposition~\ref{prop:certified-witness}. The operator
\[
\widetilde W_{k+1}(\tau)
=\bigl[1-\mathcal F_{k+1}(0.99)\bigr]I-\tau
\]
is $k$-block positive. Although written as a nonprojector, $\tau$ is affine in
$\Phi_4$, so $cI-\tau$ rescales its projector witness. The example checks the
coefficient, not detection or measurement advantage.

Use the full-rank white-noise family
\[
\sigma_y=y|\phi_{\rm c}\rangle\langle\phi_{\rm c}|+(1-y)\frac{I}{16}.
\]
At $y=0.98$,
\[
\begin{aligned}
F_{\phi_{\rm c}}(\sigma)&=0.981250,
&F_+(\sigma)&=0.926776,\\
\operatorname{Tr}(\tau\sigma)&=0.917557.
\end{aligned}
\]
Both $P_{\phi_{\rm c}}$ and $\Phi_4$ certify
$\operatorname{SN}(\sigma)=4$; the mixed-operator coefficient is weaker.

\begin{table*}[t]
\caption{Witness comparison on $\sigma_y$. Here $y_{\rm c}$ is the strict
detection threshold, $1-y_{\rm c}$ the white-noise tolerance, and the coefficient
gap is relative to the exact support of the same observable. The index $k$
denotes the support rank, so detection certifies Schmidt number at least $k+1$.
Each calibration uses 16 product correlators and at most 13 settings after
diagonal grouping; identity offsets add no setting.}
\label{tab:mixed-nonprojector-witness}
\centering
\scriptsize
\begin{tabular}{lcccccc}
\toprule
Calibration & $k$ & Expectation threshold & $y_{\rm c}$ & $1-y_{\rm c}$
& Coefficient gap & Terms/settings \\
\midrule
$P_{\phi_{\rm c}}$ (exact projector support) & 1 & $0.400000$ & $0.360000$ & $0.640000$ & $0$ & $16/\leq13$ \\
$P_{\phi_{\rm c}}$ (exact projector support) & 2 & $0.700000$ & $0.680000$ & $0.320000$ & $0$ & $16/\leq13$ \\
$P_{\phi_{\rm c}}$ (exact projector support) & 3 & $0.900000$ & $0.893333$ & $0.106667$ & $0$ & $16/\leq13$ \\
\midrule
$\Phi_4$ (maximally entangled projector) & 1 & $0.250000$ & $0.212606$ & $0.787394$ & $0$ & $16/\leq13$ \\
$\Phi_4$ (maximally entangled projector) & 2 & $0.500000$ & $0.496080$ & $0.503920$ & $0$ & $16/\leq13$ \\
$\Phi_4$ (maximally entangled projector) & 3 & $0.750000$ & $0.779554$ & $0.220446$ & $0$ & $16/\leq13$ \\
\midrule
$\tau$ with exact $S(k)$ support & 1 & $0.248000$ & $0.212606$ & $0.787394$ & $0$ & $16/\leq13$ \\
$\tau$ with exact $S(k)$ support & 2 & $0.495333$ & $0.496080$ & $0.503920$ & $0$ & $16/\leq13$ \\
$\tau$ with exact $S(k)$ support & 3 & $0.742667$ & $0.779554$ & $0.220446$ & $0$ & $16/\leq13$ \\
\midrule
$\tau$ with certified geometric coefficient & 1 & $0.341168$ & $0.319388$ & $0.680612$ & $0.093168$ & $16/\leq13$ \\
$\tau$ with certified geometric coefficient & 2 & $0.599499$ & $0.615466$ & $0.384534$ & $0.104165$ & $16/\leq13$ \\
$\tau$ with certified geometric coefficient & 3 & $0.831168$ & $0.880988$ & $0.119012$ & $0.088502$ & $16/\leq13$ \\
\bottomrule
\end{tabular}
\end{table*}

The $\Phi_4$ and exact $S(k)$ witnesses share $y_{\rm c}$ because
$\tau=aI+(0.99-a)\Phi_4$. The last three gaps measure the cost of replacing
$\|\tau\|_{S(k)}$ by $1-E_{k+1}(\tau)$. This checks
Proposition~\ref{prop:certified-witness} but gives no projector-witness advantage.
A better non-isotropic operator remains unknown.

\subsection{Computational cost, reproducibility, and limitations}

The script \path{normGM-260814-v1-numerics.py} and its JSON/CSV outputs
reproduce the weighted three-baseline comparison, direct
pure-state tests, the exact two-qubit family, the colored-noise brackets, and the
finite-sample calculations. It records the Python, NumPy, and SciPy versions and
uses seed 20260806. Pure-state rows use ten SLSQP starts on a general complex
matrix. Each HJW estimate uses eight starts and 2500 complex Givens proposals
per start; the largest reconstruction
residual is $1.84\times10^{-15}$. Neither the $20001$-point weighted scan nor the
scalar KKT solve uses fitted rows.

Separate files contain the published-summary calculations.
The script \path{normGM-260812-v1-self-testing.py} records digitized bar
heights, normalization, screening fidelities, adapted bounds, and
largest-coefficient comparators in Table~\ref{tab:self-testing-screen}.
The script \path{normGM-260812-v2-experimental-interfaces.py} records the exact
reference spectra and fidelity inputs used in
Sec.~\ref{subsec:experimental-fidelity-interfaces} and reproduces every
stated Vidal-tail value. An accompanying script implements the
held-out and multinomial constructions in Propositions
\ref{prop:held-out-selected-reference} and
\ref{prop:raw-count-joint-certificate}. Its JSON template contains no
experimental observations and rejects incomplete or statistically incompatible
inputs.

For a pure spectrum, tails and block profiles cost $O(m)$. The relaxed weighted
envelope adds an $O(L)$ scalar root solve per evaluation.
Proposition~\ref{prop:isotropic-exact-Er} gives exact $E_r$ for the isotropic
family, while Eq.~\eqref{eq:two-qubit-exact-E2} gives the independent exact
two-qubit benchmark. In the cross-reference calculation,
exact values belong to the isotropic state; the curves based on the nonmaximally
entangled reference are lower bounds. The code performs neither direct
convex-roof optimization for such a reference nor an SDP lower relaxation for
colored noise. Published-data scripts only post-process summaries; they do not
reconstruct Bell or local correlations, reanalyze raw counts, infer confidence limits, or propagate
target-spectrum errors beyond the stated inputs. The joint-confidence script
requires integer counts, a selection protocol, and a valid failure-probability
budget.

\section{Conclusions}

The main results resolve multiscale Schmidt-spectrum information. Several nested
Vidal tails fix the masses of consecutive blocks, and the normalized
nuclear-norm coordinate then has a sharp upper boundary. Equality holds exactly
for spectra that are uniform within each block. A supplied partition refinement
cannot weaken the boundary and is strict whenever the new subblocks have unequal
mean probabilities; singleton blocks recover the full-spectrum value. To our
knowledge, this is the first exact multistep boundary that includes both the
equality structure and the refinement criterion. Its gain comes from additional
tail data, not from one projector expectation.

Nonnegative weights turn the tail vector into a scalar profile. We obtained the
global solution of the relaxed block-mass problem. On the nontrivial branch, one
scalar equation determines a unique full-support optimizer and an explicit
value. General Hellinger-constrained optimization is established, so the claim
is specifically the weighted Vidal-tail formulation and solution. The theorem
does not evaluate the exact fixed-fidelity convex roof of a general weighted
profile. For mixed-state inference we therefore retain the certified baselines
$B_{\rm FS}$ and $B_{\nu_1}$ and use their maximum. The full-spectrum bound
activates no later, but neither bound dominates wherever both are positive.

The established single-tail fidelity--resource curve remains a supporting
baseline. The additional single-scale statements are an exact-fidelity pure
completion of the zero branch and uniqueness of the ordered positive-branch
spectrum. Spectrum uniqueness neither classifies state vectors nor excludes
basis rotations within degenerate Schmidt subspaces.

The resource bounds sharpen the interpretation of Schmidt-number witnesses.
Because $A_r$ is the projector support on $\mathcal S_{r-1}$, the same measured
violation that detects Schmidt number at least $r$ yields a quantitative lower
bound on $E_r$. This calibration distinguishes states that cross the same rank
threshold by different spectral margins. It uses established support and
$S(k)$ identities and may remain conservative for a general nonprojector
operator. The higher-tail CREN relation gives a complementary transfer: for
$k\geq3$, it bounds convex-roof extended negativity by $E_k$, admits an analytic
inverse, and identifies all pure equality spectra. To our knowledge, no previous
higher-tail theorem combines these three components; CREN, its $k=2$
specialization, and general negativity bounds on Schmidt number remain prior
work.
The concurrence comparison is only a nonoptimal analytical estimate.

Leakage follows from strong monotonicity, and the statistical interface
propagates reference, fidelity, and subspace uncertainty through joint
confidence regions. These statements require the stated calibration and
measurement assumptions. A global projector expectation is one
information-theoretic input, not necessarily one local setting. The numerical
section illustrates refinement, weighted-bound crossings, isotropic calibration,
and parameter-specific mixed-state brackets; published photonic summaries are
screening inputs rather than raw-count reanalyses. Open problems include the
exact weighted fixed-fidelity convex roof, complete equality-state
classification, strictly mixed or full-rank saturators for nonmaximal
references, an independent rigorous lower relaxation for colored noise, and an
end-to-end analysis based on setting-resolved experimental counts.

\begin{acknowledgments}
Sze was supported by the Hong Kong Research Grants Council (PolyU
15300121), The Hong Kong Polytechnic University (4-ZZRN). Jin was supported by the National Natural
Science Foundation of China 12301582 and GDSTA (SKXRC2025442).  
Wang was supported by the National Natural Science Foundation of China  12371458, the Guangdong Basic and Applied Basic Research Foundation (2024A1515030023), and  the Guangdong Association for Science and Technology Young Scientific and Technological Talent Cultivation Program (Grant No. SKXRC2026457).
GPT was used to assist with language editing, LaTeX organization, and
preparation of the reproducibility package. The authors retain responsibility
for the mathematical arguments, numerical results, citations, and conclusions.
\end{acknowledgments}

\section*{Competing Interests}

The authors declare no competing interests.

\section*{Data Availability}

We generated no experimental data. The data and software supporting this study
consist of numerical outputs and the scripts used to produce them. Version
\texttt{v1.0.0} is available under the MIT License from
\url{https://github.com/Liangxiong920/normGM/releases/tag/v1.0.0} and is
archived by Zenodo under DOI
\url{https://doi.org/10.5281/zenodo.22208160}
\cite{XiongSzeNormGM2026}. The numerical benchmark requires Python 3.10 or
later with NumPy and SciPy; the release documentation states the assumptions
of the published-summary and confidence-interface scripts. Its statistical
input templates are empty and contain no observations.

Section~VI also post-processes values reported in cited publications and central
bar heights digitized from a 300-dpi rendering of Fig.~4 in
Ref.~\cite{Zhang2019SelfTesting}. The integer pixel heights are provided for
reproducibility, but the source figure reports no calibrated
pixel-to-amplitude uncertainty. These values are screening inputs, not a
confidence region. We report the resulting probabilities and bounds to three
decimal places; full-precision machine outputs serve only as computational
records. We neither obtained nor redistribute the setting-resolved Bell, block,
or tomography counts required for joint one-sided confidence regions. The
release package contains generated numerical records and empty count schemas,
not raw observations. The screening calculations yield experimental
certificates only when combined with raw records and a statistical model matched
to the acquisition protocol.

\appendix

\section{Proof of Lemma \ref{thm:exact-Fr-main}}

\begin{proof}
Set
\begin{equation*}
    p=r-1,
    \qquad
    q=m-r+1,
    \qquad
    p+q=m.
\end{equation*}
Let
\begin{equation*}
    t
    =
    E_r(|\psi\rangle)
    =
    \sum_{i=r}^{m}\mu_i
\end{equation*}
be the last-\(q\) Schmidt weight. Then
\begin{equation*}
    \sum_{i=1}^{p}\mu_i=1-t,
    \qquad
    \sum_{i=p+1}^{m}\mu_i=t.
\end{equation*}

Blockwise Cauchy--Schwarz gives
\begin{align}
    \sum_{i=1}^{p}\sqrt{\mu_i}
    &\le
    \sqrt{p(1-t)},
    \label{eq:head-CS-main}
    \\
    \sum_{i=p+1}^{m}\sqrt{\mu_i}
    &\le
    \sqrt{qt}.
    \label{eq:tail-CS-main}
\end{align}
The constraint
\begin{equation*}
    \lambda
    =
    \frac{1}{m}
    \left(
        \sum_{i=1}^{m}\sqrt{\mu_i}
    \right)^2,
\end{equation*}
then implies
\begin{equation}
    \sqrt{m\lambda}
    \le
    \sqrt{p(1-t)}
    +
    \sqrt{qt}.
    \label{eq:lambda-t-inequality-main}
\end{equation}

Ordering implies that the mean of the first block is at least that of the
second block:
\begin{equation*}
    \frac{1-t}{p}
    \ge
    \frac{t}{q}.
\end{equation*}
Equivalently,
\begin{equation*}
    0\le t\le\frac{q}{m}.
\end{equation*}

Define
\begin{equation*}
    h_r(t)
    =
    \frac{1}{m}
    \left[
        \sqrt{p(1-t)}
        +
        \sqrt{qt}
    \right]^2,
    \qquad
    0\le t\le\frac{q}{m}.
\end{equation*}
The map
\[
    t\longmapsto
    \sqrt{p(1-t)}+\sqrt{qt}
\]
increases strictly on \([0,q/m]\), with zero derivative only at the right
endpoint:
\begin{equation*}
    \frac{\mathrm{d}}{\mathrm{d}t}
    \left[
        \sqrt{p(1-t)}+\sqrt{qt}
    \right]
    =
    -\frac{\sqrt p}{2\sqrt{1-t}}
    +
    \frac{\sqrt q}{2\sqrt t},
\end{equation*}
which is nonnegative exactly for \(t\le q/m\). The function \(h_r\) therefore
increases on its domain and
\begin{equation*}
    h_r(0)=\frac{p}{m},
    \qquad
    h_r\left(\frac{q}{m}\right)=1.
\end{equation*}

The argument separates at \(\lambda=p/m\).

\medskip
\noindent
\emph{Low-fidelity regime: \(1/m\le\lambda\le p/m\).}

Here a rank-at-most-\(p\) spectrum can realize the constraint with \(t=0\).
For \(p\ge2\), consider
\begin{equation*}
    \boldsymbol{\mu}(x)
    =
    \left(
        x,
        \underbrace{
        \frac{1-x}{p-1},\ldots,
        \frac{1-x}{p-1}
        }_{p-1\ \mathrm{entries}},
        \underbrace{0,\ldots,0}_{q\ \mathrm{entries}}
    \right),
    \quad
    \frac1p\le x\le1.
\end{equation*}
This spectrum is ordered. The continuous map
\begin{equation*}
    x\longmapsto
    \frac1m
    \left[
        \sqrt{x}
        +
        \sqrt{(p-1)(1-x)}
    \right]^2
\end{equation*}
takes the values \(p/m\) at \(x=1/p\) and \(1/m\) at
\(x=1\). It therefore covers the entire interval
\([1/m,p/m]\). When \(p=1\), this interval reduces to the
single point \(\lambda=1/m\), which is realized by the product
spectrum \((1,0,\ldots,0)\).

Every \(\lambda\in[1/m,p/m]\) can therefore be realized at \(t=0\), and
\begin{equation*}
    \mathcal{F}_r(\lambda)=0.
\end{equation*}

\medskip
\noindent
\emph{High-fidelity regime: \(p/m<\lambda\le1\).}

Here \(t>0\), and Eq.~\eqref{eq:lambda-t-inequality-main} gives
\begin{equation*}
    \lambda\le h_r(t).
\end{equation*}
Monotonicity of \(h_r\) on \([0,q/m]\) gives
\begin{equation*}
    t\ge h_r^{-1}(\lambda).
\end{equation*}

To invert $h_r$, impose equality in \eqref{eq:lambda-t-inequality-main}:
\begin{equation}
    \sqrt{m\lambda}
    =
    \sqrt{p(1-t)}
    +
    \sqrt{qt}.
    \label{eq:inverse-equation-main}
\end{equation}
Because \(t\le q/m\), \(\sqrt{q(1-t)}-\sqrt{pt}\geq0\). Squaring
\eqref{eq:inverse-equation-main} and using \(p+q=m\) gives
\begin{equation}
    \sqrt{m(1-\lambda)}
    =
    \sqrt{q(1-t)}
    -
    \sqrt{pt}.
    \label{eq:complementary-inverse-main}
\end{equation}
Equations \eqref{eq:inverse-equation-main} and
\eqref{eq:complementary-inverse-main} can be written as
\begin{equation*}
    \begin{pmatrix}
        \sqrt{\lambda}\\[0.2em]
        \sqrt{1-\lambda}
    \end{pmatrix}
    =
    \frac1{\sqrt m}
    \begin{pmatrix}
        \sqrt p & \sqrt q\\
        \sqrt q & -\sqrt p
    \end{pmatrix}
    \begin{pmatrix}
        \sqrt{1-t}\\[0.2em]
        \sqrt t
    \end{pmatrix}.
\end{equation*}
The matrix is orthogonal and symmetric; applying it again gives
\begin{equation*}
    \sqrt{t}
    =
    \frac1{\sqrt m}
    \left[
        \sqrt{q\lambda}
        -
        \sqrt{p(1-\lambda)}
    \right].
\end{equation*}
Squaring gives
\begin{equation}
    h_r^{-1}(\lambda)
    =
    \frac1m
    \left[
        \sqrt{q\lambda}
        -
        \sqrt{p(1-\lambda)}
    \right]^2.
    \label{eq:inverse-solution-main}
\end{equation}
The expression in square brackets is positive precisely when
\(\lambda>p/m\).

For attainability, take \(t\) from \eqref{eq:inverse-solution-main} and set
\begin{equation*}
    \mu_1=\cdots=\mu_p=\frac{1-t}{p},
    \qquad
    \mu_{p+1}=\cdots=\mu_m=\frac{t}{q}.
\end{equation*}
Because \(t\le q/m\), this spectrum is ordered. It also saturates
\eqref{eq:head-CS-main} and \eqref{eq:tail-CS-main}, so
\begin{equation*}
    \frac1m
    \left(
        \sum_{i=1}^{m}\sqrt{\mu_i}
    \right)^2
    =
    h_r(t)
    =
    \lambda.
\end{equation*}
Its tail weight is \(t\), proving
\begin{equation*}
    \mathcal{F}_r(\lambda)
    =
    \frac1m
    \left[
        \sqrt{q\lambda}
        -
        \sqrt{p(1-\lambda)}
    \right]^2.
\end{equation*}
Substituting \(p=r-1\) and \(q=m-r+1\) gives
\eqref{eq:Fr-exact-main} and the optimizer
\eqref{eq:two-block-optimal-spectrum-main}.

Expanding the square and using \(p+q=m\) gives the equivalent form
\begin{align*}
    \frac1m
    \left[
        \sqrt{q\lambda}
        -
        \sqrt{p(1-\lambda)}
    \right]^2
    &=
    1-
    \frac1m
    \left[
        \sqrt{p\lambda}
        +
        \sqrt{q(1-\lambda)}
    \right]^2,
\end{align*}
which proves \eqref{eq:Fr-equivalent-main}.

Equality in both Cauchy--Schwarz steps requires blockwise uniformity. The
ordered nonzero-branch minimizer is therefore uniquely
\eqref{eq:two-block-optimal-spectrum-main}.
\end{proof}

\section{Proof of Theorem \ref{thm:global-weighted-multiscale}}
\begin{proof}
The proof has four steps.

\emph{Step 1: Convexity of the optimization problem.}
Expanding $H_{\widetilde{\boldsymbol d}}$ gives
\[
    H_{\widetilde{\boldsymbol d}}(\boldsymbol w)
    =
    \frac{1}{m}
    \left[
        \sum_{a=1}^{L}\widetilde d_aw_a
        +
        2\sum_{a<b}
        \sqrt{\widetilde d_a\widetilde d_b}
        \sqrt{w_aw_b}
    \right].
\]
The first term is linear, and each $\sqrt{w_aw_b}$ is concave on the
nonnegative orthant. It follows that $H_{\widetilde{\boldsymbol d}}$ is concave and
\[
    \left\{
        \boldsymbol w:
        H_{\widetilde{\boldsymbol d}}(\boldsymbol w)\geq\lambda
    \right\}
\]
is convex. The linear objective makes the optimization convex.

For $\lambda<D/m$, the vector
\[
    \widehat w_a
    =
    \frac{\widetilde d_a}{D}
\]
is strictly feasible because
\[
    H_{\widetilde{\boldsymbol d}}(\widehat{\boldsymbol w})
    =
    \frac{D}{m}
    >
    \lambda.
\]
This strictly feasible point establishes Slater's condition, so the KKT
conditions characterize global optimality.

\emph{Step 2: KKT equations.}
Introduce the Lagrangian
\begin{widetext}
\[
    \mathcal L
    (\boldsymbol w,\eta,\beta,\boldsymbol\gamma)
    =
    \sum_{a=1}^{L}c_aw_a
    +
    \eta
    \left(
        \sum_{a=1}^{L}w_a-1
    \right)
    +
    \beta
    \left[
        \lambda-H_{\widetilde{\boldsymbol d}}(\boldsymbol w)
    \right]
    -
    \sum_{a=1}^{L}\gamma_aw_a,
\]
\end{widetext}
where
\[
    \beta\geq0,
    \qquad
    \gamma_a\geq0.
\]
Besides primal and dual feasibility, the KKT conditions require
\[
    \beta
    \left[
        \lambda-H_{\widetilde{\boldsymbol d}}(\boldsymbol w)
    \right]
    =0,
    \qquad
    \gamma_aw_a=0,
\]
and stationarity.

For an interior feasible point, set
\[
    S(\boldsymbol w)
    :=
    \sum_{b=1}^{L}
    \sqrt{\widetilde d_bw_b}.
\]
Then
\[
    \frac{\partial H_{\widetilde{\boldsymbol d}}}
    {\partial w_a}
    =
    \frac{S(\boldsymbol w)}{m}
    \frac{\sqrt{\widetilde d_a}}{\sqrt{w_a}}.
\]
For an interior candidate, $\gamma_a=0$, and stationarity becomes
\[
    c_a+\eta
    =
    \beta
    \frac{S(\boldsymbol w)}{m}
    \frac{\sqrt{\widetilde d_a}}{\sqrt{w_a}}.
\]
Writing $\alpha:=\eta$, this relation implies
\[
    w_a
    \propto
    \frac{\widetilde d_a}{(c_a+\alpha)^2}.
\]
Step 3 gives a unique \(\alpha>0\) for each
\(D_0/m<\lambda<D/m\); below we verify the KKT conditions with
\(\eta=\alpha\) and \(\beta=m/Z_1(\alpha)>0\).
Normalizing this proportionality gives
\[
    w_a^\star(\alpha)
    =
    \frac{
        \widetilde d_a/(c_a+\alpha)^2
    }{
        Z_2(\alpha)
    }.
\]

For this vector,
\[
    \sum_{a=1}^{L}
    \sqrt{\widetilde d_aw_a^\star(\alpha)}
    =
    \frac{Z_1(\alpha)}
    {\sqrt{Z_2(\alpha)}}.
\]
The corresponding constraint value is
\[
    H_{\widetilde{\boldsymbol d}}
    \bigl(\boldsymbol w^\star(\alpha)\bigr)
    =
    \frac{Z_1(\alpha)^2}
    {mZ_2(\alpha)}.
\]
Choose $\alpha$ so this expression equals $\lambda$.

Stationarity holds for
\[
    \eta=\alpha,
    \qquad
    \beta=\frac{m}{Z_1(\alpha)},
    \qquad
    \gamma_a=0.
\]
At $\boldsymbol w^\star(\alpha)$,
\[
    \frac{\partial H_{\widetilde{\boldsymbol d}}}
    {\partial w_a}
    =
    \frac{Z_1(\alpha)}{m}
    \bigl(c_a+\alpha\bigr),
\]
and hence
\[
    c_a+\eta
    -
    \beta
    \frac{\partial H_{\widetilde{\boldsymbol d}}}
    {\partial w_a}
    =
    0.
\]
All KKT conditions are therefore satisfied.

\emph{Step 3: Existence and uniqueness of $\alpha$.}
Define
\[
    Z_3(\alpha)
    :=
    \sum_{a=1}^{L}
    \frac{\widetilde d_a}{(c_a+\alpha)^3}.
\]
Using
\[
    Z_1'(\alpha)=-Z_2(\alpha),
    \qquad
    Z_2'(\alpha)=-2Z_3(\alpha),
\]
we obtain
\[
    \frac{d}{d\alpha}
    \log
    \left[
        \frac{Z_1(\alpha)^2}{mZ_2(\alpha)}
    \right]
    =
    \frac{
        2\bigl[
            Z_1(\alpha)Z_3(\alpha)-Z_2(\alpha)^2
        \bigr]
    }{
        Z_1(\alpha)Z_2(\alpha)
    }.
\]
The Cauchy--Schwarz inequality gives
\[
    Z_2(\alpha)^2
    \leq
    Z_1(\alpha)Z_3(\alpha).
\]
Equality would require all costs $c_a$ to coincide, which is excluded. The map
\[
    \alpha
    \longmapsto
    \frac{Z_1(\alpha)^2}{mZ_2(\alpha)}
\]
is therefore strictly increasing.

At the endpoints,
\[
    \lim_{\alpha\rightarrow0^+}
    \frac{Z_1(\alpha)^2}{mZ_2(\alpha)}
    =
    \frac{D_0}{m},
\]
because the zero-cost blocks dominate both $Z_1$ and $Z_2$,
whereas
\[
    \lim_{\alpha\rightarrow\infty}
    \frac{Z_1(\alpha)^2}{mZ_2(\alpha)}
    =
    \frac{D}{m}.
\]
The endpoint limits consequently give, for every
\[
    \frac{D_0}{m}<\lambda<\frac{D}{m},
\]
there is a unique $\alpha>0$ satisfying the constraint.

\emph{Step 4: Explicit global-optimality certificate.}
KKT gives global optimality; the following certificate proves uniqueness.
For any $\alpha>0$ and any feasible
$\boldsymbol w$, the Cauchy--Schwarz inequality gives
\begin{align*}
    \left(
        \sum_{a=1}^{L}
        \sqrt{\widetilde d_aw_a}
    \right)^2
    &=
    \left[
        \sum_{a=1}^{L}
        \sqrt{
            \frac{\widetilde d_a}{c_a+\alpha}
        }
        \sqrt{
            (c_a+\alpha)w_a
        }
    \right]^2                                                   \\
    &\leq
    Z_1(\alpha)
    \left(
        \sum_{a=1}^{L}c_aw_a+\alpha
    \right).
\end{align*}
Feasibility implies
\[
    \left(
        \sum_{a=1}^{L}
        \sqrt{\widetilde d_aw_a}
    \right)^2
    \geq
    m\lambda.
\]
Combining these inequalities gives, for every feasible vector,
\[
    \sum_{a=1}^{L}c_aw_a
    \geq
    \frac{m\lambda}{Z_1(\alpha)}-\alpha.
\]

For the unique parameter $\alpha$ determined above,
\[
    m\lambda
    =
    \frac{Z_1(\alpha)^2}{Z_2(\alpha)},
\]
so the lower bound becomes
\[
    \sum_{a=1}^{L}c_aw_a
    \geq
    \frac{Z_1(\alpha)}{Z_2(\alpha)}-\alpha.
\]
The candidate attains it:
\begin{align*}
    \sum_{a=1}^{L}
    c_aw_a^\star(\alpha)
    &=
    \frac{
        \displaystyle
        \sum_{a=1}^{L}
        \frac{c_a\widetilde d_a}{(c_a+\alpha)^2}
    }{
        Z_2(\alpha)
    }                                                         \\
    &=
    \frac{
        Z_1(\alpha)-\alpha Z_2(\alpha)
    }{
        Z_2(\alpha)
    }                                                         \\
    &=
    \frac{Z_1(\alpha)}{Z_2(\alpha)}-\alpha.
\end{align*}
The vector $\boldsymbol w^\star(\alpha)$ is therefore globally optimal.

Equality in the Cauchy--Schwarz inequality used above requires
\[
    w_a
    \propto
    \frac{\widetilde d_a}{(c_a+\alpha)^2}
\]
for every $a$. Since $\widetilde d_a,\alpha>0$, every component is positive.
Thus no simplex-boundary vector can attain the minimum on the nontrivial branch;
vectors supported only on zero-cost blocks are also infeasible because
\(\lambda>D_0/m\). Uniqueness of $\alpha$ then proves a unique full-support
minimizer under the hypotheses of Theorem~\ref{thm:global-weighted-multiscale}.
\end{proof}

\section{Proof of Theorem \ref{thm:negativity-geometric}}

\begin{proof}
For a pure state, Eq.~\eqref{eq:pure-negativity-bound} gives
\[
\mathcal N_{\rm PT}(|\psi\rangle\langle\psi|)
\leq g_{k,m}(E_k(|\psi\rangle))-1.
\]
The function $g_{k,m}-1$ is continuous, nondecreasing, and concave on
$[0,q/m]$. The first part of Lemma~\ref{lem:convex-roof-transfer} therefore
implies
\[
\mathcal N_c(\rho)\leq g_{k,m}(E_k(\rho))-1.
\]
For every pure-state decomposition of $\rho$, convexity of the trace norm gives
\[
\mathcal N_{\rm PT}(\rho)
\leq\sum_jp_j\mathcal N_{\rm PT}(|\psi_j\rangle\langle\psi_j|).
\]
Taking the infimum over decompositions yields
$\mathcal N_{\rm PT}(\rho)\leq\mathcal N_c(\rho)$ and completes the proof.
\end{proof}

\clearpage

\bibliography{references}

\end{document}